\documentclass[aps,pra,10pt,twocolumn,superscriptaddress,floatfix,nofootinbib,showpacs,longbibliography]{revtex4-1}
\usepackage{makecell}
\usepackage{array}
\newcolumntype{C}[1]{>{\centering\arraybackslash}m{#1}}
\usepackage{booktabs,adjustbox}
\usepackage{float}
\usepackage{amsmath}
\usepackage[normalem]{ulem}
\usepackage[utf8]{inputenc}  
\usepackage[T1]{fontenc}     
\usepackage[table]{xcolor}
\usepackage{color,soul}
\usepackage[british]{babel}  
\usepackage[sc,osf]{mathpazo}\linespread{1.05}  
\usepackage[colorlinks=true, citecolor=blue, urlcolor=blue]{hyperref}  
\usepackage{graphicx} 
\usepackage[babel]{microtype}  
\usepackage{amsmath,amssymb,amsthm,bm,amsfonts,mathrsfs} 

\usepackage{xspace}  
\usepackage{pgf,tikz}
\usepackage{xcolor}
\usepackage{multirow}
\usepackage{array}
\usepackage{bigstrut}
\usepackage{braket}
\usepackage{color}
\usepackage{natbib}
\usepackage{multirow}
\usepackage{mathtools}
\usepackage{float}
\usepackage{blkarray}
\usepackage[caption = false]{subfig}
\usepackage{xcolor,colortbl}
\usepackage{color}
\usepackage{rotating}
\usepackage{tikz}
\usepackage{tabularx}

\newcommand{\Tr}{\operatorname{Tr}}

\newcommand{\be}{\begin{equation}}
\newcommand{\ee}{\end{equation}}
\newcommand{\ba}{\begin{eqnarray}}
\newcommand{\ea}{\end{eqnarray}}
\newcommand{\ketbra}[2]{|#1\rangle \langle #2|}

\newtheorem{theorem}{Theorem}
\newtheorem{corollary}{Corollary}

\newtheorem{lemma}{Lemma}



\newcommand{\R}{\mathbb{R}}
\newcommand{\C}{\mathbb{C}}


\newcommand{\spc}[1]{\mathcal{#1}}


\newcommand{\Span}{{\mathsf{Span}}}

\def\>{\rangle}
\def\<{\langle}


\newcommand{\map}[1]{\mathcal{#1}}



\newcommand{\Proof}{{\bf Proof. \,}}

\usepackage{centernot}
\usepackage{subfig}

\begin{document}

\title{Quantum networks boosted by entanglement with a control system} 
\author{Tamal Guha}
\affiliation{QICI Quantum Information and Computation Initiative, Department of Computer Science, The University of Hong Kong, Pokfulam Road, Hong Kong}
\author{Saptarshi Roy}
\affiliation{QICI Quantum Information and Computation Initiative, Department of Computer Science, The University of Hong Kong, Pokfulam Road, Hong Kong}
\author{Giulio Chiribella}
\affiliation{QICI Quantum Information and Computation Initiative, Department of Computer Science, The University of Hong Kong, Pokfulam Road, Hong Kong}
\affiliation{Department of Computer Science, University of Oxford, Parks Road, Oxford OX1 3QD,  United Kingdom}
\affiliation{Perimeter Institute for Theoretical Physics, Caroline Street, Waterloo, Ontario N2L 2Y5, Canada}
\begin{abstract}
 Networks of quantum devices with coherent control over their configuration offer promising advantages in quantum information processing.   So far, the investigation of these advantages  assumed that the control system was  initially uncorrelated with the  data processed by the network.
    Here, we explore  the power of  quantum correlations  between data and control,   showing  two communication  tasks that can  be accomplished with information-erasing channels  if and only if the sender shares prior entanglement with a third party (the ``controller'') controlling the network configuration.     The first task is to  transmit classical messages without leaking information to the controller. 
     The second task is to establish   bipartite entanglement with a  receiver, or, more generally, to establish multipartite entanglement with a 
        number of spatially separated receivers.    
\end{abstract}
\maketitle
\textit{Introduction.}-A remarkable feature of quantum particles is the ability to  undergo  multiple  evolutions simultaneously, in a coherent quantum superposition \cite{aharonov1990superpositions, oi2003interference, aaberg2004operations, abbott2020communication, chiribella2019quantum, dong2019controlled, vanrietvelde2021universal}.   In a seminal work \cite{gisin2005error}, Gisin, Linden, Massar, and Popescu showed that the interference of multiple quantum evolutions could be used to filter out noise in quantum communication,   with potential benefits for quantum  key distribution and other quantum communication tasks.  Recently, the communication capacities arising from the  interference of multiple noisy channels have been studied in \cite{abbott2020communication, chiribella2019quantum, loizeau2020channel, kristjansson2021witnessing}.    Experimental demonstrations of the benefits of the interference of quantum evolutions have been provided in \cite{lamoureux2005experimental,rubino2021experimental}.

The superposition of quantum evolutions is generated by introducing a  control system, which determines the evolution undergone by a  target system.  Quantum networks equipped with control systems  provide a new  paradigm for quantum  information processing, and at the same time are an interesting toy model for investigating new  causal structures that could potentially arise  in a quantum theory of gravity   \cite{Hardy05, Hardy09, Hardy09a}. A concrete example here    is the quantum SWITCH  \cite{Chiribella09a, Chiribella13}, 
  a higher-order operation that connects two variable channels in an order determined by the state of a quantum system, giving rise  to a feature called causal non-separability  \cite{Oreshkov12,araujo2015witnessing}.    Over the past decade, the quantum SWITCH stimulated  several experimental investigations \cite{procopio2015experimental, Rubino17, Goswami18, wei2019experimental, Guo20} (see also \cite{Goswami20} for a review) and was found to offer information processing advantages in many tasks, including classification of quantum channels \cite{Chiribella12, araujo2014computational},   communication complexity \cite{guerin2016exponential},   quantum communication \cite{Ebler18,salek2018quantum,Chiribella21NJP, loizeau2020channel, caleffi2020quantum, procopio2020sending, Bhattacharya21, Sazim21, Chiribella21}, quantum metrology \cite{Zhao20,  chapeau2021}, and quantum thermodynamics \cite{Felce20, Guha20, Simonov21}.

Previous studies   on coherently controlled quantum networks explored  the benefits of quantum  superpositions of states of the control corresponding to definite  configurations.    In all these studies, the control was assumed to be  initially uncorrelated with the target.   It is possible, however, to consider a more general situation, in which the  control and the target share  prior correlations.  In this situation,  the data processed by the network becomes  correlated with its evolution,   potentially giving rise to new phenomena  that could not be observed   in the traditional setting. 

In this paper,  we explore the power  of quantum correlations between control and target,  showing that  they enable two communication tasks  that are impossible with an uncorrelated control, or even with a classically correlated one.    The tasks involve the assistance of a third party (the 	``controller'')  who has access to the control system and  shares initial quantum correlations with the sender. The role of the controller is to assist the receiver, by providing classical information gathered from the control.  For example, the controller could be a quantum communication company  responsible for  the connection between the sender and receiver. More generally, the controller could be any party who has access to the outcomes of measurements performed on the control.

Both tasks involve communication through noisy channels that  completely erase information when  used in a definite configuration. The first task is the communication of a classical message without leaking information to the controller.   We show that this task can  be perfectly achieved with  information-erasing channels if only if the sender and the controller initially share a maximally entangled state.    
The second task is  to establish  bipartite entanglement between a sender and a receiver,  or, more generally, to establish multipartite entanglement between the sender and a number of spatially separated receivers. 
In this case,  we show that  perfect entanglement can be established via information-erasing channels  if and only if the target and the control are initially in a maximally entangled state.

\medskip

\textit{Quantum communication with entangled control.-}
 We  start by reviewing the mathematical description of coherent control over the configurations of quantum devices, focussing in particular on coherent control over the choice of quantum devices  and over their order.  For simplicity,  we  discuss  the case of $N=2$ channels, leaving the general case to the Supplemental Material.  
 
 The action of a quantum device  is mathematically described by a quantum channel, that is, a completely positive, trace-preserving linear map, acting on the density matrices of a given quantum system \cite{heinosaari2011mathematical}.   Quantum channels can be conveniently expressed in the Kraus representation $\map E  (\rho)   =   \sum_i  E_i  \rho  E_i^\dag$, where the Kraus operators $\{E_i\}$ satisfy the normalization condition $\sum_i  E_i^\dag E_i  =  I$, $I$ denoting the identity matrix on the system's Hilbert space.     Control over the order of two devices is described by the quantum SWITCH \cite{Chiribella09a, Chiribella13}, an operation that combines  two channels $\map E$ and $\map F$  acting on a target system,  generating a new channel $\mathcal{S}(\map E,\map F)$, acting jointly on the target and a control system. In the simplest version of the quantum SWITCH, the channel $\mathcal{S}(\map E,\map F)$ executes  the two channels $\map E$ and $\map F$ either in the order $\map E\circ \map F$  or in the order  $\map F\circ \map E$, depending on whether the control qubit is initialized in the state $|0\>$  or  $|1\>$, respectively. Explicitly, the control-order channel  $\mathcal{S}(\map E,\map F)$ is    specified by the relation
\begin{equation}\label{eq1}
\mathcal{S}(\map E,\map F)(\rho)=\sum_{i,j} S_{ij} \,\rho \, S_{ij}^{\dagger}
\end{equation}
with 
\begin{equation}\label{eq2}
    S_{ij}=F_{i} E_{j}\otimes\ketbra{0}{0}+E_{j} F_{i}\otimes\ketbra{1}{1}
\end{equation}
with $\{E_{j}\}$ and $\{F_{i}\}$ being the Kraus operators corresponding to the channels $\map E$ and $\map F$ respectively.    Note that the control-order channel $\mathcal{S}(\map E,\map F)$ depends only on the input channels $\map E$ and $\map F$, and not on the specific Kraus decompositions used in Eq. (\ref{eq2}).

Control over the choice of a noisy channel can be described in a similar way.  A quantum channel  $\map T$ that executes either  channel $\map E$ or  channel $\map F$  depending of the state of a control system has the form  \cite{aaberg2004operations, abbott2020communication, chiribella2019quantum} 
\begin{eqnarray}\label{eq3}
\map T (\rho) = \sum_{ij} T_{ij} \rho T_{ij}^\dagger
\end{eqnarray}
with
\begin{eqnarray}\label{eq4}
T_{ij} = E_i \, \beta_j \otimes |0\rangle\langle 0| + F_j\,  \alpha_i \otimes |1\rangle\langle1|,
\end{eqnarray}
where $\alpha_i$ and $\beta_j$ are complex amplitudes satisfying the normalization conditions $\sum_i  |\alpha_i|^2  =  \sum_j |\beta_j|^2   =1$.  

An important difference between  control over the  choice of two devices and  control over their order is that, while the control-order channel $\map S  (\map E,\map F)$ depends only on the channels $\map E$ and $\map F$, the control-choice channel $\map T$  depends  also on the amplitudes $\alpha_i$ and $\beta_j$ \cite{oi2003interference, aaberg2004operations, abbott2020communication, chiribella2019quantum, dong2019controlled, vanrietvelde2021universal}.  The physical reason for this dependence is that controlling the channel choice  means choosing  which channel is {\em not} used, or equivalently, which channel is  fed  a trivial input, such as {\em e.g.} the vacuum state \cite{chiribella2019quantum}.  Modelling the trivial input as a state $|{\rm triv}\>$ orthogonal to all states of the target system,  the choice-controlled channel $\map T$ can be regarded as a function of two {\em extended channels} $\widetilde {\map E}$ and $\widetilde {\map F}$ with Kraus operators $\widetilde E_i  =   E_i  +  \alpha_i  \,  |{\rm triv}\>\<{\rm triv}|$ and $\widetilde F_j  =   F_j  +  \beta_j  \,  |{\rm triv}\>\<{\rm triv}|$, respectively \cite{chiribella2019quantum, vanrietvelde2021universal}.   
For this reason, in the following we will use the notation $\map T  (\widetilde {\map E}  ,  \widetilde {\map F})$.   

\begin{figure}
    \centering
    \includegraphics[width = 0.9\linewidth]{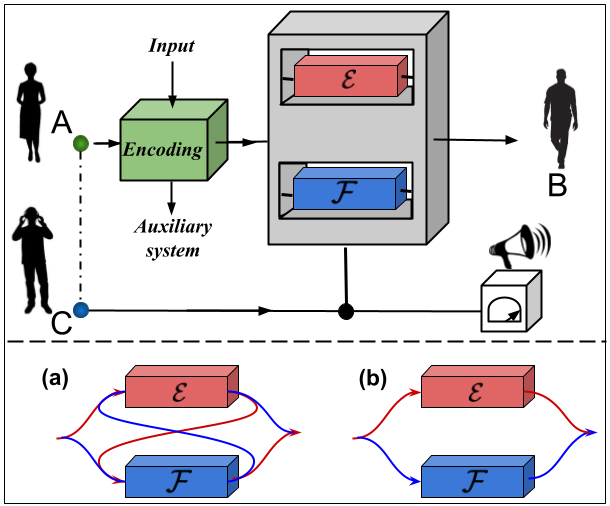}
    \caption{Quantum communication with the assistance of correlations with a control system.  Sender A communicates to receiver B through two noisy channels with the assistance of a third party,  C, who controls the configuration of the two channels.    We focus on the case where the configuration is either the order of the noisy channels (a), or the choice of which channel is used (b).  The controller and the sender initially share an entangled state (dotted line on the top left).  Then, the sender encodes some input data, by performing local operations  on her part of the entangled state. The output of these operations  is a signal that is sent through the network, and possibly some auxiliary systems that the sender will keep in her laboratory.  After transmission, the controller assists the receiver by providing  him classical information extracted from the control system.   
}    \label{fig1}
\end{figure}

   In all previous works  it has been assumed that the target and  control are initially uncorrelated, so that the effective evolution of the target can be interpreted as a superposition of evolutions corresponding to different  configurations of the network. 
   Here we will go beyond this assumption, exploring  the scenario where the control and the target are initially in a joint quantum state, with the net result that the evolution of the target is correlated with its state.  We  will consider a communication scenario where the target system travels from a sender (Alice) to a receiver (Bob), while the control system is held by a third party (Charlie), called the controller.     The role of the controller is to assist  Alice and Bob in their communication task.  In our protocols, Charlie's assistance will be limited to one round of classical communication to Bob.   
   
   The initial entanglement between target and control can then be  regarded as an offline resource, initially shared by Charlie and Alice, independently of the messages  she will send to Bob.      To encode information,  Alice will perform local operations on her side of the entangled state, which will later sent through the channel, as illustrated in Fig. \ref{fig1}.

To highlight the power of quantum correlations, we consider the extreme case where the channels $\map E$ and $\map F$ completely erase information, producing a fixed  pure state for every possible initial state of the target system.   These channels play a fundamental role in quantum thermodynamics, where they serve as the basis for extending   Landauer's principle to the quantum domain~\cite{rio2011thermodynamic} and  for evaluating the work cost of quantum processors~\cite{faist2015minimal}.   We will refer to these channels as {\em information-erasing channels}.  Taken in isolation, information-erasing channels have no ability to transmit any type of information, be it classical or quantum.  In the following we will focus on the case where $\map E$ and $\map F$ are  {\em orthogonal} information-erasing channels, that is, information-erasing channels that  output  orthogonal pure states, hereafter denoted as  $|0\>$ and $|1\>$, respectively.  In the case of control over the choice we consider the extended channels  $\widetilde{\map E}$ and $\widetilde  {\map F}$ with  Kraus operators $\widetilde E_0  =   |0\>\<0|  +   \,  |{\rm triv}\>\<{\rm triv}|$,  $\widetilde E_1  =   |0\>\<1|$,  $\widetilde F_0  =   |1\>\<0|$, and 
  $\widetilde F_1  =   |1\>\<1|  +   \,  |{\rm triv}\>\<{\rm triv}|$, respectively.   The benefit of this setting  is that the control-order  and  control-choice channels coincide, namely 
   \begin{align}\label{K}
  \map S  (\map E, \map F)  =  \map T    (\widetilde{\map E},  \widetilde{\map F})    =: \map K \, ,
  \end{align}
  as one can readily verify from the definitions.   
This observation allows us to treat the order and the choice in a unified way, without specifying which type of control we are considering.     It is worth stressing, however, that the identification in Eq. (\ref{K}) holds only for specific extensions $\widetilde {\map E}$ and $\widetilde{\map F}$, and that these  extensions are {\em not} information-erasing channels on the larger space spanned by the three states $|0\>,  |1\>,$ and $|{\rm triv}\>$.


\textit{Private classical communication}- A sender, Alice, wants to communicate a bit of classical information  to a distant receiver Bob.  She wants the communication to be secure, in the sense that no other party except Bob  can access the message.  Unfortunately, Alice and Bob do not share a secret key,  and therefore  protocols like the one-time pad are not viable. Still,  Alice has the assistance of a third party, Charlie, who controls the configuration of two communication channels,  as in Figure     \ref{fig1}.
Charlie can share entangled states with Alice, and can assist the communication by sending classical information to Bob.  However, Charlie should not be able to extract any information about Alice's message, otherwise the privacy requirement would be compromised.   

We now show that the desired task can be achieved perfectly using coherently-controlled information-erasing channels.  The crucial observation is that  the channel $\map K$ in Eq. (\ref{K})  has a decoherence-free  subspace \cite{Lidar98, Lidar2000, Lidar2003, Lidar2014} spanned by the states $|0\> \otimes |0\>$ and $|1\> \otimes |1\>$ (see the Supplemental Material for a detailed analysis).  This subspace contains Bell states $|\Phi^{\pm}\>  =  (  |0\> \otimes |0\> \pm |1\> \otimes |1\>)/\sqrt 2$, which can be generated from  $|\Phi^+\>$ by performing local unitary operations.   Hence, Alice can encode a bit $x\in \{+,-\}$ in one of the states $|\Phi^{\pm}\>$  and send it through the channel $\map  K $ without encountering any noise.  On the other hand, Charlie has no access to the value of the bit, because the states $|\Phi^{\pm}\>$ cannot be distinguished  using only measurements on the control system.  In the end,  Charlie  measures the control  on the  Fourier basis $\{|+\>  , |-\>\}$, with $|\pm\>  :  =  ( |0\> \pm  |1\> )/\sqrt 2$, and communicates the outcome to Bob, who also measures on the Fourier basis.     If Charlie's outcome is $+$,    then Bob's outcome is Alice's original bit. If Charlie's outcome is $-$, then  Bob only needs to flip the value of his bit, thus obtaining the value of Alice's bit.

In the Supplemental Material we show that   maximal  entanglement between target and control is strictly necessary: for  information-erasing channels $\map E$ and $\map F$,  Alice can perfectly communicate a bit in a way that is oblivious to Charlie  {\em only if} Alice and Charlie initially share a maximally entangled two-qubit state.   In addition, we provide an extension of the above results from qubits to general $d$-dimensional systems:   
 \begin{theorem}\label{theo1}
 A classical  $d$it can be communicated, with no leakage to the controller,  through $d$ orthogonal information-erasing channels in $d$ coherently controlled configurations if and only if the control and target are initially in a $d$-dimensional maximally entangled state.
\end{theorem}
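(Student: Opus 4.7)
\emph{Sufficiency.} My plan is to mirror the qubit construction. The coherently controlled channel $\map K$ built from $d$ orthogonal information-erasing channels in $d$ configurations admits a decoherence-free subspace $\mathsf{DFS}=\mathrm{span}\{|k\>_T\otimes|k\>_C\}_{k=0}^{d-1}$, which extends the $d=2$ case by a direct Kraus-operator computation. From the maximally entangled seed $|\Phi\>=d^{-1/2}\sum_k|k\>_T|k\>_C$, Alice encodes the message $m\in\{0,\dots,d-1\}$ by the diagonal unitary $U_m=\sum_k\omega^{mk}\,\ketbra{k}{k}_T$ with $\omega=e^{2\pi i/d}$. The states $(U_m\otimes I_C)|\Phi\>=d^{-1/2}\sum_k\omega^{mk}|k\>_T|k\>_C$ all lie in $\mathsf{DFS}$, so they traverse $\map K$ untouched, and each has control-marginal $I_C/d$, so Charlie learns nothing. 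Decoding generalizes the qubit LOCC step: Charlie measures in the Fourier basis $|f_l\>_C=d^{-1/2}\sum_k\omega^{-lk}|k\>_C$ and transmits $l$ to Bob, who measures in the Fourier basis on his side and recovers $m$ as the difference of the two outcomes modulo $d$.

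\emph{Necessity.} For the ``only if'' direction I would chain purity and orthogonality constraints that $\map K$ imposes on the initial state. The first step is to establish a normal form for $\map K$: since its Kraus operators are block-diagonal in the control basis with the $k$-th block implementing the erasing channel $\map E_k$, one finds that $\map K(\rho)$ is supported on $\mathsf{DFS}$, with diagonal entries $\<kk|\map K(\rho)|kk\>=p_k:=\sum_j\<jk|\rho|jk\>$ and off-diagonal entries $\<kk|\map K(\rho)|k'k'\>=\<kk|\rho|k'k'\>$ for $k\ne k'$. Since the $d$ output states $\sigma_{BC}^{(m)}$ are pairwise orthogonal and live inside the $d$-dimensional $\mathsf{DFS}$, each must be pure, of the form $|\psi_m\>=\sum_k c_k^{(m)}|k\>_B|k\>_C$. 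The no-leakage requirement that $\Tr_B|\psi_m\>\<\psi_m|$ be independent of $m$ forces $|c_k^{(m)}|^2=q_k$ to depend only on $k$, and orthogonality of the $|\psi_m\>$'s makes the matrix $C_{mk}:=c_k^{(m)}$ unitary; combining row and column normalizations of $C$ yields $q_k=1/d$, so Charlie's marginal is $I_C/d$.

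The second step is a purity lemma: $\map K(\rho)$ is pure only if $\rho$ is pure and supported on $\mathsf{DFS}$. The rank-one condition $\sigma_{kk}\sigma_{k'k'}=|\sigma_{kk'}|^2$ on the DFS image of $\rho$, combined with positivity of $\rho$ and the explicit form of $p_k$ above, forces $\<jk|\rho|jk\>=0$ for $j\ne k$, hence $\rho$ is supported on $\mathsf{DFS}$; since $\map K$ acts as the identity there, $\rho$ itself is pure. Applied to $\rho'_m=(U_m\otimes I_C)\rho_{TC}(U_m^\dag\otimes I_C)$, this shows that each $\rho'_m$ is pure and in $\mathsf{DFS}$, and since $\rho_{TC}$ is $m$-independent we get $\rho_{TC}=|\chi\>\<\chi|$. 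Writing $|\chi\>=\sum_k|\chi_k\>_T|k\>_C$, the condition $U_m|\chi_k\>\propto|k\>_T$ for every $k$ and $m$ forces the $|\chi_k\>$'s to be mutually orthogonal, and the value $\|\chi_k\|^2=p_k=1/d$ yields the Schmidt form $|\chi\>=d^{-1/2}\sum_k|\tilde\chi_k\>_T|k\>_C$ of a maximally entangled state.

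\emph{Main obstacle.} The step I expect to require the most care is the purity lemma, where the rank-one condition on $\map K(\rho)$ has to be translated back into a support condition on $\rho$ using positivity of $\rho$. A secondary subtlety is that Alice could a priori use mixed encodings or auxiliary systems; this is handled by Stinespring-dilating the encoding to an isometry on a purification of $\rho_{TC}$, after which purity of each $\sigma_{BC}^{(m)}$ forces Alice's auxiliary to decouple, so that the conclusion on $\rho_{TC}$ is unaffected.
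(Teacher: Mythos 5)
Your sufficiency argument and the skeleton of your necessity argument match the paper's: the decoherence-free subspace $\Span\{|k\>\otimes|k\>\}$, the encoding by diagonal unitaries into the $d$ maximally entangled states, the Fourier-basis LOCC decoding, and the chain ``$d$ perfectly distinguishable outputs in a $d$-dimensional subspace $\Rightarrow$ pure, orthogonal, maximally entangled outputs $\Rightarrow$ maximally entangled inputs'' are all exactly what the paper does. Your normal form $\<kk|\map K(\rho)|k'k'\>=\<kk|\rho|k'k'\>$, $\<kk|\map K(\rho)|kk\>=\sum_j\<jk|\rho|jk\>$ is the paper's Eq.~(\ref{orderclear}) in disguise, and your purity lemma (rank-one output plus positivity of $\rho$ forces support on the DFS) is a correct repackaging of the paper's step $|\Phi_x\>\<\Phi_x|=P_0\rho_{x,\rm AC}P_0\Rightarrow\rho_{x,\rm AC}=|\Phi_x\>\<\Phi_x|$. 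Your derivation of $q_k=1/d$ from unitarity of $C_{mk}$ also mirrors the paper's completeness argument $\sum_x|\Phi_x\>\<\Phi_x|=\sum_j|jj\>\<jj|$.

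The genuine gap is that your necessity proof only covers the single channel $\map K$ of Eq.~(\ref{K}), i.e.\ controlled order, or controlled choice with the one canonical vacuum extension. The paper's theorem is read as quantifying over \emph{all} controlled-choice channels $\map T(\widetilde{\map E}_0,\dots,\widetilde{\map E}_{d-1})$, where the extensions carry arbitrary vacuum amplitudes; for those, the leading Kraus operator is $T_{0,\dots,0}=\sum_j|j\>\<v_j|\otimes|j\>\<j|$ with a priori non-orthogonal, possibly sub-normalized vectors $|v_j\>$, and your normal form $\<kk|\map K(\rho)|k'k'\>=\<kk|\rho|k'k'\>$ simply fails. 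Most of the paper's supplementary proof (Lemmas \ref{lem:Tdecomposition}--\ref{lem:maxent}) is devoted to this case: one must show that $\||v_j\>\|=1$ for all $j$ and that the $|v_j\>$ end up mutually orthogonal, which the paper extracts from the equal-marginals condition on Charlie's side via a non-trivial argument about the phase matrix $M=(e^{i\theta_{x,j}})$ having full rank. Your proposal has no counterpart to this and would not go through for those channels. A secondary, fixable weakness: your reconstruction of $|\chi\>$ assumes unitary encodings, and the Stinespring remark does not quite close the general case --- after your purity lemma gives that each $\rho_{x,\rm AC}$ is pure and maximally entangled, the clean way to conclude (and the paper's way) is that these states are obtained from $\rho^*_{\rm AC}$ by local CPTP maps on Alice's side, so entanglement monotonicity together with $\dim\mathcal H_{\rm C}=d$ forces $\rho^*_{\rm AC}$ itself to be maximally entangled.
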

   
\begin{figure}
    \centering
    \includegraphics[width = 0.8\linewidth]{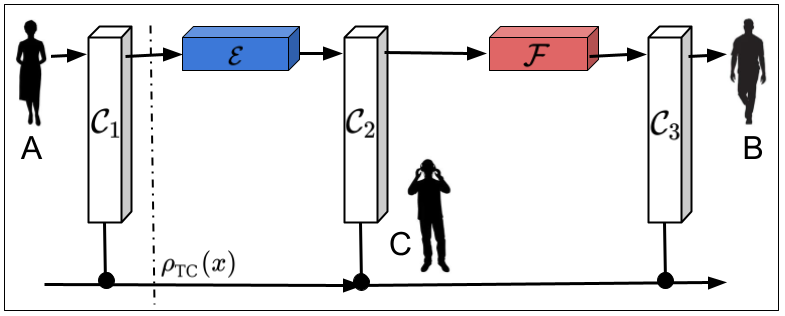}
    \caption{General protocol with  two noisy channels $\map E$ and $\map F$ in a fixed configuration and controlled operations before and after $\map E$ and $\map F$.   Protocols of this type can perfectly transmit classical messages from a sender to a receiver, but necessary leak information to the controller.}
    \label{fig2}
\end{figure}   
   
   Theorem \ref{theo1}  highlights the advantage of quantum correlations between the target and control systems.
   Moreover, it also highlights a fundamental difference between protocols using control over the channels configurations, and protocols using the noisy channels $\map E $ and $\map F$ in a fixed configuration, while allowing control over  operations performed before and after each noisy channel  \cite{Gurin19}, as  illustrated in Figure \ref{fig2}.     
    These protocols  allow Alice to send classical information to Bob through the control, in a way that is completely independent of the  noisy channels $\map E$ and $\map F$  \cite{Kristjnsson20}.  However, this kind of protocols generally leak information to Charlie, violating the privacy requirement of our communication task.      When $\map E$ and $\map F$ are information-erasing channels, the leakage of information to Charlie is strictly necessary, as we  prove  in the Supplemental Material.

\textit{Establishing entanglement.}- 
Our second task is to establish entanglement between the sender and  a receiver, or more generally, a number of spatially separated receivers.  Let us consider first the case of a single receiver, Bob.   
 Initially, Alice and Charlie share a maximally entangled state. Then, Alice converts it into the  GHZ state $(|0\>\otimes |0\>\otimes |0\>  + |1\>\otimes|1\> \otimes |1\>)/\sqrt 2$, by  applying a $\tt CNOT$ gate on the target qubit and on an additional reference qubit, present in her laboratory and initially in the state $|0\>$.  Alice keeps the reference qubit with her, and sends the target qubit through the controlled channel $\map K$.     The presence of the decoherence-free subspace $\Span\{  |0\>\otimes |0\>,  |1\>\otimes |1\> \}$ guarantees that the GHZ state is preserved by the channel.    At this point, Charlie measures the control qubit on the Fourier basis $\{|+\>,  |-\>\}$  and announces the result to Bob, who does nothing if the result is +, and performs a Pauli $Z$ correction if the outcome is -.  The net result of the protocol is that Alice and Bob share the maximally entangled state $|\Phi^+\>$,   which can later be used for quantum communication.

The above protocol can be generalized to dimension $d$, using $d$ orthogonal information-erasing channels and quantum control over $d$ orders.   Also in this case, we show that maximal entanglement between target and control is strictly necessary:

\begin{theorem}\label{theo3}
Coherent control on the configuration  of $d$ orthogonal information-erasing channels 
enables perfect establishment  of  a maximally entangled two-qudit state  if and only if the sender and controller initially share a $d$-dimensional maximally entangled state. 
\end{theorem}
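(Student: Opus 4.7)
For the \emph{if} direction, the plan is to lift the qubit protocol described just before the theorem to dimension $d$: Alice and Charlie share $|\Phi^+\>_{AC}=d^{-1/2}\sum_{k=0}^{d-1}|k\>_A|k\>_C$, and Alice applies the generalised ${\tt CNOT}$ isometry $|k\>_A\otimes|0\>_T\mapsto|k\>_A|k\>_T$ between her half and a fresh target qudit initialised in $|0\>$, producing the GHZ-type state $d^{-1/2}\sum_k|k\>_A|k\>_T|k\>_C$. Since the two-body factor $|k\>_T|k\>_C$ lies in the $d$-dimensional decoherence-free subspace $\Span\{|k\>_T|k\>_C\}_{k=0}^{d-1}$ of $\map K$ (verified in the Supplemental Material), the target passes through the channel unchanged; Charlie then measures $C$ in the Fourier basis $\{|\tilde\ell\>=d^{-1/2}\sum_k\omega^{k\ell}|k\>\}_{\ell=0}^{d-1}$ with $\omega=e^{2\pi i/d}$ and sends $\ell$ to Bob, who applies the generalised Pauli correction $Z^{-\ell}$ to recover $|\Phi^+\>_{AB}$.

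For the \emph{only if} direction, I would denote by $\tau_{A'TC}$ the tripartite state of Alice's retained register $A'$, the target $T$, and the control $C$ at the moment the target enters $\map K$. Perfect transmission through $\map K$ confines $\tau_{A'TC}$ to the subspace $\mathcal H_{A'}\otimes\Span\{|k\>_T|k\>_C\}$, so it has the block form $\tau_{A'TC}=\sum_{k,k'}\tau^{A'}_{kk'}\otimes|k\>\<k'|_T\otimes|k\>\<k'|_C$. Since the composition of Charlie's measurement, his classical message to Bob, and Bob's correction is LOCC across the $A'$-vs-$TB$ cut and must deterministically output the pure maximally entangled state $|\Phi^+\>_{AB}$, each conditional state $\rho_{A'T}^{(\ell)}$ produced by an outcome $\ell$ of Charlie's measurement must itself be pure and maximally entangled. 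Matching this requirement, for every outcome, to the block form above is the main technical step; I expect it to force equal outcome probabilities $p_\ell=1/d$ and $\tau^{A'}_{kk'}=d^{-1}|\phi_k\>\<\phi_{k'}|$ for a single orthonormal set $\{|\phi_k\>\}_{A'}$, so that $\tau_{A'TC}=|\Phi\>\<\Phi|$ with $|\Phi\>=d^{-1/2}\sum_k|\phi_k\>_{A'}|k\>_T|k\>_C$, a pure state maximally entangled across the $A'T$ vs.\ $C$ cut.

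Finally, the plan is to pull maximal entanglement of $\tau_{A'TC}$ back to $\sigma_{AC}$. Alice's operations preserve the $C$-marginal, so $\sigma_C=I/d$ is immediate; to upgrade this to purity of $\sigma_{AC}$, I would take any purification $|\sigma\>_{ACE}$ with $E$ inaccessible, write its Schmidt form $d^{-1/2}\sum_k|a_k\>_{AE}|k\>_C$ across the $C$ vs.\ $AE$ cut, and dilate Alice's preparation to an isometry $V_A:\mathcal H_A\to\mathcal H_{A'TF}$. Purity of $\tau_{A'TC}$ then forces the global pure state $d^{-1/2}\sum_k(V_A\otimes I_E)|a_k\>_{AE}|k\>_C$ to factor as $|\Phi\>_{A'TC}\otimes|\eta\>_{FE}$, which on comparing coefficients forces each $|a_k\>_{AE}$ to carry the same $E$-component, so $|\sigma\>_{ACE}$ itself factors and $\sigma_{AC}$ is pure; combined with $\sigma_C=I/d$, this yields maximal entanglement. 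The main obstacle I anticipate is the LOCC-with-helper step in the second paragraph: one must rule out strategies in which Charlie's measurement is non-projective, Bob's correction is a non-unitary CPTP map, or Alice employs additional classical randomness, and show that none can escape the required rank-one structure of each diagonal block $\tau^{A'}_{kk}$. I expect this to reduce to two standard facts---pure states are extreme in the convex set of density matrices, so any averaging over Charlie's outcomes that yields $|\Phi^+\>_{AB}$ must be trivial; and LOCC cannot increase entanglement of formation, so $\rho_{A'T}^{(\ell)}$ must already be maximally entangled.
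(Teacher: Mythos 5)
Your \emph{if} direction is essentially the paper's proof: the same generalised {\tt CNOT} dilation to a tripartite GHZ state, the same decoherence-free subspace $\Span\{|k\rangle_T\otimes|k\rangle_C\}$, and the same Fourier measurement plus phase correction. No issues there.

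The \emph{only if} direction, however, has genuine gaps. First, you assert that ``perfect transmission through $\map K$ confines $\tau_{A'TC}$'' to $\mathcal H_{A'}\otimes\Span\{|k\rangle_T|k\rangle_C\}$, but this is precisely what needs to be proved, and it does not follow from transmission alone: the channel maps \emph{every} input into that span on the output side while dephasing it, so the real argument must show that the dephasing (separable) terms in the channel's explicit decomposition are forced to vanish. The paper gets there by a different and more economical route: it first shows (via monotonicity of entanglement under the entire LOCC post-processing, applied across the $A|BC$ cut) that the \emph{output} state $\rho_{ABC}$ must already be maximally entangled in $A|BC$ and supported on $\mathcal H_A\otimes\Span\{|jj\rangle_{BC}\}$, hence a pure GHZ-form state; only then does it conclude that the separable terms must vanish and that the input is confined to the DFS. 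Your alternative plan --- analysing each conditional state $\rho^{(\ell)}_{A'T}$ after Charlie's outcome $\ell$ and ruling out non-projective measurements, non-unitary corrections and extra randomness --- is exactly the step you flag as ``the main technical step,'' and it is left undone; the global monotonicity argument sidesteps all of it. Second, and more seriously, your argument covers only the controlled-order case. For controlled \emph{choice} the relevant subspace is $\Span\{|v_j\rangle\otimes|j\rangle\}$ for vectors $|v_j\rangle$ determined by an \emph{arbitrary} extension $\widetilde{\map E}_j$ of the erasing channels; these vectors are a priori neither normalised nor orthogonal, and the paper needs a separate lemma to show that perfect entanglement establishment forces $\||v_j\rangle\|=1$ for all $j$ and ultimately an orthonormal structure. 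Your proof as written silently assumes the computational-basis DFS throughout and so does not establish the theorem for coherent control over the channel choice. Finally, your purification argument for pulling maximal entanglement back to $\sigma_{AC}$ is workable but unnecessarily heavy: once $\rho^*_{AA'C}$ is shown to be the pure state $d^{-1/2}\sum_j e^{i\phi_j}|\psi_j\rangle_A|jj\rangle_{A'C}$, a Fourier-basis measurement on $A'$ (a local operation on Alice's side) already yields a maximally entangled $AC$ state, and monotonicity of entanglement under local operations finishes the job.
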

See the Supplementary Material for the proof.

\begin{figure}
    \centering
    \includegraphics[width = 0.8\linewidth]{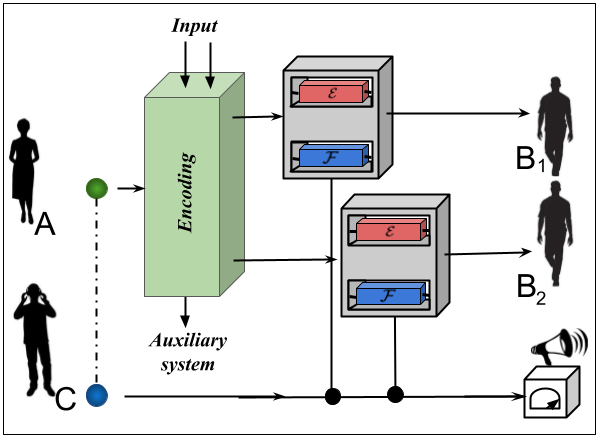}
    \caption{Distribution of entanglement to $N=2$  spatially separated parties  through coherently controlled information-erasing channels. 
    The task can be perfectly achieved with the assistance of shared entanglement between the qubit at the sender's end, and a qubit used to control the configuration of channels between the sender and each receiver. }
    \label{fig3}
\end{figure}

We now extend the protocol to the case of  $N$ spatially separated receivers, each of which  is connected to the sender through   coherently-controlled  information-erasing channels, as  in Figure \ref{fig3}.  

 The generalization to $N>1$ receivers has two important features.  First, we will show that the dimension of the control system can be kept constant, independently of $N$.  In other words,  the amount of control required by the protocol is asymptotically negligible in the large $N$ limit.    The second feature is that our protocol transmits perfect  $(N+1)$-partite GHZ states, which can be used as a primitive in many applications, including communication complexity \cite{buhrman1999multiparty},  multiparty cryptography \cite{brassard2007anonymous}, secret sharing and entanglement verification \cite{broadbent2009ghz}, and quantum sensor networks \cite{komar2014quantum,eldredge2018optimal,qian2019heisenberg}.  In the context of quantum communication,  GHZ states can be used to achieve  a task known as random receiver quantum communication (RRQC)  \cite{Bhattacharya21}, where the goal is to transfer quantum information to one  of many receivers, whose identity is  disclosed only after the transmission phase.  Strikingly, entanglement with the control allows us to achieve RRQC with information-erasing channels, whereas in the lack of such entanglement RRQC can only be achieved with quantum channels that preserve classical information   \cite{Bhattacharya21}.   Once again, our results highlight the power of quantum correlations  between the sender and the controller.

Let us see how the protocol works. Initially, Alice and Charle share a two-qubit maximally entangled state.  Then, Alice converts it in to an  $(N+2)-$qubit GHZ-state, by applying CNOT gates on her qubit and $N$ additional qubits in her laboratory.  At this point, Alice sends $N$ out of the $(N+1)-$qubits from her part of the GHZ state to the $N$  receivers.   Crucially, the controlled channel  preserves the GHZ state (see the Supplemental Material). At this point, Charlie  performs a Fourier measurement  on his qubit and communicates the result to one of the $N$ Bobs, who performs a local correction operation, leaving the remaining $N+1$ qubits (one with the sender and $N$ of them with Bob) in the GHZ state.

Also in this case, we prove that entanglement between control and target is strictly necessary for a perfect distribution of GHZ states. This result and its $d$-dimensional generalization are contained in the following theorem  
\begin{theorem}
Coherent control on the configuration of $d$ orthogonal information-erasing channels enables perfect establishment of $d$-dimensional GHZ states between the sender and  $N$ spatially separated receivers if and only if the sender and controller initially share a $d$-dimensional maximally entangled state.
\end{theorem}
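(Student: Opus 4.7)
The plan is to handle the two directions separately: an explicit construction for sufficiency, and entanglement accounting across the Alice-Charlie cut for necessity. For sufficiency, I would generalize the $N=1$, $d=2$ protocol to arbitrary $N$ and $d$. Alice and Charlie start with the $d$-dimensional maximally entangled pair $\frac{1}{\sqrt d}\sum_{i=0}^{d-1}|i\>_T|i\>_C$. Alice applies $N$ generalized $\tt CNOT$ gates from her qudit $T$ onto fresh ancillas prepared in $|0\>$, producing the $(N+2)$-qudit GHZ state $\frac{1}{\sqrt d}\sum_i |i\>_T|i\>^{\otimes N}_{A_1\cdots A_N}|i\>_C$. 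She keeps one qudit as her reference and routes the other $N$ through the $N$ controlled information-erasing channels to the receivers. The crucial observation is that the joint target-control state lies in the decoherence-free subspace $\Span\{|i\>^{\otimes N}_{T_1\cdots T_N}\otimes|i\>_C\}_{i=0}^{d-1}$ of the parallel channels, so every channel acts noiselessly and the $(N+2)$-qudit GHZ is preserved. Charlie then performs a Fourier-basis measurement on the control, announces his outcome $k\in\{0,\dots,d-1\}$, and one of the Bobs applies a $Z^{-k}$ phase correction, leaving the desired $(N+1)$-qudit GHZ state between Alice and the $N$ Bobs.

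For necessity, suppose a protocol perfectly outputs the pure GHZ state $\frac{1}{\sqrt d}\sum_i|i\>^{\otimes(N+1)}$ on Alice and the $N$ Bobs, starting from some initial two-qudit state $\rho_{AC}$ shared by Alice and Charlie. An information-erasing channel used in a definite configuration outputs a fixed pure state and transmits no correlation, so the only route by which correlation can survive from Alice's side to a distant Bob is the $d$-dimensional DFS of $\map K$, spanned by $\{|j\>_T|j\>_C\}_{j=0}^{d-1}$. I would argue that deterministically producing a pure GHZ forces the joint state on $(T_1,\dots,T_N,C)$ entering the $N$ parallel controlled channels to be supported on the multi-link DFS $\Span\{|j\>^{\otimes N}\otimes|j\>_C\}_{j=0}^{d-1}$, and moreover to have Schmidt rank $d$ across the cut $(T_1\cdots T_N)|C$, since the final uniform superposition on the Bobs must be obtainable from it after Charlie's measurement and Bob's unitary correction. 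Because all of Alice's operations are local CPTP maps on her side, the Schmidt rank across the Alice-Charlie cut cannot grow beyond that of $\rho_{AC}$, and tracing out Alice's internal ancillas only decreases it further; hence the rank-$d$ lower bound forces $\rho_{AC}$ itself to have Schmidt rank $d$, i.e., to be maximally entangled.

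The main obstacle will be executing this entanglement-accounting argument in full generality: arbitrary Stinespring dilations on Alice's side, possible intermediate measurements with classical memory, and the classical feedforward from Charlie to Bob. My strategy is to dilate every CPTP map to a unitary on an enlarged space so that Alice's operations are manifestly local isometries preserving Schmidt rank, and to observe that Charlie's measurement-plus-broadcast followed by Bob's unitary correction is an LOCC round that cannot increase the Alice-versus-rest entanglement. A shorter alternative, once an appropriately general form of Theorem~\ref{theo3} is available, is a direct LOCC reduction: measure $N-1$ of the Bobs' qudits in the Fourier basis and forward the outcomes to the remaining Bob, who applies the corresponding diagonal phase correction. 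By the standard GHZ-to-EPR identity this deterministically converts the $(N+1)$-qudit GHZ into a maximally entangled two-qudit pair shared by Alice and that Bob, without altering $\rho_{AC}$; the bipartite necessity part of Theorem~\ref{theo3} then immediately forces $\rho_{AC}$ to be maximally entangled, completing the proof.
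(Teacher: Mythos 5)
Your sufficiency argument (GHZ preparation by local CNOTs, preservation by the multi-link decoherence-free subspace $\Span\{|j\rangle^{\otimes N}\otimes|j\rangle_{\rm C}\}$, Fourier measurement by Charlie, phase correction by one Bob) is essentially the paper's proof of the \emph{if} part. The necessity direction, however, has a genuine gap at its final step. You argue that the output forces Schmidt rank $d$ across the Alice--Charlie cut, that local operations cannot increase Schmidt rank, and that therefore $\rho_{\rm AC}$ ``has Schmidt rank $d$, i.e., is maximally entangled.'' Schmidt rank $d$ does not imply maximal entanglement: a state such as $\sqrt{1-\epsilon}\,|00\rangle+\sqrt{\epsilon/(d-1)}\sum_{j\geq 1}|jj\rangle$ has full Schmidt rank for any $\epsilon>0$. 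The correct accounting must use a monotone that is maximized \emph{only} by the maximally entangled state (entanglement entropy $\log d$, or Nielsen majorization for deterministic pure-state conversion), which is how the paper argues: it first shows (via GHZ-to-EPR distillation, the same identity you invoke) that the post-channel state carries $\log d$ ebits across $A|B_1\cdots B_N C$, hence is a pure GHZ-type state, then propagates the requirement of \emph{maximal} entanglement (not just full rank) backwards through the channel and through Alice's local encoding.

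Two further issues. First, your claim that the pre-channel state must be supported on the multi-link DFS is asserted rather than derived, and it is exactly here that the controlled-order and controlled-choice cases diverge: for controlled choice the relevant Kraus operator is $\sum_j |j\rangle\langle v_{j,n}|\otimes|j\rangle\langle j|$ with a priori non-orthonormal, possibly sub-normalized vectors $|v_{j,n}\rangle$, and a substantial part of the paper's proof is devoted to showing that perfect GHZ output forces $\||v_{j,n}\rangle\|=1$ and mutual orthogonality. Your proposal does not address this case at all. Second, your ``shorter alternative'' --- reduce to the bipartite necessity of Theorem~2 via Fourier measurements on $N-1$ Bobs --- is not actually available: Theorem~2 is proved for a \emph{single} transmission line, whereas the reduced protocol establishes bipartite entanglement through the $N$-line network, so the ``appropriately general form'' you would need is essentially the statement being proved. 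The reduction idea is sound in spirit (the paper uses the same distillation step inside its Lemma on the output state), but it cannot replace the channel-level analysis of which input states survive the $N$ parallel erasing channels.
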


 \par

\textit{Conclusion.}-- In this work we initiated the exploration of quantum networks whose configuration is entangled with the state of a control system. We focussed on applications to quantum communication, identifying two tasks that can be perfectly achieved if and only if the sender and the controller initially share  maximal entanglement.

Our first task, the transmission of classical messages without leakage to the controller of the network's configuration, highlights a fundamental difference between  protocols where the configuration of the  channels is coherently controlled, and protocols where the configuration is fixed and controlled operations are allowed before and after each channel: when the channels completely erase information, no protocol that uses them in a fixed configuration can achieve   private communication between the sender and the receiver.   Our second  task highlights the benefits of sender-controller entanglement for establishing entanglement with one or more receivers.   
While in this work we focussed on quantum communication, we believe that quantum correlations with the configuration of quantum networks will have significant implications  also in other  quantum technology,   likely including  quantum metrology, thermodynamics, and  computation.

\section*{Acknowledgments}   We acknowledge  fruitful discussions with  H Kristjánsson.    This work is by the Hong Kong Research Grant Council through grant 17307719  and though the Senior Research Fellowship Scheme SRFS2021-7S02, by the Croucher Foundation,  and by the John Templeton Foundation through grant  61466, The Quantum Information Structure of Spacetime  (qiss.fr).  Research at the Perimeter Institute is supported by the Government of Canada through the Department of Innovation, Science and Economic Development Canada and by the Province of Ontario through the Ministry of Research, Innovation and Science. The opinions expressed in this publication are those of the authors and do not necessarily reflect the views of the John Templeton Foundation.
	\bibliographystyle{apsrev4-1}
	\bibliography{bib}

\begin{thebibliography}{61}%
\makeatletter
\providecommand \@ifxundefined [1]{%
 \@ifx{#1\undefined}
}%
\providecommand \@ifnum [1]{%
 \ifnum #1\expandafter \@firstoftwo
 \else \expandafter \@secondoftwo
 \fi
}%
\providecommand \@ifx [1]{%
 \ifx #1\expandafter \@firstoftwo
 \else \expandafter \@secondoftwo
 \fi
}%
\providecommand \natexlab [1]{#1}%
\providecommand \enquote  [1]{``#1''}%
\providecommand \bibnamefont  [1]{#1}%
\providecommand \bibfnamefont [1]{#1}%
\providecommand \citenamefont [1]{#1}%
\providecommand \href@noop [0]{\@secondoftwo}%
\providecommand \href [0]{\begingroup \@sanitize@url \@href}%
\providecommand \@href[1]{\@@startlink{#1}\@@href}%
\providecommand \@@href[1]{\endgroup#1\@@endlink}%
\providecommand \@sanitize@url [0]{\catcode `\\12\catcode `\$12\catcode
  `\&12\catcode `\#12\catcode `\^12\catcode `\_12\catcode `\%12\relax}%
\providecommand \@@startlink[1]{}%
\providecommand \@@endlink[0]{}%
\providecommand \url  [0]{\begingroup\@sanitize@url \@url }%
\providecommand \@url [1]{\endgroup\@href {#1}{\urlprefix }}%
\providecommand \urlprefix  [0]{URL }%
\providecommand \Eprint [0]{\href }%
\providecommand \doibase [0]{http://dx.doi.org/}%
\providecommand \selectlanguage [0]{\@gobble}%
\providecommand \bibinfo  [0]{\@secondoftwo}%
\providecommand \bibfield  [0]{\@secondoftwo}%
\providecommand \translation [1]{[#1]}%
\providecommand \BibitemOpen [0]{}%
\providecommand \bibitemStop [0]{}%
\providecommand \bibitemNoStop [0]{.\EOS\space}%
\providecommand \EOS [0]{\spacefactor3000\relax}%
\providecommand \BibitemShut  [1]{\csname bibitem#1\endcsname}%
\let\auto@bib@innerbib\@empty
\bibitem [{\citenamefont {Aharonov}\ \emph {et~al.}(1990)\citenamefont
  {Aharonov}, \citenamefont {Anandan}, \citenamefont {Popescu},\ and\
  \citenamefont {Vaidman}}]{aharonov1990superpositions}%
  \BibitemOpen
  \bibfield  {author} {\bibinfo {author} {\bibfnamefont {Y.}~\bibnamefont
  {Aharonov}}, \bibinfo {author} {\bibfnamefont {J.}~\bibnamefont {Anandan}},
  \bibinfo {author} {\bibfnamefont {S.}~\bibnamefont {Popescu}}, \ and\
  \bibinfo {author} {\bibfnamefont {L.}~\bibnamefont {Vaidman}},\ }\href
  {\doibase 10.1103/PhysRevLett.64.2965} {\bibfield  {journal} {\bibinfo
  {journal} {Phys. Rev. Lett.}\ }\textbf {\bibinfo {volume} {64}},\ \bibinfo
  {pages} {2965} (\bibinfo {year} {1990})}\BibitemShut {NoStop}%
\bibitem [{\citenamefont {Oi}(2003)}]{oi2003interference}%
  \BibitemOpen
  \bibfield  {author} {\bibinfo {author} {\bibfnamefont {D.~K.~L.}\
  \bibnamefont {Oi}},\ }\href {\doibase 10.1103/PhysRevLett.91.067902}
  {\bibfield  {journal} {\bibinfo  {journal} {Phys. Rev. Lett.}\ }\textbf
  {\bibinfo {volume} {91}},\ \bibinfo {pages} {067902} (\bibinfo {year}
  {2003})}\BibitemShut {NoStop}%
\bibitem [{\citenamefont {\AA{}berg}(2004)}]{aaberg2004operations}%
  \BibitemOpen
  \bibfield  {author} {\bibinfo {author} {\bibfnamefont {J.}~\bibnamefont
  {\AA{}berg}},\ }\href {\doibase 10.1103/PhysRevA.70.012103} {\bibfield
  {journal} {\bibinfo  {journal} {Phys. Rev. A}\ }\textbf {\bibinfo {volume}
  {70}},\ \bibinfo {pages} {012103} (\bibinfo {year} {2004})}\BibitemShut
  {NoStop}%
\bibitem [{\citenamefont {Abbott}\ \emph {et~al.}(2020)\citenamefont {Abbott},
  \citenamefont {Wechs}, \citenamefont {Horsman}, \citenamefont {Mhalla},\ and\
  \citenamefont {Branciard}}]{abbott2020communication}%
  \BibitemOpen
  \bibfield  {author} {\bibinfo {author} {\bibfnamefont {A.~A.}\ \bibnamefont
  {Abbott}}, \bibinfo {author} {\bibfnamefont {J.}~\bibnamefont {Wechs}},
  \bibinfo {author} {\bibfnamefont {D.}~\bibnamefont {Horsman}}, \bibinfo
  {author} {\bibfnamefont {M.}~\bibnamefont {Mhalla}}, \ and\ \bibinfo {author}
  {\bibfnamefont {C.}~\bibnamefont {Branciard}},\ }\href
  {https://doi.org/10.22331/q-2020-09-24-333} {\bibfield  {journal} {\bibinfo
  {journal} {Quantum}\ }\textbf {\bibinfo {volume} {4}},\ \bibinfo {pages}
  {333} (\bibinfo {year} {2020})}\BibitemShut {NoStop}%
\bibitem [{\citenamefont {Chiribella}\ and\ \citenamefont
  {Kristj{\'a}nsson}(2019)}]{chiribella2019quantum}%
  \BibitemOpen
  \bibfield  {author} {\bibinfo {author} {\bibfnamefont {G.}~\bibnamefont
  {Chiribella}}\ and\ \bibinfo {author} {\bibfnamefont {H.}~\bibnamefont
  {Kristj{\'a}nsson}},\ }\href {https://doi.org/10.1098/rspa.2018.0903}
  {\bibfield  {journal} {\bibinfo  {journal} {Proceedings of the Royal Society
  A}\ }\textbf {\bibinfo {volume} {475}},\ \bibinfo {pages} {20180903}
  (\bibinfo {year} {2019})}\BibitemShut {NoStop}%
\bibitem [{\citenamefont {Dong}\ \emph {et~al.}(2019)\citenamefont {Dong},
  \citenamefont {Nakayama}, \citenamefont {Soeda},\ and\ \citenamefont
  {Murao}}]{dong2019controlled}%
  \BibitemOpen
  \bibfield  {author} {\bibinfo {author} {\bibfnamefont {Q.}~\bibnamefont
  {Dong}}, \bibinfo {author} {\bibfnamefont {S.}~\bibnamefont {Nakayama}},
  \bibinfo {author} {\bibfnamefont {A.}~\bibnamefont {Soeda}}, \ and\ \bibinfo
  {author} {\bibfnamefont {M.}~\bibnamefont {Murao}},\ }\href
  {https://doi.org/10.48550/arXiv.1911.01645} {\bibfield  {journal} {\bibinfo
  {journal} {arXiv preprint arXiv:1911.01645}\ } (\bibinfo {year}
  {2019})}\BibitemShut {NoStop}%
\bibitem [{\citenamefont {Vanrietvelde}\ and\ \citenamefont
  {Chiribella}(2021)}]{vanrietvelde2021universal}%
  \BibitemOpen
  \bibfield  {author} {\bibinfo {author} {\bibfnamefont {A.}~\bibnamefont
  {Vanrietvelde}}\ and\ \bibinfo {author} {\bibfnamefont {G.}~\bibnamefont
  {Chiribella}},\ }\href {https://doi.org/10.26421/QIC21.15-16-5} {\bibfield
  {journal} {\bibinfo  {journal} {arXiv preprint arXiv:2106.12463}\ } (\bibinfo
  {year} {2021})}\BibitemShut {NoStop}%
\bibitem [{\citenamefont {Gisin}\ \emph {et~al.}(2005)\citenamefont {Gisin},
  \citenamefont {Linden}, \citenamefont {Massar},\ and\ \citenamefont
  {Popescu}}]{gisin2005error}%
  \BibitemOpen
  \bibfield  {author} {\bibinfo {author} {\bibfnamefont {N.}~\bibnamefont
  {Gisin}}, \bibinfo {author} {\bibfnamefont {N.}~\bibnamefont {Linden}},
  \bibinfo {author} {\bibfnamefont {S.}~\bibnamefont {Massar}}, \ and\ \bibinfo
  {author} {\bibfnamefont {S.}~\bibnamefont {Popescu}},\ }\href {\doibase
  10.1103/PhysRevA.72.012338} {\bibfield  {journal} {\bibinfo  {journal} {Phys.
  Rev. A}\ }\textbf {\bibinfo {volume} {72}},\ \bibinfo {pages} {012338}
  (\bibinfo {year} {2005})}\BibitemShut {NoStop}%
\bibitem [{\citenamefont {Loizeau}\ and\ \citenamefont
  {Grinbaum}(2020{\natexlab{a}})}]{loizeau2020channel}%
  \BibitemOpen
  \bibfield  {author} {\bibinfo {author} {\bibfnamefont {N.}~\bibnamefont
  {Loizeau}}\ and\ \bibinfo {author} {\bibfnamefont {A.}~\bibnamefont
  {Grinbaum}},\ }\href
  {https://journals.aps.org/pra/abstract/10.1103/PhysRevA.101.012340}
  {\bibfield  {journal} {\bibinfo  {journal} {Physical Review A}\ }\textbf
  {\bibinfo {volume} {101}},\ \bibinfo {pages} {012340} (\bibinfo {year}
  {2020}{\natexlab{a}})}\BibitemShut {NoStop}%
\bibitem [{\citenamefont {Kristj\'ansson}\ \emph {et~al.}(2021)\citenamefont
  {Kristj\'ansson}, \citenamefont {Mao},\ and\ \citenamefont
  {Chiribella}}]{kristjansson2021witnessing}%
  \BibitemOpen
  \bibfield  {author} {\bibinfo {author} {\bibfnamefont {H.}~\bibnamefont
  {Kristj\'ansson}}, \bibinfo {author} {\bibfnamefont {W.}~\bibnamefont {Mao}},
  \ and\ \bibinfo {author} {\bibfnamefont {G.}~\bibnamefont {Chiribella}},\
  }\href {\doibase 10.1103/PhysRevResearch.3.043147} {\bibfield  {journal}
  {\bibinfo  {journal} {Phys. Rev. Research}\ }\textbf {\bibinfo {volume}
  {3}},\ \bibinfo {pages} {043147} (\bibinfo {year} {2021})}\BibitemShut
  {NoStop}%
\bibitem [{\citenamefont {Lamoureux}\ \emph {et~al.}(2005)\citenamefont
  {Lamoureux}, \citenamefont {Brainis}, \citenamefont {Cerf}, \citenamefont
  {Emplit}, \citenamefont {Haelterman},\ and\ \citenamefont
  {Massar}}]{lamoureux2005experimental}%
  \BibitemOpen
  \bibfield  {author} {\bibinfo {author} {\bibfnamefont {L.-P.}\ \bibnamefont
  {Lamoureux}}, \bibinfo {author} {\bibfnamefont {E.}~\bibnamefont {Brainis}},
  \bibinfo {author} {\bibfnamefont {N.~J.}\ \bibnamefont {Cerf}}, \bibinfo
  {author} {\bibfnamefont {P.}~\bibnamefont {Emplit}}, \bibinfo {author}
  {\bibfnamefont {M.}~\bibnamefont {Haelterman}}, \ and\ \bibinfo {author}
  {\bibfnamefont {S.}~\bibnamefont {Massar}},\ }\href {\doibase
  10.1103/PhysRevLett.94.230501} {\bibfield  {journal} {\bibinfo  {journal}
  {Phys. Rev. Lett.}\ }\textbf {\bibinfo {volume} {94}},\ \bibinfo {pages}
  {230501} (\bibinfo {year} {2005})}\BibitemShut {NoStop}%
\bibitem [{\citenamefont {Rubino}\ \emph {et~al.}(2021)\citenamefont {Rubino},
  \citenamefont {Rozema}, \citenamefont {Ebler}, \citenamefont
  {Kristj\'ansson}, \citenamefont {Salek}, \citenamefont {Allard~Gu\'erin},
  \citenamefont {Abbott}, \citenamefont {Branciard}, \citenamefont {Brukner},
  \citenamefont {Chiribella},\ and\ \citenamefont
  {Walther}}]{rubino2021experimental}%
  \BibitemOpen
  \bibfield  {author} {\bibinfo {author} {\bibfnamefont {G.}~\bibnamefont
  {Rubino}}, \bibinfo {author} {\bibfnamefont {L.~A.}\ \bibnamefont {Rozema}},
  \bibinfo {author} {\bibfnamefont {D.}~\bibnamefont {Ebler}}, \bibinfo
  {author} {\bibfnamefont {H.}~\bibnamefont {Kristj\'ansson}}, \bibinfo
  {author} {\bibfnamefont {S.}~\bibnamefont {Salek}}, \bibinfo {author}
  {\bibfnamefont {P.}~\bibnamefont {Allard~Gu\'erin}}, \bibinfo {author}
  {\bibfnamefont {A.~A.}\ \bibnamefont {Abbott}}, \bibinfo {author}
  {\bibfnamefont {C.}~\bibnamefont {Branciard}}, \bibinfo {author}
  {\bibfnamefont {{\v{C}}.}~\bibnamefont {Brukner}}, \bibinfo {author}
  {\bibfnamefont {G.}~\bibnamefont {Chiribella}}, \ and\ \bibinfo {author}
  {\bibfnamefont {P.}~\bibnamefont {Walther}},\ }\href {\doibase
  10.1103/PhysRevResearch.3.013093} {\bibfield  {journal} {\bibinfo  {journal}
  {Phys. Rev. Research}\ }\textbf {\bibinfo {volume} {3}},\ \bibinfo {pages}
  {013093} (\bibinfo {year} {2021})}\BibitemShut {NoStop}%
\bibitem [{\citenamefont {Hardy}(2007)}]{Hardy05}%
  \BibitemOpen
  \bibfield  {author} {\bibinfo {author} {\bibfnamefont {L.}~\bibnamefont
  {Hardy}},\ }\href {\doibase 10.1088/1751-8113/40/12/s12} {\bibfield
  {journal} {\bibinfo  {journal} {Journal of Physics A: Mathematical and
  Theoretical}\ }\textbf {\bibinfo {volume} {40}},\ \bibinfo {pages} {3081}
  (\bibinfo {year} {2007})}\BibitemShut {NoStop}%
\bibitem [{\citenamefont {Hardy}(2009)}]{Hardy09}%
  \BibitemOpen
  \bibfield  {author} {\bibinfo {author} {\bibfnamefont {L.}~\bibnamefont
  {Hardy}},\ }in\ \href {\doibase 10.1007/978-1-4020-9107-0_21} {\emph
  {\bibinfo {booktitle} {The Western Ontario Series in Philosophy of
  Science}}}\ (\bibinfo  {publisher} {Springer Netherlands},\ \bibinfo {year}
  {2009})\ pp.\ \bibinfo {pages} {379--401}\BibitemShut {NoStop}%
\bibitem [{\citenamefont {Myrvold}\ and\ \citenamefont
  {Christian}(2009)}]{Hardy09a}%
  \BibitemOpen
  \bibfield  {author} {\bibinfo {author} {\bibfnamefont {W.~C.}\ \bibnamefont
  {Myrvold}}\ and\ \bibinfo {author} {\bibfnamefont {J.}~\bibnamefont
  {Christian}},\ }\href {\doibase 10.1007/978-1-4020-9107-0} {\emph {\bibinfo
  {title} {Quantum Reality, Relativistic Causality, and Closing the Epistemic
  Circle}}}\ (\bibinfo  {publisher} {Springer Netherlands},\ \bibinfo {year}
  {2009})\BibitemShut {NoStop}%
\bibitem [{\citenamefont {Chiribella}\ \emph {et~al.}()\citenamefont
  {Chiribella}, \citenamefont {D$'$Ariano}, \citenamefont {Perinotti},\ and\
  \citenamefont {Valiron}}]{Chiribella09a}%
  \BibitemOpen
  \bibfield  {author} {\bibinfo {author} {\bibfnamefont {G.}~\bibnamefont
  {Chiribella}}, \bibinfo {author} {\bibfnamefont {G.~M.}\ \bibnamefont
  {D$'$Ariano}}, \bibinfo {author} {\bibfnamefont {P.}~\bibnamefont
  {Perinotti}}, \ and\ \bibinfo {author} {\bibfnamefont {B.}~\bibnamefont
  {Valiron}},\ }\href {https://arxiv.org/abs/0912.0195} {\ }\Eprint
  {http://arxiv.org/abs/0912.0195} {arXiv:0912.0195 [quant-ph]} \BibitemShut
  {NoStop}%
\bibitem [{\citenamefont {Chiribella}\ \emph {et~al.}(2013)\citenamefont
  {Chiribella}, \citenamefont {D'Ariano}, \citenamefont {Perinotti},\ and\
  \citenamefont {Valiron}}]{Chiribella13}%
  \BibitemOpen
  \bibfield  {author} {\bibinfo {author} {\bibfnamefont {G.}~\bibnamefont
  {Chiribella}}, \bibinfo {author} {\bibfnamefont {G.~M.}\ \bibnamefont
  {D'Ariano}}, \bibinfo {author} {\bibfnamefont {P.}~\bibnamefont {Perinotti}},
  \ and\ \bibinfo {author} {\bibfnamefont {B.}~\bibnamefont {Valiron}},\ }\href
  {\doibase 10.1103/PhysRevA.88.022318} {\bibfield  {journal} {\bibinfo
  {journal} {Phys. Rev. A}\ }\textbf {\bibinfo {volume} {88}},\ \bibinfo
  {pages} {022318} (\bibinfo {year} {2013})}\BibitemShut {NoStop}%
\bibitem [{\citenamefont {Oreshkov}\ \emph {et~al.}(2012)\citenamefont
  {Oreshkov}, \citenamefont {Costa},\ and\ \citenamefont
  {Brukner}}]{Oreshkov12}%
  \BibitemOpen
  \bibfield  {author} {\bibinfo {author} {\bibfnamefont {O.}~\bibnamefont
  {Oreshkov}}, \bibinfo {author} {\bibfnamefont {F.}~\bibnamefont {Costa}}, \
  and\ \bibinfo {author} {\bibfnamefont {{\v{C}}.}~\bibnamefont {Brukner}},\
  }\href {\doibase 10.1038/ncomms2076} {\bibfield  {journal} {\bibinfo
  {journal} {Nature Communications}\ }\textbf {\bibinfo {volume} {3}} (\bibinfo
  {year} {2012}),\ 10.1038/ncomms2076}\BibitemShut {NoStop}%
\bibitem [{\citenamefont {Ara{\'u}jo}\ \emph {et~al.}(2015)\citenamefont
  {Ara{\'u}jo}, \citenamefont {Branciard}, \citenamefont {Costa}, \citenamefont
  {Feix}, \citenamefont {Giarmatzi},\ and\ \citenamefont
  {Brukner}}]{araujo2015witnessing}%
  \BibitemOpen
  \bibfield  {author} {\bibinfo {author} {\bibfnamefont {M.}~\bibnamefont
  {Ara{\'u}jo}}, \bibinfo {author} {\bibfnamefont {C.}~\bibnamefont
  {Branciard}}, \bibinfo {author} {\bibfnamefont {F.}~\bibnamefont {Costa}},
  \bibinfo {author} {\bibfnamefont {A.}~\bibnamefont {Feix}}, \bibinfo {author}
  {\bibfnamefont {C.}~\bibnamefont {Giarmatzi}}, \ and\ \bibinfo {author}
  {\bibfnamefont {{\v{C}}.}~\bibnamefont {Brukner}},\ }\href
  {https://iopscience.iop.org/article/10.1088/1367-2630/17/10/102001/pdf}
  {\bibfield  {journal} {\bibinfo  {journal} {New Journal of Physics}\ }\textbf
  {\bibinfo {volume} {17}},\ \bibinfo {pages} {102001} (\bibinfo {year}
  {2015})}\BibitemShut {NoStop}%
\bibitem [{\citenamefont {Procopio}\ \emph {et~al.}(2015)\citenamefont
  {Procopio}, \citenamefont {Moqanaki}, \citenamefont {Ara{\'u}jo},
  \citenamefont {Costa}, \citenamefont {Alonso~Calafell}, \citenamefont {Dowd},
  \citenamefont {Hamel}, \citenamefont {Rozema}, \citenamefont {Brukner},\ and\
  \citenamefont {Walther}}]{procopio2015experimental}%
  \BibitemOpen
  \bibfield  {author} {\bibinfo {author} {\bibfnamefont {L.~M.}\ \bibnamefont
  {Procopio}}, \bibinfo {author} {\bibfnamefont {A.}~\bibnamefont {Moqanaki}},
  \bibinfo {author} {\bibfnamefont {M.}~\bibnamefont {Ara{\'u}jo}}, \bibinfo
  {author} {\bibfnamefont {F.}~\bibnamefont {Costa}}, \bibinfo {author}
  {\bibfnamefont {I.}~\bibnamefont {Alonso~Calafell}}, \bibinfo {author}
  {\bibfnamefont {E.~G.}\ \bibnamefont {Dowd}}, \bibinfo {author}
  {\bibfnamefont {D.~R.}\ \bibnamefont {Hamel}}, \bibinfo {author}
  {\bibfnamefont {L.~A.}\ \bibnamefont {Rozema}}, \bibinfo {author}
  {\bibfnamefont {{\v{C}}.}~\bibnamefont {Brukner}}, \ and\ \bibinfo {author}
  {\bibfnamefont {P.}~\bibnamefont {Walther}},\ }\href
  {https://doi.org/10.1038/ncomms8913} {\bibfield  {journal} {\bibinfo
  {journal} {Nature communications}\ }\textbf {\bibinfo {volume} {6}},\
  \bibinfo {pages} {1} (\bibinfo {year} {2015})}\BibitemShut {NoStop}%
\bibitem [{\citenamefont {Rubino}\ \emph {et~al.}(2017)\citenamefont {Rubino},
  \citenamefont {Rozema}, \citenamefont {Feix}, \citenamefont {Ara{\'{u}}jo},
  \citenamefont {Zeuner}, \citenamefont {Procopio}, \citenamefont {Brukner},\
  and\ \citenamefont {Walther}}]{Rubino17}%
  \BibitemOpen
  \bibfield  {author} {\bibinfo {author} {\bibfnamefont {G.}~\bibnamefont
  {Rubino}}, \bibinfo {author} {\bibfnamefont {L.~A.}\ \bibnamefont {Rozema}},
  \bibinfo {author} {\bibfnamefont {A.}~\bibnamefont {Feix}}, \bibinfo {author}
  {\bibfnamefont {M.}~\bibnamefont {Ara{\'{u}}jo}}, \bibinfo {author}
  {\bibfnamefont {J.~M.}\ \bibnamefont {Zeuner}}, \bibinfo {author}
  {\bibfnamefont {L.~M.}\ \bibnamefont {Procopio}}, \bibinfo {author}
  {\bibfnamefont {{\v{C}}.}~\bibnamefont {Brukner}}, \ and\ \bibinfo {author}
  {\bibfnamefont {P.}~\bibnamefont {Walther}},\ }\href {\doibase
  10.1126/sciadv.1602589} {\bibfield  {journal} {\bibinfo  {journal} {Science
  Advances}\ }\textbf {\bibinfo {volume} {3}} (\bibinfo {year} {2017}),\
  10.1126/sciadv.1602589}\BibitemShut {NoStop}%
\bibitem [{\citenamefont {Goswami}\ \emph {et~al.}(2018)\citenamefont
  {Goswami}, \citenamefont {Giarmatzi}, \citenamefont {Kewming}, \citenamefont
  {Costa}, \citenamefont {Branciard}, \citenamefont {Romero},\ and\
  \citenamefont {White}}]{Goswami18}%
  \BibitemOpen
  \bibfield  {author} {\bibinfo {author} {\bibfnamefont {K.}~\bibnamefont
  {Goswami}}, \bibinfo {author} {\bibfnamefont {C.}~\bibnamefont {Giarmatzi}},
  \bibinfo {author} {\bibfnamefont {M.}~\bibnamefont {Kewming}}, \bibinfo
  {author} {\bibfnamefont {F.}~\bibnamefont {Costa}}, \bibinfo {author}
  {\bibfnamefont {C.}~\bibnamefont {Branciard}}, \bibinfo {author}
  {\bibfnamefont {J.}~\bibnamefont {Romero}}, \ and\ \bibinfo {author}
  {\bibfnamefont {A.~G.}\ \bibnamefont {White}},\ }\href {\doibase
  10.1103/PhysRevLett.121.090503} {\bibfield  {journal} {\bibinfo  {journal}
  {Phys. Rev. Lett.}\ }\textbf {\bibinfo {volume} {121}},\ \bibinfo {pages}
  {090503} (\bibinfo {year} {2018})}\BibitemShut {NoStop}%
\bibitem [{\citenamefont {Wei}\ \emph {et~al.}(2019)\citenamefont {Wei},
  \citenamefont {Tischler}, \citenamefont {Zhao}, \citenamefont {Li},
  \citenamefont {Arrazola}, \citenamefont {Liu}, \citenamefont {Zhang},
  \citenamefont {Li}, \citenamefont {You}, \citenamefont {Wang}, \citenamefont
  {Chen}, \citenamefont {Sanders}, \citenamefont {Zhang}, \citenamefont
  {Pryde}, \citenamefont {Xu},\ and\ \citenamefont
  {Pan}}]{wei2019experimental}%
  \BibitemOpen
  \bibfield  {author} {\bibinfo {author} {\bibfnamefont {K.}~\bibnamefont
  {Wei}}, \bibinfo {author} {\bibfnamefont {N.}~\bibnamefont {Tischler}},
  \bibinfo {author} {\bibfnamefont {S.-R.}\ \bibnamefont {Zhao}}, \bibinfo
  {author} {\bibfnamefont {Y.-H.}\ \bibnamefont {Li}}, \bibinfo {author}
  {\bibfnamefont {J.~M.}\ \bibnamefont {Arrazola}}, \bibinfo {author}
  {\bibfnamefont {Y.}~\bibnamefont {Liu}}, \bibinfo {author} {\bibfnamefont
  {W.}~\bibnamefont {Zhang}}, \bibinfo {author} {\bibfnamefont
  {H.}~\bibnamefont {Li}}, \bibinfo {author} {\bibfnamefont {L.}~\bibnamefont
  {You}}, \bibinfo {author} {\bibfnamefont {Z.}~\bibnamefont {Wang}}, \bibinfo
  {author} {\bibfnamefont {Y.-A.}\ \bibnamefont {Chen}}, \bibinfo {author}
  {\bibfnamefont {B.~C.}\ \bibnamefont {Sanders}}, \bibinfo {author}
  {\bibfnamefont {Q.}~\bibnamefont {Zhang}}, \bibinfo {author} {\bibfnamefont
  {G.~J.}\ \bibnamefont {Pryde}}, \bibinfo {author} {\bibfnamefont
  {F.}~\bibnamefont {Xu}}, \ and\ \bibinfo {author} {\bibfnamefont {J.-W.}\
  \bibnamefont {Pan}},\ }\href {\doibase 10.1103/PhysRevLett.122.120504}
  {\bibfield  {journal} {\bibinfo  {journal} {Phys. Rev. Lett.}\ }\textbf
  {\bibinfo {volume} {122}},\ \bibinfo {pages} {120504} (\bibinfo {year}
  {2019})}\BibitemShut {NoStop}%
\bibitem [{\citenamefont {Guo}\ \emph {et~al.}(2020)\citenamefont {Guo},
  \citenamefont {Hu}, \citenamefont {Hou}, \citenamefont {Cao}, \citenamefont
  {Cui}, \citenamefont {Liu}, \citenamefont {Huang}, \citenamefont {Li},
  \citenamefont {Guo},\ and\ \citenamefont {Chiribella}}]{Guo20}%
  \BibitemOpen
  \bibfield  {author} {\bibinfo {author} {\bibfnamefont {Y.}~\bibnamefont
  {Guo}}, \bibinfo {author} {\bibfnamefont {X.-M.}\ \bibnamefont {Hu}},
  \bibinfo {author} {\bibfnamefont {Z.-B.}\ \bibnamefont {Hou}}, \bibinfo
  {author} {\bibfnamefont {H.}~\bibnamefont {Cao}}, \bibinfo {author}
  {\bibfnamefont {J.-M.}\ \bibnamefont {Cui}}, \bibinfo {author} {\bibfnamefont
  {B.-H.}\ \bibnamefont {Liu}}, \bibinfo {author} {\bibfnamefont {Y.-F.}\
  \bibnamefont {Huang}}, \bibinfo {author} {\bibfnamefont {C.-F.}\ \bibnamefont
  {Li}}, \bibinfo {author} {\bibfnamefont {G.-C.}\ \bibnamefont {Guo}}, \ and\
  \bibinfo {author} {\bibfnamefont {G.}~\bibnamefont {Chiribella}},\ }\href
  {\doibase 10.1103/PhysRevLett.124.030502} {\bibfield  {journal} {\bibinfo
  {journal} {Phys. Rev. Lett.}\ }\textbf {\bibinfo {volume} {124}},\ \bibinfo
  {pages} {030502} (\bibinfo {year} {2020})}\BibitemShut {NoStop}%
\bibitem [{\citenamefont {Goswami}\ and\ \citenamefont
  {Romero}(2020)}]{Goswami20}%
  \BibitemOpen
  \bibfield  {author} {\bibinfo {author} {\bibfnamefont {K.}~\bibnamefont
  {Goswami}}\ and\ \bibinfo {author} {\bibfnamefont {J.}~\bibnamefont
  {Romero}},\ }\href {\doibase 10.1116/5.0010747} {\bibfield  {journal}
  {\bibinfo  {journal} {{AVS} Quantum Science}\ }\textbf {\bibinfo {volume}
  {2}},\ \bibinfo {pages} {037101} (\bibinfo {year} {2020})}\BibitemShut
  {NoStop}%
\bibitem [{\citenamefont {Chiribella}(2012)}]{Chiribella12}%
  \BibitemOpen
  \bibfield  {author} {\bibinfo {author} {\bibfnamefont {G.}~\bibnamefont
  {Chiribella}},\ }\href {\doibase 10.1103/PhysRevA.86.040301} {\bibfield
  {journal} {\bibinfo  {journal} {Phys. Rev. A}\ }\textbf {\bibinfo {volume}
  {86}},\ \bibinfo {pages} {040301} (\bibinfo {year} {2012})}\BibitemShut
  {NoStop}%
\bibitem [{\citenamefont {Ara\'ujo}\ \emph {et~al.}(2014)\citenamefont
  {Ara\'ujo}, \citenamefont {Costa},\ and\ \citenamefont
  {Brukner}}]{araujo2014computational}%
  \BibitemOpen
  \bibfield  {author} {\bibinfo {author} {\bibfnamefont {M.}~\bibnamefont
  {Ara\'ujo}}, \bibinfo {author} {\bibfnamefont {F.}~\bibnamefont {Costa}}, \
  and\ \bibinfo {author} {\bibfnamefont {{\v{C}}.}~\bibnamefont {Brukner}},\
  }\href {\doibase 10.1103/PhysRevLett.113.250402} {\bibfield  {journal}
  {\bibinfo  {journal} {Phys. Rev. Lett.}\ }\textbf {\bibinfo {volume} {113}},\
  \bibinfo {pages} {250402} (\bibinfo {year} {2014})}\BibitemShut {NoStop}%
\bibitem [{\citenamefont {Gu\'erin}\ \emph {et~al.}(2016)\citenamefont
  {Gu\'erin}, \citenamefont {Feix}, \citenamefont {Ara\'ujo},\ and\
  \citenamefont {Brukner}}]{guerin2016exponential}%
  \BibitemOpen
  \bibfield  {author} {\bibinfo {author} {\bibfnamefont {P.~A.}\ \bibnamefont
  {Gu\'erin}}, \bibinfo {author} {\bibfnamefont {A.}~\bibnamefont {Feix}},
  \bibinfo {author} {\bibfnamefont {M.}~\bibnamefont {Ara\'ujo}}, \ and\
  \bibinfo {author} {\bibfnamefont {{\v{C}}.}~\bibnamefont {Brukner}},\ }\href
  {\doibase 10.1103/PhysRevLett.117.100502} {\bibfield  {journal} {\bibinfo
  {journal} {Phys. Rev. Lett.}\ }\textbf {\bibinfo {volume} {117}},\ \bibinfo
  {pages} {100502} (\bibinfo {year} {2016})}\BibitemShut {NoStop}%
\bibitem [{\citenamefont {Ebler}\ \emph {et~al.}(2018)\citenamefont {Ebler},
  \citenamefont {Salek},\ and\ \citenamefont {Chiribella}}]{Ebler18}%
  \BibitemOpen
  \bibfield  {author} {\bibinfo {author} {\bibfnamefont {D.}~\bibnamefont
  {Ebler}}, \bibinfo {author} {\bibfnamefont {S.}~\bibnamefont {Salek}}, \ and\
  \bibinfo {author} {\bibfnamefont {G.}~\bibnamefont {Chiribella}},\ }\href
  {\doibase 10.1103/PhysRevLett.120.120502} {\bibfield  {journal} {\bibinfo
  {journal} {Phys. Rev. Lett.}\ }\textbf {\bibinfo {volume} {120}},\ \bibinfo
  {pages} {120502} (\bibinfo {year} {2018})}\BibitemShut {NoStop}%
\bibitem [{\citenamefont {Salek}\ \emph {et~al.}(2018)\citenamefont {Salek},
  \citenamefont {Ebler},\ and\ \citenamefont {Chiribella}}]{salek2018quantum}%
  \BibitemOpen
  \bibfield  {author} {\bibinfo {author} {\bibfnamefont {S.}~\bibnamefont
  {Salek}}, \bibinfo {author} {\bibfnamefont {D.}~\bibnamefont {Ebler}}, \ and\
  \bibinfo {author} {\bibfnamefont {G.}~\bibnamefont {Chiribella}},\ }\href
  {https://arxiv.org/abs/1809.06655} {\bibfield  {journal} {\bibinfo  {journal}
  {arXiv preprint arXiv:1809.06655}\ } (\bibinfo {year} {2018})}\BibitemShut
  {NoStop}%
\bibitem [{\citenamefont {Chiribella}\ \emph
  {et~al.}(2021{\natexlab{a}})\citenamefont {Chiribella}, \citenamefont
  {Banik}, \citenamefont {Bhattacharya}, \citenamefont {Guha}, \citenamefont
  {Alimuddin}, \citenamefont {Roy}, \citenamefont {Saha}, \citenamefont
  {Agrawal},\ and\ \citenamefont {Kar}}]{Chiribella21NJP}%
  \BibitemOpen
  \bibfield  {author} {\bibinfo {author} {\bibfnamefont {G.}~\bibnamefont
  {Chiribella}}, \bibinfo {author} {\bibfnamefont {M.}~\bibnamefont {Banik}},
  \bibinfo {author} {\bibfnamefont {S.~S.}\ \bibnamefont {Bhattacharya}},
  \bibinfo {author} {\bibfnamefont {T.}~\bibnamefont {Guha}}, \bibinfo {author}
  {\bibfnamefont {M.}~\bibnamefont {Alimuddin}}, \bibinfo {author}
  {\bibfnamefont {A.}~\bibnamefont {Roy}}, \bibinfo {author} {\bibfnamefont
  {S.}~\bibnamefont {Saha}}, \bibinfo {author} {\bibfnamefont {S.}~\bibnamefont
  {Agrawal}}, \ and\ \bibinfo {author} {\bibfnamefont {G.}~\bibnamefont
  {Kar}},\ }\href {\doibase 10.1088/1367-2630/abe7a0} {\bibfield  {journal}
  {\bibinfo  {journal} {New Journal of Physics}\ }\textbf {\bibinfo {volume}
  {23}},\ \bibinfo {pages} {033039} (\bibinfo {year}
  {2021}{\natexlab{a}})}\BibitemShut {NoStop}%
\bibitem [{\citenamefont {Caleffi}\ and\ \citenamefont
  {Cacciapuoti}(2020)}]{caleffi2020quantum}%
  \BibitemOpen
  \bibfield  {author} {\bibinfo {author} {\bibfnamefont {M.}~\bibnamefont
  {Caleffi}}\ and\ \bibinfo {author} {\bibfnamefont {A.~S.}\ \bibnamefont
  {Cacciapuoti}},\ }\href
  {https://ieeexplore.ieee.org/abstract/document/8966996?casa_token=eGk9uJ0hvI8AAAAA:AE-UrM0GwDWVhkdmQuPPoIW5nTZGXFqlME1ERa8KBTh8-90S9yK_QjjJjna4d_w07kIL_AbdgA}
  {\bibfield  {journal} {\bibinfo  {journal} {IEEE Journal on Selected Areas in
  Communications}\ }\textbf {\bibinfo {volume} {38}},\ \bibinfo {pages} {575}
  (\bibinfo {year} {2020})}\BibitemShut {NoStop}%
\bibitem [{\citenamefont {Loizeau}\ and\ \citenamefont
  {Grinbaum}(2020{\natexlab{b}})}]{procopio2020sending}%
  \BibitemOpen
  \bibfield  {author} {\bibinfo {author} {\bibfnamefont {N.}~\bibnamefont
  {Loizeau}}\ and\ \bibinfo {author} {\bibfnamefont {A.}~\bibnamefont
  {Grinbaum}},\ }\href {\doibase 10.1103/PhysRevA.101.012340} {\bibfield
  {journal} {\bibinfo  {journal} {Phys. Rev. A}\ }\textbf {\bibinfo {volume}
  {101}},\ \bibinfo {pages} {012340} (\bibinfo {year}
  {2020}{\natexlab{b}})}\BibitemShut {NoStop}%
\bibitem [{\citenamefont {Bhattacharya}\ \emph {et~al.}(2021)\citenamefont
  {Bhattacharya}, \citenamefont {Maity}, \citenamefont {Guha}, \citenamefont
  {Chiribella},\ and\ \citenamefont {Banik}}]{Bhattacharya21}%
  \BibitemOpen
  \bibfield  {author} {\bibinfo {author} {\bibfnamefont {S.~S.}\ \bibnamefont
  {Bhattacharya}}, \bibinfo {author} {\bibfnamefont {A.~G.}\ \bibnamefont
  {Maity}}, \bibinfo {author} {\bibfnamefont {T.}~\bibnamefont {Guha}},
  \bibinfo {author} {\bibfnamefont {G.}~\bibnamefont {Chiribella}}, \ and\
  \bibinfo {author} {\bibfnamefont {M.}~\bibnamefont {Banik}},\ }\href
  {\doibase 10.1103/PRXQuantum.2.020350} {\bibfield  {journal} {\bibinfo
  {journal} {PRX Quantum}\ }\textbf {\bibinfo {volume} {2}},\ \bibinfo {pages}
  {020350} (\bibinfo {year} {2021})}\BibitemShut {NoStop}%
\bibitem [{\citenamefont {Sazim}\ \emph {et~al.}(2021)\citenamefont {Sazim},
  \citenamefont {Sedlak}, \citenamefont {Singh},\ and\ \citenamefont
  {Pati}}]{Sazim21}%
  \BibitemOpen
  \bibfield  {author} {\bibinfo {author} {\bibfnamefont {S.}~\bibnamefont
  {Sazim}}, \bibinfo {author} {\bibfnamefont {M.}~\bibnamefont {Sedlak}},
  \bibinfo {author} {\bibfnamefont {K.}~\bibnamefont {Singh}}, \ and\ \bibinfo
  {author} {\bibfnamefont {A.~K.}\ \bibnamefont {Pati}},\ }\href {\doibase
  10.1103/PhysRevA.103.062610} {\bibfield  {journal} {\bibinfo  {journal}
  {Phys. Rev. A}\ }\textbf {\bibinfo {volume} {103}},\ \bibinfo {pages}
  {062610} (\bibinfo {year} {2021})}\BibitemShut {NoStop}%
\bibitem [{\citenamefont {Chiribella}\ \emph
  {et~al.}(2021{\natexlab{b}})\citenamefont {Chiribella}, \citenamefont
  {Wilson},\ and\ \citenamefont {Chau}}]{Chiribella21}%
  \BibitemOpen
  \bibfield  {author} {\bibinfo {author} {\bibfnamefont {G.}~\bibnamefont
  {Chiribella}}, \bibinfo {author} {\bibfnamefont {M.}~\bibnamefont {Wilson}},
  \ and\ \bibinfo {author} {\bibfnamefont {H.~F.}\ \bibnamefont {Chau}},\
  }\href {\doibase 10.1103/PhysRevLett.127.190502} {\bibfield  {journal}
  {\bibinfo  {journal} {Phys. Rev. Lett.}\ }\textbf {\bibinfo {volume} {127}},\
  \bibinfo {pages} {190502} (\bibinfo {year} {2021}{\natexlab{b}})}\BibitemShut
  {NoStop}%
\bibitem [{\citenamefont {Zhao}\ \emph {et~al.}(2020)\citenamefont {Zhao},
  \citenamefont {Yang},\ and\ \citenamefont {Chiribella}}]{Zhao20}%
  \BibitemOpen
  \bibfield  {author} {\bibinfo {author} {\bibfnamefont {X.}~\bibnamefont
  {Zhao}}, \bibinfo {author} {\bibfnamefont {Y.}~\bibnamefont {Yang}}, \ and\
  \bibinfo {author} {\bibfnamefont {G.}~\bibnamefont {Chiribella}},\ }\href
  {\doibase 10.1103/PhysRevLett.124.190503} {\bibfield  {journal} {\bibinfo
  {journal} {Phys. Rev. Lett.}\ }\textbf {\bibinfo {volume} {124}},\ \bibinfo
  {pages} {190503} (\bibinfo {year} {2020})}\BibitemShut {NoStop}%
\bibitem [{\citenamefont {Chapeau-Blondeau}(2021)}]{chapeau2021}%
  \BibitemOpen
  \bibfield  {author} {\bibinfo {author} {\bibfnamefont {F.}~\bibnamefont
  {Chapeau-Blondeau}},\ }\href {\doibase 10.1103/PhysRevA.103.032615}
  {\bibfield  {journal} {\bibinfo  {journal} {Phys. Rev. A}\ }\textbf {\bibinfo
  {volume} {103}},\ \bibinfo {pages} {032615} (\bibinfo {year}
  {2021})}\BibitemShut {NoStop}%
\bibitem [{\citenamefont {Felce}\ and\ \citenamefont {Vedral}(2020)}]{Felce20}%
  \BibitemOpen
  \bibfield  {author} {\bibinfo {author} {\bibfnamefont {D.}~\bibnamefont
  {Felce}}\ and\ \bibinfo {author} {\bibfnamefont {V.}~\bibnamefont {Vedral}},\
  }\href {\doibase 10.1103/PhysRevLett.125.070603} {\bibfield  {journal}
  {\bibinfo  {journal} {Phys. Rev. Lett.}\ }\textbf {\bibinfo {volume} {125}},\
  \bibinfo {pages} {070603} (\bibinfo {year} {2020})}\BibitemShut {NoStop}%
\bibitem [{\citenamefont {Guha}\ \emph {et~al.}(2020)\citenamefont {Guha},
  \citenamefont {Alimuddin},\ and\ \citenamefont {Parashar}}]{Guha20}%
  \BibitemOpen
  \bibfield  {author} {\bibinfo {author} {\bibfnamefont {T.}~\bibnamefont
  {Guha}}, \bibinfo {author} {\bibfnamefont {M.}~\bibnamefont {Alimuddin}}, \
  and\ \bibinfo {author} {\bibfnamefont {P.}~\bibnamefont {Parashar}},\ }\href
  {\doibase 10.1103/PhysRevA.102.032215} {\bibfield  {journal} {\bibinfo
  {journal} {Phys. Rev. A}\ }\textbf {\bibinfo {volume} {102}},\ \bibinfo
  {pages} {032215} (\bibinfo {year} {2020})}\BibitemShut {NoStop}%
\bibitem [{\citenamefont {Simonov}\ \emph {et~al.}(2022)\citenamefont
  {Simonov}, \citenamefont {Francica}, \citenamefont {Guarnieri},\ and\
  \citenamefont {Paternostro}}]{Simonov21}%
  \BibitemOpen
  \bibfield  {author} {\bibinfo {author} {\bibfnamefont {K.}~\bibnamefont
  {Simonov}}, \bibinfo {author} {\bibfnamefont {G.}~\bibnamefont {Francica}},
  \bibinfo {author} {\bibfnamefont {G.}~\bibnamefont {Guarnieri}}, \ and\
  \bibinfo {author} {\bibfnamefont {M.}~\bibnamefont {Paternostro}},\ }\href
  {\doibase 10.1103/PhysRevA.105.032217} {\bibfield  {journal} {\bibinfo
  {journal} {Phys Rev A}\ }\textbf {\bibinfo {volume} {105}},\ \bibinfo {pages}
  {032217} (\bibinfo {year} {2022})}\BibitemShut {NoStop}%
\bibitem [{\citenamefont {Heinosaari}\ and\ \citenamefont
  {Ziman}(2011)}]{heinosaari2011mathematical}%
  \BibitemOpen
  \bibfield  {author} {\bibinfo {author} {\bibfnamefont {T.}~\bibnamefont
  {Heinosaari}}\ and\ \bibinfo {author} {\bibfnamefont {M.}~\bibnamefont
  {Ziman}},\ }\href@noop {} {\emph {\bibinfo {title} {The mathematical language
  of quantum theory: from uncertainty to entanglement}}}\ (\bibinfo
  {publisher} {Cambridge University Press},\ \bibinfo {year}
  {2011})\BibitemShut {NoStop}%
\bibitem [{\citenamefont {Rio}\ \emph {et~al.}(2011)\citenamefont {Rio},
  \citenamefont {{\AA}berg}, \citenamefont {Renner}, \citenamefont {Dahlsten},\
  and\ \citenamefont {Vedral}}]{rio2011thermodynamic}%
  \BibitemOpen
  \bibfield  {author} {\bibinfo {author} {\bibfnamefont {L.~d.}\ \bibnamefont
  {Rio}}, \bibinfo {author} {\bibfnamefont {J.}~\bibnamefont {{\AA}berg}},
  \bibinfo {author} {\bibfnamefont {R.}~\bibnamefont {Renner}}, \bibinfo
  {author} {\bibfnamefont {O.}~\bibnamefont {Dahlsten}}, \ and\ \bibinfo
  {author} {\bibfnamefont {V.}~\bibnamefont {Vedral}},\ }\href
  {https://www.nature.com/articles/nature10123} {\bibfield  {journal} {\bibinfo
   {journal} {Nature}\ }\textbf {\bibinfo {volume} {474}},\ \bibinfo {pages}
  {61} (\bibinfo {year} {2011})}\BibitemShut {NoStop}%
\bibitem [{\citenamefont {Faist}\ \emph {et~al.}(2015)\citenamefont {Faist},
  \citenamefont {Dupuis}, \citenamefont {Oppenheim},\ and\ \citenamefont
  {Renner}}]{faist2015minimal}%
  \BibitemOpen
  \bibfield  {author} {\bibinfo {author} {\bibfnamefont {P.}~\bibnamefont
  {Faist}}, \bibinfo {author} {\bibfnamefont {F.}~\bibnamefont {Dupuis}},
  \bibinfo {author} {\bibfnamefont {J.}~\bibnamefont {Oppenheim}}, \ and\
  \bibinfo {author} {\bibfnamefont {R.}~\bibnamefont {Renner}},\ }\href
  {https://www.nature.com/articles/ncomms8669} {\bibfield  {journal} {\bibinfo
  {journal} {Nature communications}\ }\textbf {\bibinfo {volume} {6}},\
  \bibinfo {pages} {1} (\bibinfo {year} {2015})}\BibitemShut {NoStop}%
\bibitem [{\citenamefont {Lidar}\ \emph {et~al.}(1998)\citenamefont {Lidar},
  \citenamefont {Chuang},\ and\ \citenamefont {Whaley}}]{Lidar98}%
  \BibitemOpen
  \bibfield  {author} {\bibinfo {author} {\bibfnamefont {D.~A.}\ \bibnamefont
  {Lidar}}, \bibinfo {author} {\bibfnamefont {I.~L.}\ \bibnamefont {Chuang}}, \
  and\ \bibinfo {author} {\bibfnamefont {K.~B.}\ \bibnamefont {Whaley}},\
  }\href {\doibase 10.1103/PhysRevLett.81.2594} {\bibfield  {journal} {\bibinfo
   {journal} {Phys. Rev. Lett.}\ }\textbf {\bibinfo {volume} {81}},\ \bibinfo
  {pages} {2594} (\bibinfo {year} {1998})}\BibitemShut {NoStop}%
\bibitem [{\citenamefont {Kwiat}\ \emph {et~al.}(2000)\citenamefont {Kwiat},
  \citenamefont {Berglund}, \citenamefont {Altepeter},\ and\ \citenamefont
  {White}}]{Lidar2000}%
  \BibitemOpen
  \bibfield  {author} {\bibinfo {author} {\bibfnamefont {P.~G.}\ \bibnamefont
  {Kwiat}}, \bibinfo {author} {\bibfnamefont {A.~J.}\ \bibnamefont {Berglund}},
  \bibinfo {author} {\bibfnamefont {J.~B.}\ \bibnamefont {Altepeter}}, \ and\
  \bibinfo {author} {\bibfnamefont {A.~G.}\ \bibnamefont {White}},\ }\href
  {\doibase 10.1126/science.290.5491.498} {\bibfield  {journal} {\bibinfo
  {journal} {Science}\ }\textbf {\bibinfo {volume} {290}},\ \bibinfo {pages}
  {498} (\bibinfo {year} {2000})}\BibitemShut {NoStop}%
\bibitem [{\citenamefont {Lidar}\ and\ \citenamefont
  {Whaley}(2003)}]{Lidar2003}%
  \BibitemOpen
  \bibfield  {author} {\bibinfo {author} {\bibfnamefont {D.~A.}\ \bibnamefont
  {Lidar}}\ and\ \bibinfo {author} {\bibfnamefont {K.~B.}\ \bibnamefont
  {Whaley}},\ }in\ \href {\doibase 10.1007/3-540-44874-8_5} {\emph {\bibinfo
  {booktitle} {Irreversible Quantum Dynamics}}}\ (\bibinfo  {publisher}
  {Springer Berlin Heidelberg},\ \bibinfo {year} {2003})\ pp.\ \bibinfo {pages}
  {83--120}\BibitemShut {NoStop}%
\bibitem [{\citenamefont {Lidar}(2014)}]{Lidar2014}%
  \BibitemOpen
  \bibfield  {author} {\bibinfo {author} {\bibfnamefont {D.~A.}\ \bibnamefont
  {Lidar}},\ }in\ \href {\doibase 10.1002/9781118742631.ch11} {\emph {\bibinfo
  {booktitle} {Advances in Chemical Physics}}}\ (\bibinfo  {publisher} {John
  Wiley {\&} Sons, Inc.},\ \bibinfo {year} {2014})\ pp.\ \bibinfo {pages}
  {295--354}\BibitemShut {NoStop}%
\bibitem [{\citenamefont {Gu{\'{e}}rin}\ \emph {et~al.}(2019)\citenamefont
  {Gu{\'{e}}rin}, \citenamefont {Rubino},\ and\ \citenamefont
  {Brukner}}]{Gurin19}%
  \BibitemOpen
  \bibfield  {author} {\bibinfo {author} {\bibfnamefont {P.~A.}\ \bibnamefont
  {Gu{\'{e}}rin}}, \bibinfo {author} {\bibfnamefont {G.}~\bibnamefont
  {Rubino}}, \ and\ \bibinfo {author} {\bibfnamefont {{\v{C}}.}~\bibnamefont
  {Brukner}},\ }\href {\doibase 10.1103/PhysRevA.99.062317} {\bibfield
  {journal} {\bibinfo  {journal} {Phys. Rev. A}\ }\textbf {\bibinfo {volume}
  {99}},\ \bibinfo {pages} {062317} (\bibinfo {year} {2019})}\BibitemShut
  {NoStop}%
\bibitem [{\citenamefont {Kristj{\'{a}}nsson}\ \emph
  {et~al.}(2020)\citenamefont {Kristj{\'{a}}nsson}, \citenamefont {Chiribella},
  \citenamefont {Salek}, \citenamefont {Ebler},\ and\ \citenamefont
  {Wilson}}]{Kristjnsson20}%
  \BibitemOpen
  \bibfield  {author} {\bibinfo {author} {\bibfnamefont {H.}~\bibnamefont
  {Kristj{\'{a}}nsson}}, \bibinfo {author} {\bibfnamefont {G.}~\bibnamefont
  {Chiribella}}, \bibinfo {author} {\bibfnamefont {S.}~\bibnamefont {Salek}},
  \bibinfo {author} {\bibfnamefont {D.}~\bibnamefont {Ebler}}, \ and\ \bibinfo
  {author} {\bibfnamefont {M.}~\bibnamefont {Wilson}},\ }\href {\doibase
  10.1088/1367-2630/ab8ef7} {\bibfield  {journal} {\bibinfo  {journal} {New
  Journal of Physics}\ }\textbf {\bibinfo {volume} {22}},\ \bibinfo {pages}
  {073014} (\bibinfo {year} {2020})}\BibitemShut {NoStop}%
\bibitem [{\citenamefont {Buhrman}\ \emph {et~al.}(1999)\citenamefont
  {Buhrman}, \citenamefont {van Dam}, \citenamefont {H{\o}yer},\ and\
  \citenamefont {Tapp}}]{buhrman1999multiparty}%
  \BibitemOpen
  \bibfield  {author} {\bibinfo {author} {\bibfnamefont {H.}~\bibnamefont
  {Buhrman}}, \bibinfo {author} {\bibfnamefont {W.}~\bibnamefont {van Dam}},
  \bibinfo {author} {\bibfnamefont {P.}~\bibnamefont {H{\o}yer}}, \ and\
  \bibinfo {author} {\bibfnamefont {A.}~\bibnamefont {Tapp}},\ }\href {\doibase
  10.1103/PhysRevA.60.2737} {\bibfield  {journal} {\bibinfo  {journal}
  {Physical Review A}\ }\textbf {\bibinfo {volume} {60}},\ \bibinfo {pages}
  {2737} (\bibinfo {year} {1999})}\BibitemShut {NoStop}%
\bibitem [{\citenamefont {Brassard}\ \emph {et~al.}(2007)\citenamefont
  {Brassard}, \citenamefont {Broadbent}, \citenamefont {Fitzsimons},
  \citenamefont {Gambs},\ and\ \citenamefont {Tapp}}]{brassard2007anonymous}%
  \BibitemOpen
  \bibfield  {author} {\bibinfo {author} {\bibfnamefont {G.}~\bibnamefont
  {Brassard}}, \bibinfo {author} {\bibfnamefont {A.}~\bibnamefont {Broadbent}},
  \bibinfo {author} {\bibfnamefont {J.}~\bibnamefont {Fitzsimons}}, \bibinfo
  {author} {\bibfnamefont {S.}~\bibnamefont {Gambs}}, \ and\ \bibinfo {author}
  {\bibfnamefont {A.}~\bibnamefont {Tapp}},\ }in\ \href@noop {} {\emph
  {\bibinfo {booktitle} {International Conference on the Theory and Application
  of Cryptology and Information Security}}}\ (\bibinfo {organization}
  {Springer},\ \bibinfo {year} {2007})\ pp.\ \bibinfo {pages}
  {460--473}\BibitemShut {NoStop}%
\bibitem [{\citenamefont {Broadbent}\ \emph {et~al.}(2009)\citenamefont
  {Broadbent}, \citenamefont {Chouha},\ and\ \citenamefont
  {Tapp}}]{broadbent2009ghz}%
  \BibitemOpen
  \bibfield  {author} {\bibinfo {author} {\bibfnamefont {A.}~\bibnamefont
  {Broadbent}}, \bibinfo {author} {\bibfnamefont {P.-R.}\ \bibnamefont
  {Chouha}}, \ and\ \bibinfo {author} {\bibfnamefont {A.}~\bibnamefont
  {Tapp}},\ }in\ \href {https://ieeexplore.ieee.org/document/4782923} {\emph
  {\bibinfo {booktitle} {2009 Third International Conference on Quantum, Nano
  and Micro Technologies}}}\ (\bibinfo {organization} {IEEE},\ \bibinfo {year}
  {2009})\ pp.\ \bibinfo {pages} {59--62}\BibitemShut {NoStop}%
\bibitem [{\citenamefont {Komar}\ \emph {et~al.}(2014)\citenamefont {Komar},
  \citenamefont {Kessler}, \citenamefont {Bishof}, \citenamefont {Jiang},
  \citenamefont {S{\o}rensen}, \citenamefont {Ye},\ and\ \citenamefont
  {Lukin}}]{komar2014quantum}%
  \BibitemOpen
  \bibfield  {author} {\bibinfo {author} {\bibfnamefont {P.}~\bibnamefont
  {Komar}}, \bibinfo {author} {\bibfnamefont {E.~M.}\ \bibnamefont {Kessler}},
  \bibinfo {author} {\bibfnamefont {M.}~\bibnamefont {Bishof}}, \bibinfo
  {author} {\bibfnamefont {L.}~\bibnamefont {Jiang}}, \bibinfo {author}
  {\bibfnamefont {A.~S.}\ \bibnamefont {S{\o}rensen}}, \bibinfo {author}
  {\bibfnamefont {J.}~\bibnamefont {Ye}}, \ and\ \bibinfo {author}
  {\bibfnamefont {M.~D.}\ \bibnamefont {Lukin}},\ }\href
  {https://www.nature.com/articles/nphys3000} {\bibfield  {journal} {\bibinfo
  {journal} {Nature Physics}\ }\textbf {\bibinfo {volume} {10}},\ \bibinfo
  {pages} {582} (\bibinfo {year} {2014})}\BibitemShut {NoStop}%
\bibitem [{\citenamefont {Eldredge}\ \emph {et~al.}(2018)\citenamefont
  {Eldredge}, \citenamefont {Foss-Feig}, \citenamefont {Gross}, \citenamefont
  {Rolston},\ and\ \citenamefont {Gorshkov}}]{eldredge2018optimal}%
  \BibitemOpen
  \bibfield  {author} {\bibinfo {author} {\bibfnamefont {Z.}~\bibnamefont
  {Eldredge}}, \bibinfo {author} {\bibfnamefont {M.}~\bibnamefont {Foss-Feig}},
  \bibinfo {author} {\bibfnamefont {J.~A.}\ \bibnamefont {Gross}}, \bibinfo
  {author} {\bibfnamefont {S.~L.}\ \bibnamefont {Rolston}}, \ and\ \bibinfo
  {author} {\bibfnamefont {A.~V.}\ \bibnamefont {Gorshkov}},\ }\href {\doibase
  10.1103/PhysRevA.97.042337} {\bibfield  {journal} {\bibinfo  {journal}
  {Physical Review A}\ }\textbf {\bibinfo {volume} {97}},\ \bibinfo {pages}
  {042337} (\bibinfo {year} {2018})}\BibitemShut {NoStop}%
\bibitem [{\citenamefont {Qian}\ \emph {et~al.}(2019)\citenamefont {Qian},
  \citenamefont {Eldredge}, \citenamefont {Ge}, \citenamefont {Pagano},
  \citenamefont {Monroe}, \citenamefont {Porto},\ and\ \citenamefont
  {Gorshkov}}]{qian2019heisenberg}%
  \BibitemOpen
  \bibfield  {author} {\bibinfo {author} {\bibfnamefont {K.}~\bibnamefont
  {Qian}}, \bibinfo {author} {\bibfnamefont {Z.}~\bibnamefont {Eldredge}},
  \bibinfo {author} {\bibfnamefont {W.}~\bibnamefont {Ge}}, \bibinfo {author}
  {\bibfnamefont {G.}~\bibnamefont {Pagano}}, \bibinfo {author} {\bibfnamefont
  {C.}~\bibnamefont {Monroe}}, \bibinfo {author} {\bibfnamefont {J.~V.}\
  \bibnamefont {Porto}}, \ and\ \bibinfo {author} {\bibfnamefont {A.~V.}\
  \bibnamefont {Gorshkov}},\ }\href {\doibase 10.1103/PhysRevA.100.042304}
  {\bibfield  {journal} {\bibinfo  {journal} {Physical Review A}\ }\textbf
  {\bibinfo {volume} {100}},\ \bibinfo {pages} {042304} (\bibinfo {year}
  {2019})}\BibitemShut {NoStop}%
\bibitem [{\citenamefont {Sen(De)}\ and\ \citenamefont {Sen}(2010)}]{ggm1}%
  \BibitemOpen
  \bibfield  {author} {\bibinfo {author} {\bibfnamefont {A.}~\bibnamefont
  {Sen(De)}}\ and\ \bibinfo {author} {\bibfnamefont {U.}~\bibnamefont {Sen}},\
  }\href {\doibase 10.1103/physreva.81.012308} {\bibfield  {journal} {\bibinfo
  {journal} {Physical Review A}\ }\textbf {\bibinfo {volume} {81}} (\bibinfo
  {year} {2010}),\ 10.1103/physreva.81.012308}\BibitemShut {NoStop}%
\bibitem [{\citenamefont {SHIMONY}(1995)}]{ggm2}%
  \BibitemOpen
  \bibfield  {author} {\bibinfo {author} {\bibfnamefont {A.}~\bibnamefont
  {SHIMONY}},\ }\href {\doibase 10.1111/j.1749-6632.1995.tb39008.x} {\bibfield
  {journal} {\bibinfo  {journal} {Annals of the New York Academy of Sciences}\
  }\textbf {\bibinfo {volume} {755}},\ \bibinfo {pages} {675} (\bibinfo {year}
  {1995})}\BibitemShut {NoStop}%
\bibitem [{\citenamefont {Wei}\ and\ \citenamefont {Goldbart}(2003)}]{ggm3}%
  \BibitemOpen
  \bibfield  {author} {\bibinfo {author} {\bibfnamefont {T.-C.}\ \bibnamefont
  {Wei}}\ and\ \bibinfo {author} {\bibfnamefont {P.~M.}\ \bibnamefont
  {Goldbart}},\ }\href {\doibase 10.1103/physreva.68.042307} {\bibfield
  {journal} {\bibinfo  {journal} {Physical Review A}\ }\textbf {\bibinfo
  {volume} {68}} (\bibinfo {year} {2003}),\
  10.1103/physreva.68.042307}\BibitemShut {NoStop}%
\bibitem [{\citenamefont {Blasone}\ \emph {et~al.}(2008)\citenamefont
  {Blasone}, \citenamefont {Dell'Anno}, \citenamefont {Siena},\ and\
  \citenamefont {Illuminati}}]{ggm4}%
  \BibitemOpen
  \bibfield  {author} {\bibinfo {author} {\bibfnamefont {M.}~\bibnamefont
  {Blasone}}, \bibinfo {author} {\bibfnamefont {F.}~\bibnamefont {Dell'Anno}},
  \bibinfo {author} {\bibfnamefont {S.~D.}\ \bibnamefont {Siena}}, \ and\
  \bibinfo {author} {\bibfnamefont {F.}~\bibnamefont {Illuminati}},\ }\href
  {\doibase 10.1103/physreva.77.062304} {\bibfield  {journal} {\bibinfo
  {journal} {Physical Review A}\ }\textbf {\bibinfo {volume} {77}} (\bibinfo
  {year} {2008}),\ 10.1103/physreva.77.062304}\BibitemShut {NoStop}%
\bibitem [{\citenamefont {Das}\ \emph {et~al.}(2016)\citenamefont {Das},
  \citenamefont {Roy}, \citenamefont {Bagchi}, \citenamefont {Misra},
  \citenamefont {Sen(De)},\ and\ \citenamefont {Sen}}]{ggmmixed}%
  \BibitemOpen
  \bibfield  {author} {\bibinfo {author} {\bibfnamefont {T.}~\bibnamefont
  {Das}}, \bibinfo {author} {\bibfnamefont {S.~S.}\ \bibnamefont {Roy}},
  \bibinfo {author} {\bibfnamefont {S.}~\bibnamefont {Bagchi}}, \bibinfo
  {author} {\bibfnamefont {A.}~\bibnamefont {Misra}}, \bibinfo {author}
  {\bibfnamefont {A.}~\bibnamefont {Sen(De)}}, \ and\ \bibinfo {author}
  {\bibfnamefont {U.}~\bibnamefont {Sen}},\ }\href {\doibase
  10.1103/physreva.94.022336} {\bibfield  {journal} {\bibinfo  {journal}
  {Physical Review A}\ }\textbf {\bibinfo {volume} {94}} (\bibinfo {year}
  {2016}),\ 10.1103/physreva.94.022336}\BibitemShut {NoStop}%
\end{thebibliography}%
\begin{widetext}

\medskip

  \begin{center}
\large{\textbf{Supplementary Information}}
\end{center}
\section{Private classical communication}
\subsection{Proof  of Theorem 1,  \textit{if} part}
To prove the \textit{if} part for Theorem. 1, here we  show that the sender, Alice, can convey $\log d$ bit of classical information privately to the receiver Bob by encoding the classical information $x\in\{0,1,\dots,d-1\}$ with a local operation on one side of the shared maximally entangled state $\ket{\Phi^{+}}\in  \mathbb{C}^{d} \otimes \C^d$.
Consider $d$ orthogonal information-erasing channels $\{\mathcal{E}_{j}\}_{j=0}^{d-1}$ acting on the set of density matrices $\mathcal{D}(\mathcal{H}_{d})$ over a $d$-dimensional Hilbert space $\mathcal{H}_{d}$. A set of Kraus  for the channel $\mathcal{E}_{j}$ is
\begin{equation}\label{eq1}
    E_{i_j}^{(j)}=\ketbra{j}{i_j}, \qquad i_j\in\{0,\dots, d-1\} \, .
\end{equation}
  We now add quantum control over the order of the $d$ information-erasing channels, allowing a $d$-dimensional control system to  select one out of $d$ cyclic permutations.
  The resulting channel is  \cite{procopio2020sending,Sazim21,Chiribella21} 
\begin{equation}
    \mathcal{S}(\mathcal{E}_{0}, \mathcal{E}_{1}, \cdots, \mathcal{E}_{d-1})(\rho_{\text{AC}})=\sum_{i_{0},i_{1},\cdots,i_{d-1}}S_{i_{0},i_{1},\cdots,i_{d-1}}\rho_{\text{AC}}S_{i_{0},i_{1},\cdots,i_{d-1}}^{\dagger}
\end{equation}
with  Kraus operators
\begin{equation}\label{eq2}
    S_{i_{0},i_{1},\cdots,i_{d-1}}=   \sum_{j=0}^{d-1}   \,   E^{(j)}_{i_j}   E^{( j \oplus 1)}_{i_{j\oplus1}}   \cdots  E^{(j \oplus (d-1) )}_{i_{j\oplus  (d-1)}} \otimes |j\>\<j|  \, ,
\end{equation}
where $\oplus$ denotes the sum modulo $d$.  Using Eq. (\ref{eq1}), we  rewrite Eq. (\ref{eq2}) in the following compact form
\begin{eqnarray}\label{eq3}
S_{i_{0},i_{1},\cdots,i_{d-1}}&=    \sum_{j=0}^{d-1}   \,   s_j~ |j\>\<i_{j  \ominus 1}|   \otimes |j\>\<j|  \, , \qquad \qquad s_j:    =  \prod_{l\not = j \ominus 1}   ~  \<  i_l  |  l\oplus 1\>    \, ,
\end{eqnarray}
where $\ominus$ denotes subtraction modulo $d$.  

At this point, there are three possible cases:
\begin{enumerate}
\item  $i_{l} =   l \oplus 1 $  for all  $l\in \{0,\dots, d-1\}$, 
\item   $i_{l} =   l \oplus  1 $  for all  $l$ except one,  or equivalently,  $i_{j\ominus 1}   =  j$  for all $j$ except one, 
\item  $i_{l} \not =   l \oplus  1  $  for two or more values of $l$.
\end{enumerate}
In Case 1, the Kraus operator is $S_{1,2,\dots, d-1, 0}  =   \sum_{j=0}^{d-1}  |j\>\<j| \otimes |j\>\<j|=:  P_0$.   
In Case 2, the Kraus operators are of the form  $|j\>\<i_{j\ominus 1} |\otimes |j\>\<j|$, where $j$ is the one index for which  $i_{j} \not =  j\ominus  1 $. In Case 3, the Kraus operator $S_{i_0,i_1, \dots,  i_{d-1}}$ is zero. 
Summarizing, the controlled-order channel is given by 
\begin{align}\label{orderclear}
    \mathcal{S}(\mathcal{E}_{0}, \mathcal{E}_{1}, \cdots, \mathcal{E}_{d-1})(\rho_{\text{AC}})   =  P_0  \rho_{\text{AC}}  P_0  +  \sum_{j=0}^{d-1}  \sum_{l\not  =  j}    \<  l|\< j|  \rho_{\text{AC}}  |l\>  |j\>  ~  |j\>\<j|  \otimes |j\>\<j|  \, . 
\end{align} 

The same channel is obtained from a controlled choice of the information-erasing channels $\{\map E_{j}\}_{j=0}^{d-1},$ provided that one adopts the extended channels $\{\widetilde{\map E_{j}}\}_{j=0}^{d-1}$ with Kraus operators
$\widetilde E^{(j)}_{i_j}   = |j\>\<i_j|  +    \<i_j| j\> \,  |{\rm triv}\>\<{\rm triv}|$.  Indeed, the controlled-choice channel is given by \cite{abbott2020communication, chiribella2019quantum} 
\begin{equation}\label{e6}
    \mathcal{T}(\widetilde{\mathcal{E}}_{0}, \widetilde{\mathcal{E}}_{1}, \cdots, \widetilde{\mathcal{E}}_{d-1})(\rho_{\text{AC}})=\sum_{i_{0},i_{1},\cdots,i_{d-1}}T_{i_{0},i_{1},\cdots,i_{d-1}}\rho_{\text{AC}}T_{i_{0},i_{1},\cdots,i_{d-1}}^{\dagger}
\end{equation}
with  Kraus operators
\begin{equation}\label{T}
    T_{i_{0},i_{1},\cdots,i_{d-1}}=\sum_{j=0}^{d-1}    t_j \,   |j\>\<i_j|   \otimes |j\>\<j|   \, ,
    \qquad \qquad   t_j:  =   \prod_{l\not  = j}   \alpha_{i_l}^{(l)}    \, ,
\end{equation}
where $\alpha^{(l)}_{i_l}$ are the amplitudes associated to the $l$-th channel.  If we set $\alpha^{(l)}_{i_l}  =  \<i_l|  l\>$, then there are three possible cases: 
\begin{enumerate}
\item  $i_l  =  l$ for every $l\in  \{0,\dots, d-1\}$, 
\item $i_l =  l $ for every $l$ except one,  
\item $i_l\not  =  l$ for two or more values of $l$. 
\end{enumerate}
In Case 1, the Kraus operator is $T_{0,1,\dots,  d-1}  =  P_0$.  In Case 2, the Kraus operator is $|l\>\<i_l| \otimes |l\>\<l|$, where $l$ is the one value such that $i_l \not =  l $. In Case 3, the Kraus operator $T_{i_0,i_1,\dots ,  i_{d-1}}$ is zero.  
 Summarizing, we obtained the relation  
\begin{align}\label{e8}
    \mathcal{S}(\mathcal{E}_{0}, \mathcal{E}_{1}, \cdots, \mathcal{E}_{d-1})   =    \mathcal{T}(\widetilde{\mathcal{E}}_{0}, \widetilde{\mathcal{E}}_{1}, \cdots, \widetilde{\mathcal{E}}_{d-1})  =:  \map K  \, ,
\end{align}
which proves Eq. (5) in the main text and generalizes it to $d\ge 2$.  In the following we will treat the controlled-order and controlled-choice in a unified way, referring to the channel $\map K$.

Note that the channel $\map K$ has a decoherence free subspace spanned by the vectors $|j\>  \otimes |j\>$,  $j\in  \{0,\dots, d-1\}$.  Hence, it preserves the maximally entangled states
$$\ket{\Phi_{x}}_{\rm AC}= \frac{1}{\sqrt{d}} \sum_{j=0}^{d-1} \, e^{\frac{2\pi  i\,  j x}{d}}  \ket{j}_{\rm A} \otimes \ket{j}_{\text{C}}, \qquad x\in\{0,1,\cdots,d-1\}.$$ 

Since  all these states are  maximally entangled, they are locally preparable from the canonical maximally entangled state $\ket{\Phi^+}_{\text{AC}}= \sum_{j=0}^{d-1} \, \ket{j}_{\rm A} \otimes \ket{j}_{\text{C}}/\sqrt d  $ by means of suitable local unitary operations on Alice's side.  
Therefore, Alice can encode  $\log d$ bits by locally transforming the pre-shared $\ket{\Phi^{+}}_{\text{AC}}$ one of these $d$ maximally entangled states. Then, she can send her part of the state to Bob through the controlled quantum channels.   After the transmission,  Bob and Charlie will  share one of these $d$ maximally entangled states. 


The states $\{  |\Phi_x\>_{\rm BC}\}$ can be perfectly discriminated under one-way LOCC.  The protocol is simple:  Charlie and Bob  perform two independent measurements on the Fourier basis $\{|f_m\>= \sum_{j}   e^{\frac{2\pi i\,  jm}d}|j\>/\sqrt d \}_{m=0}^{d-1}$, and  Charlie  communicates his outcome to Bob.  The joint probability distribution of their outcomes $m_{\rm B}$ and $m_{\rm C}$ is $p(m_{\rm B},  m_{\rm C})  =  \delta_{m_{\rm B}  +  m_{\rm C}  ,  x}/d$, and allows Bob to infer the value of the message $x$ from his outcome $m_{\rm B}$ and from Charlie's  $m_{\rm C}$.     At the same time,  Charlie will remain completely blind about the transmitted message, as his measurement outcome alone contains no information about $x$.

\subsection{Proof  of Theorem 1,  \textit{only if} part}
In the previous section we have shown that Alice can communicate $\log d$ bits of classical information privately to Bob via $d$ controlled pin-maps, provided that she initially shares a   $d$-dimensional maximally entangled state    with the controller Charlie. We now prove that   maximally entangled states are strictly necessary for this communication task. Precisely, we will show that a perfect  communication of $\log d$ classical bits through coherently controlled information-erasing channels is possible only if Alice and Charlie  initially share a bipartite state $\rho^*_{\rm AC}$ that can be locally converted into the  $d$-dimensional maximally entangled state  $|\Phi^+\>_{\rm AC}$. \par

The proof is rather complex and makes use of a series of lemmas, proved in the following.   All throughout this section, we will use the following notation:   $\rho_{\rm AC}^*$ will be the state shared by Alice and Charlie at the beginning of the protocol,  $\map A_x$ be the local operation used by Alice to encode message $x$,    $\rho_{x, {\rm AC}}: =  (\map A_x\otimes \map I_{\rm C})  (  \rho_{\rm AC}^*)$ will be the joint state of Alice's and Charlie's systems right before transmission through the controlled channel, and   $\rho_{x,{\rm BC}}'$ will be the state of Bob's and Charlie's systems right after transmission. 

\begin{lemma}\label{lem:maxentoutput}
Perfect communication of $\log d$ bits through coherently controlled information-erasing channels is possible only if  the final states $\{\rho_{x,{\rm BC}}'\}_{x=0}^{d-1}$ are pure, orthogonal, and maximally entangled. 
\end{lemma}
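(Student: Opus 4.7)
The plan is to exploit the explicit Kraus form of the channel $\map K$ obtained in Eqs.~(\ref{orderclear}) and (\ref{e8}). Since every Kraus operator of $\map K$ has image inside the $d$-dimensional subspace $\mathcal V := \Span\{|jj\rangle_{\rm BC}:j=0,\dots,d-1\}$, every output state $\rho'_{x,{\rm BC}}=\map K(\rho_{x,{\rm AC}})$ has support inside $\mathcal V$, regardless of the initial state $\rho^*_{\rm AC}$ or of Alice's encoding operation $\map A_x$. This subspace structure is the backbone of the argument.

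Next, I would invoke the standard fact that perfect discrimination of $d$ states by \emph{any} measurement (in particular by the one-way LOCC protocol in which Charlie measures and classically signals Bob) forces the states to have pairwise orthogonal supports. Indeed, if $\{M_x\}$ is a POVM satisfying $\Tr[M_x\,\rho'_{y,{\rm BC}}]=\delta_{xy}$, then $\mathrm{supp}(\rho'_{y,{\rm BC}})\subset\ker M_x$ for $y\neq x$, while $\mathrm{supp}(\rho'_{x,{\rm BC}})\subset(\ker M_x)^{\perp}$, yielding pairwise orthogonality. Having $d$ pairwise orthogonal nonzero supports inside the $d$-dimensional space $\mathcal V$ forces each to be one-dimensional, so each $\rho'_{x,{\rm BC}}=|\psi_x\rangle\langle\psi_x|$ is pure and $\{|\psi_x\rangle\}_{x=0}^{d-1}$ is an orthonormal basis of $\mathcal V$. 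Expanding $|\psi_x\rangle=\sum_{j=0}^{d-1}U_{xj}\,|jj\rangle_{\rm BC}$, the coefficient matrix $U=(U_{xj})$ is unitary.

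The last step imports the privacy requirement. Since Charlie should learn nothing about $x$ from his quantum register alone, his marginal $\rho'_{x,{\rm C}}=\Tr_{\rm B}[|\psi_x\rangle\langle\psi_x|]=\sum_{j}|U_{xj}|^{2}\,|j\rangle\langle j|$ must be independent of $x$; equivalently, $|U_{xj}|^{2}$ is a function of $j$ alone. Combined with the column normalization $\sum_x|U_{xj}|^{2}=1$ coming from unitarity of $U$, this forces $|U_{xj}|^{2}=1/d$ for every $x,j$, so that each $|\psi_x\rangle=\frac{1}{\sqrt d}\sum_{j}e^{i\phi_{x,j}}|jj\rangle_{\rm BC}$ is maximally entangled, as required.

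The main obstacle I anticipate is formalizing the privacy hypothesis in an operationally airtight way. Charlie's active role includes measuring and signalling Bob, so one might be tempted to phrase privacy only at the level of classical outcomes. However, privacy must hold against \emph{any} strategy available to Charlie --- including the strategy of refusing to measure and keeping his quantum share untouched --- so the strongest, strategy-independent privacy condition is precisely the $x$-independence of the quantum marginal $\rho'_{x,{\rm C}}$. Once this is pinned down, the remainder of the argument reduces to the elementary linear-algebra observation about unitary matrices with constant-modulus columns carried out above.
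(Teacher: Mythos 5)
Your proposal is correct, and its first two steps (confining the output to the $d$-dimensional subspace $\Span\{|jj\rangle_{\rm BC}\}$, then using perfect distinguishability to force $d$ orthogonal pure states spanning it) coincide with the paper's argument. Where you genuinely diverge is the final step. You obtain $|U_{xj}|^2=1/d$ by \emph{assuming} the privacy requirement in its strongest form -- $x$-independence of Charlie's reduced state -- and combining it with column normalization of the unitary coefficient matrix. The paper instead derives the $x$-independence of $|c_{x,j}|^2$ structurally: sandwiching the output with $I_{\rm B}\otimes|j\rangle\langle j|_{\rm C}$ commutes through the controlled channel (its Kraus operators are block-diagonal in the control basis), reducing the $j$-th block to $\map E_j\bigl(\map A_x(\sigma_{j,{\rm A}})\bigr)\otimes|j\rangle\langle j|$, whose weight $\Tr[\sigma_{j,{\rm A}}]$ is fixed by the initially shared state and untouched by Alice's trace-preserving encoding $\map A_x$. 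This buys something concrete: the paper's lemma holds for \emph{any} perfect transmission of $\log d$ bits, with no privacy hypothesis, and privacy of the output marginals then comes out as a consequence rather than an input. Your version, by contrast, proves the lemma only under the added privacy assumption -- which is enough for the ``only if'' direction of Theorem 1 (whose hypothesis includes no leakage), but is formally weaker than the lemma as stated, and it leans on the modeling choice you yourself flag, namely that privacy must be read as full $x$-independence of Charlie's quantum marginal rather than of his measurement outcomes in the honest protocol. If you want your route to stand on its own, you should either add the privacy clause to the lemma's hypothesis or replace that step with the paper's projection argument.
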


\Proof Let  $\map C$ be either the controlled-order  channel     $\mathcal{S}(\mathcal{E}_{0}, \mathcal{E}_{1}, \cdots, \mathcal{E}_{d-1})$ 
   or the controlled-choice channel    $\mathcal{T}(\widetilde{\mathcal{E}}_{0}, \widetilde{\mathcal{E}}_{1}, \cdots, \widetilde{\mathcal{E}}_{d-1})$, so that $\rho_{x,{\rm BC}}'  =  \map C   (\rho_{x,{\rm AC}})$.    
   
   Now,  $\map C$ transforms every density matrix into a density matrix with support contained in the subspace $\spc H_0  :  = \Span \left(  |j\>\otimes |j\>\, , \,  j\in  \{0, 1,\dots, d-1\}  \right)$.    This fact can be readily checked from Eq. (\ref{eq3}) and (\ref{T}) in the cases of controlled order and controlled choice, respectively.  
 
  Since the subspace  $\spc H_0$ is $d$-dimensional, the only way to achieve the perfect communication of $\log d$ bits is that the   states   $\{\rho_{x,{\rm BC}}'\}$ are pure and orthogonal, say $\rho_{x,{\rm BC}}'  =  |\Phi_x\>\<\Phi_x|_{\rm BC}$ where $\{ |\Phi_x\>_{\rm BC}\}_{x=0}^{d-1}$ is an orthonormal basis for the subspace $\spc H_0$.

 We now show that each state $|\Phi_x\>$ must be maximally entangled.  By definition, we have 
 \begin{align}\label{aaa}
|\Phi_x\>\<\Phi_x|_{\rm BC}    =  \map C    \Big(   (\map A_x\otimes \map I_{\rm C}) \, (\rho_{{\rm AC}}^*    \Big)  \, .
\end{align}
 Let us write  $|\Phi_x\>_{\rm BC}  =  \sum_j\,  c_{x,j}  \,  |j\>_{\rm B}\otimes |j\>_{\rm C}$.     Multiplying both sides of Eq. (\ref{aaa}) by $I_{\rm B}  \otimes |j\>\<j|_{\rm C}$ on the left and on the right, we obtain 
      \begin{align}
\nonumber       |c_{x,j}  |^2    \,  |j\>\<j|_{\rm B} \otimes |j\>\<j|_{\rm C} & =  (I_{\rm B}  \otimes |j\>\<j|_{\rm C}  ) \,    \map C  
  \Big(      (\map A_x\otimes \map I_{\rm C} \, (\rho_{{\rm AC}}^*)    \Big)    \, (I_{\rm B}  \otimes |j\>\<j|_{\rm C} )  \\
\nonumber   & =     \map C  
  \Big(    (I_{\rm A}  \otimes |j\>\<j|_{\rm C}  ) \,      (\map A_x\otimes \map I_{\rm C} \, (\rho_{{\rm AC}}^*)   \,  (I_{\rm A}  \otimes |j\>\<j|_{\rm C}  )  \Big)   \\
  \nonumber   & =     \map C  
  \Big(       \map A_x  (\sigma_{j,{\rm A} })    \otimes |j\>\<j|_{\rm C}       \Big)  \qquad \qquad \sigma_{j,{\rm A}}:  =   (  I_{\rm A} \otimes  \<  j|_{\rm C}  )  \,  \rho_{\rm AC}^* \, (  I_{\rm A} \otimes  | j\>_{\rm C\>}  )  \\
   \nonumber  &  =     \map E_j   
        \map A_x    (  \sigma_{j,{\rm A}})  \otimes |j\>\<j|     \\
       &  =  p_j ~ |j\>\<j|_{\rm B}\otimes |j\>\<j|_{\rm C}    \,,   \qquad  \qquad p_j:  =   \Tr [ \sigma_{j,\rm A}] \, ,      
    \end{align}
 where the second equality follows from the expression of the Kraus operators of $\map K$ (Eqs. (\ref{eq3}) and (\ref{T}) for the controlled-order and controlled-choice, respectively),  the forth equality follows from the fact that $\map K$ is a controlled information-erasing channel, and the fifth equation follows from the fact that $\map A_x$ is trace-preserving.

Since $j$ and $x$ are arbitrary, we conclude that $|c_{x,j}|^2  =  p_j$ for every $x$ and $j$.  Now, recall that the vectors $\{|\Phi_x\>\}$ form an orthonormal basis for the subspace $\spc H_0$, and therefore  $\sum_{x=0}^{d-1}  |\Phi_x\>\<\Phi_x|  = \sum_{j  =  0}^{d-1} |j\>\<j| \otimes |j\>\<j|$.  Multiplying both sides of this equation  by $\<j|_{\rm C}$ on the left and $|j\>_{\rm C}$ on the right, we obtain  
\begin{align}
\sum_{x=0}^{d-1}   |c_{x,j}|^2   |j\>\<j|     =  |j\>\<j| \, ,  
\end{align}
which combined with the fact that $|c_{x,j}|^2$ is independent of $x$,  implies $|c_{x,j}|^2  =    1/d$ for every $j$.  In conclusion, the states $|\Phi_x\>$ are maximally entangled.  \qed

\medskip

 To continue the proof, we consider separately the cases of the controlled order and the controlled choice.  

\medskip

{\bf Proof for controlled order.}     The proof uses the following lemma:

 \begin{lemma}\label{lem:maxentorder}
Perfect communication of $\log d$ bits with  $d$ information-erasing channels in a controlled order is possible only if the initial state $\rho_{\rm AC}^*$ is locally convertible into a $d$-dimensional maximally entangled state.  
 \end{lemma}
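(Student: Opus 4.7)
The plan is to build directly on Lemma~\ref{lem:maxentoutput}, which already forces the post-channel state to be pure, orthogonal and maximally entangled: $\rho'_{x,\mathrm{BC}} = |\Phi_x\rangle\langle\Phi_x|$ with $|\Phi_x\rangle = d^{-1/2}\sum_j e^{i\phi_{x,j}}|j\rangle|j\rangle$ supported on the decoherence-free subspace $\mathcal{H}_0 = \mathsf{Span}\{|j\rangle|j\rangle\}$. I would then compare this known output against the explicit expression~(\ref{orderclear}) for the controlled-order channel and read off what $\rho_{x,\mathrm{AC}}$, and therefore $\rho^*_{\mathrm{AC}}$, must look like.

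First, I would match matrix elements in the product basis $\{|l\rangle_{\mathrm{A}}|j\rangle_{\mathrm{C}}\}$. From Eq.~(\ref{orderclear}) the off-diagonal entries of $\rho'_{x,\mathrm{BC}}$ come entirely from $P_0 \rho_{x,\mathrm{AC}} P_0$, giving $\langle j,j|\rho_{x,\mathrm{AC}}|k,k\rangle = d^{-1}e^{i(\phi_{x,j}-\phi_{x,k})}$ for $j\neq k$, while the diagonal entries pick up the second sum, yielding $\langle j,j|\rho_{x,\mathrm{AC}}|j,j\rangle + \sum_{l\neq j}\langle l,j|\rho_{x,\mathrm{AC}}|l,j\rangle = 1/d$. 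The second identity is equivalent to $\langle j|\rho_{x,\mathrm{C}}|j\rangle = 1/d$, and since a local operation on $\mathrm{A}$ cannot alter the $\mathrm{C}$-marginal this already forces $\rho^*_{\mathrm{C}} = I/d$.

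Next, I would apply Cauchy-Schwarz for positive matrices, $d^{-2} = |\langle j,j|\rho_{x,\mathrm{AC}}|k,k\rangle|^2 \leq \langle j,j|\rho_{x,\mathrm{AC}}|j,j\rangle\,\langle k,k|\rho_{x,\mathrm{AC}}|k,k\rangle$, combined with the pointwise bound $\langle j,j|\rho_{x,\mathrm{AC}}|j,j\rangle \leq 1/d$ coming from the diagonal sum identity. Both inequalities must be saturated, giving $\langle j,j|\rho_{x,\mathrm{AC}}|j,j\rangle = 1/d$ and $\langle l,j|\rho_{x,\mathrm{AC}}|l,j\rangle = 0$ for all $l\neq j$. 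Positivity then confines the support of $\rho_{x,\mathrm{AC}}$ to $\mathcal{H}_0$, and the saturated Cauchy-Schwarz inequalities on the surviving $d\times d$ block (constant diagonal $1/d$, all off-diagonals of modulus $1/d$) force it to have rank one. Hence $\rho_{x,\mathrm{AC}} = |\psi_x\rangle\langle\psi_x|$ with $|\psi_x\rangle = d^{-1/2}\sum_j e^{i\theta_{x,j}}|j\rangle_{\mathrm{A}}|j\rangle_{\mathrm{C}}$, a maximally entangled pure state.

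Finally, since $(\mathcal{A}_x \otimes \mathcal{I}_{\mathrm{C}})(\rho^*_{\mathrm{AC}}) = |\psi_x\rangle\langle\psi_x|$ (after tracing out any auxiliary system Alice retains, which amounts to composing $\mathcal{A}_x$ with a partial trace and still leaves a CPTP local operation on $\mathrm{A}$), Alice's encoding $\mathcal{A}_x$ itself witnesses the local conversion required by the lemma. The step I expect to be the main obstacle is the rigidity argument of the third paragraph: turning simultaneous Cauchy-Schwarz saturation on every off-diagonal of a PSD matrix with uniform diagonal into a genuine rank-one conclusion needs more than scalar bookkeeping, and I would handle it by writing the $d\times d$ block in its spectral decomposition and showing that orthogonality of any two rank-one components would be incompatible with all off-diagonals having modulus one. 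Everything else reduces to matrix bookkeeping against the explicit formula~(\ref{orderclear}).
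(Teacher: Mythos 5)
Your proposal is correct and shares the paper's overall skeleton (invoke Lemma~\ref{lem:maxentoutput}, analyse Eq.~(\ref{orderclear}), conclude the pre-transmission states are maximally entangled, then appeal to LOCC monotonicity), but the key middle step is executed differently. The paper argues structurally: since the output $|\Phi_x\rangle\langle\Phi_x|$ is pure and entangled while the second sum in Eq.~(\ref{orderclear}) is a positive combination of product states $|j\rangle\langle j|\otimes|j\rangle\langle j|$, those separable terms must vanish, leaving $|\Phi_x\rangle\langle\Phi_x| = P_0\,\rho_{x,\mathrm{AC}}\,P_0$; normalization together with $P_0$ being a projector then forces $\rho_{x,\mathrm{AC}} = |\Phi_x\rangle\langle\Phi_x|$ in two lines. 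Your route reaches the same identity by explicit matrix-element matching plus Cauchy--Schwarz, which is more computational but also makes fully explicit \emph{why} the separable terms must vanish (saturation forces $\langle l,j|\rho_{x,\mathrm{AC}}|l,j\rangle=0$ for $l\neq j$), so it is arguably more self-contained. One remark: the rigidity step you flag as the main obstacle is not actually needed. Once you have matched the off-diagonals, you know $\langle j,j|\rho_{x,\mathrm{AC}}|k,k\rangle = d^{-1}e^{i(\phi_{x,j}-\phi_{x,k})}$ for all $j\neq k$, and the saturation argument gives the diagonal entries equal to $1/d$ and kills everything outside $\mathcal{H}_0$; at that point every entry of the surviving $d\times d$ block is determined and the block is visibly equal to $|\Phi_x\rangle\langle\Phi_x|$, so no spectral or Gram-vector argument is required. (If you did want the abstract rank-one statement, the clean version is that saturation of $|\langle u|\rho|v\rangle|^2 = \langle u|\rho|u\rangle\langle v|\rho|v\rangle$ forces $\rho^{1/2}|u\rangle \parallel \rho^{1/2}|v\rangle$, and pairwise parallelism over a basis of the support gives rank one.) Your final paragraph correctly closes the argument, matching the paper's appeal to the fact that local operations cannot increase entanglement.
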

  
\Proof   By Lemma \ref{lem:maxentoutput}, perfect communication is possible only if the states $\rho'_{x, \rm BC}$ are pure,  orthogonal, and maximally entangled.  Then, one has  $ \rho_{ x,{\rm BC}}'  = |\Phi_x\>\<\Phi_x|_{ {\rm BC}}  = \map S  (\map E_0, \map E_1,\dots, \map E_{d-1})  \,  (\rho_{x, {\rm AC}})$.  

A necessary condition for  the state $\map S  (\map E_0, \map E_1,\dots, \map E_{d-1})  \,  (\rho_{x, {\rm AC}})$ to be maximally entangled is that the separable terms in Eq. (\ref{orderclear}) vanish, or equivalently, that  $ |\Phi_x\>\<\Phi_x|_{ {\rm BC}}    =    P_0  \,  \rho_{x, {\rm AC}}  P_0$.  Since $P_0$ is a projector (up to the inessential relabelling of the first space as $\rm B$ or $\rm A$), the normalization of the state $P_0  \,  \rho_{x, {\rm AC}}  P_0$ implies that $P_0  \,  \rho_{x, {\rm AC}}  P_0  =   \rho_{x, {\rm AC}}$, and therefore,    $\rho_{x, {\rm AC}}  =  |\Phi_x\>\<\Phi_x|_{ {\rm AC}}$.    In summary, all the states $\{\rho_{x, {\rm AC} }\}_{x=0}^{d-1}$ are maximally entangled.   But these states are obtained by performing local operations on the initial state $\rho_{\rm AC}^*$.  Since local operations cannot increase entanglement, we conclude that $\rho_{\rm AC}^*$ must be  locally convertible into a $d$-dimensional maximally entangled state. \qed

 \medskip  

Combining Lemmas \ref{lem:maxentoutput} and \ref{lem:maxentorder}, we obtain the desired necessity proof for the controlled-order of information-erasing channels:  perfect communication of $\log d$ bits is possible only if the initial state shared by Alice and Charlie is (locally equivalent to) a $d$-dimensional maximally entangled state.  
 
 \medskip

{\bf Proof for controlled choice.}    The proof   is more subtle than the proof for controlled order, because there are infinitely many possible  "controlled-choice channels," depending on which extensions $\widetilde {\map E}_j$ are used.      Our proof will hold for all possible choices. 

To get started, we need a general fact on the controlled-choice of $d$ information-erasing channels:
\begin{lemma}\label{lem:Tdecomposition}
The controlled-choice channel  $\mathcal{T}(\widetilde{\mathcal{E}}_{0}, \widetilde{\mathcal{E}}_{1}, \cdots, \widetilde{\mathcal{E}}_{d-1})$  can be written as   
   \begin{align}\label{Tdecomposition}
      \mathcal{T}(\widetilde{\mathcal{E}}_{0}, \widetilde{\mathcal{E}}_{1}, \cdots, \widetilde{\mathcal{E}}_{d-1})(\rho_{\text{AC}})   =  T_{0,\dots,  0}  \, \rho_{\rm AC} \,  T_{0,\dots,  0}^\dag   +  \sum_{j=0}^{d-1} \sum_{i_j\not =  0}  ~  \Tr \left[ \left(I  -  |v_j \>\<  v_j|\right)_{\rm A} \otimes |j\>\<j|_{\rm C}  ~   \rho_{\rm AC}  \right]  ~ ~ |j\>\<j|_{\rm B} \otimes |j\>\<j|_{\rm C} \, ,
    \end{align}
    where $\{ |v_j\>\}$ are suitable vectors satisfying $\|  |v_j\> \| \le 1$ for every $j\in \{0,\dots,  d-1\}$,  and   
    \begin{align}\label{e13}
    T_{0,\dots, 0 }   =   \sum_{j=0}^{d-1}  |j\>_{\rm B}\<v_j|_{\rm A} \otimes |j\>_{\rm C}\<j|_{\rm C} \, .
    \end{align}
\end{lemma}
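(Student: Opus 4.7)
The plan is to exploit the non-uniqueness of the Kraus representation of each extended channel $\widetilde{\map E}_j$ in order to concentrate its vacuum contribution into a single Kraus operator, which in turn collapses most of the $d^d$ Kraus operators of the controlled-choice channel $\map T$ to zero. First, I would observe that the set of Kraus operators of $\widetilde{\map E}_j$, which have the generic form $|j\>\<e^{(j)}_{i_j}|+\alpha^{(j)}_{i_j}\,|\mathrm{triv}\>\<\mathrm{triv}|$, can be rotated by a unitary acting on the Kraus index. Choosing this unitary so that its first row is proportional to $(\bar\alpha^{(j)}_{0},\dots,\bar\alpha^{(j)}_{d-1})$, I can produce a new decomposition $\{\widetilde E^{(j)\prime}_{k}\}$ in which $\widetilde E^{(j)\prime}_{0}=|j\>\<v_j|+|\mathrm{triv}\>\<\mathrm{triv}|$ absorbs all of the vacuum weight (with $|v_j\>$ determined by the amplitudes $\{\alpha^{(j)}_{i_j}\}$), while every other new Kraus operator has the purely target form $\widetilde E^{(j)\prime}_{k}=|j\>\<g^{(j)}_{k}|$ with $\{|v_j\>,|g^{(j)}_{1}\>,\dots,|g^{(j)}_{d-1}\>\}$ a (possibly overcomplete) resolution of the identity; the bound $\|v_j\|\le 1$ is what the Cauchy--Schwarz step in this construction yields in the most general case.

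Second, I would substitute the new amplitudes $\alpha^{(l)\prime}_{0}=1$, $\alpha^{(l)\prime}_{k}=0$ for $k\ne 0$, into formula~(\ref{T}) for the coefficients $t_j=\prod_{l\ne j}\alpha^{(l)\prime}_{i_l}$. This forces $t_j=0$ unless $i_l=0$ for every $l\ne j$, leaving only two families of non-zero Kraus operators for $\map T$: the single operator $T_{0,\dots,0}=\sum_{j}|j\>\<v_j|\otimes|j\>\<j|$, which matches exactly Eq.~(\ref{e13}); and the single-index-flipped operators $T^{(j)}_{i_j}=|j\>\<g^{(j)}_{i_j}|\otimes|j\>\<j|$, indexed by $j\in\{0,\dots,d-1\}$ and $i_j\ne 0$.

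Third, I would compute the action on $\rho_{\mathrm{AC}}$. A direct evaluation yields
\[
T^{(j)}_{i_j}\,\rho_{\mathrm{AC}}\,T^{(j)\dag}_{i_j}=\Tr\!\left[|g^{(j)}_{i_j}\>\<g^{(j)}_{i_j}|_{\mathrm{A}}\otimes|j\>\<j|_{\mathrm{C}}\,\rho_{\mathrm{AC}}\right]\,|j\>\<j|_{\mathrm{B}}\otimes|j\>\<j|_{\mathrm{C}},
\]
and summing over $i_j\ne 0$, combined with the completeness relation $\sum_{i_j\ne 0}|g^{(j)}_{i_j}\>\<g^{(j)}_{i_j}|=I-|v_j\>\<v_j|$, collapses the sum into the trace against $(I-|v_j\>\<v_j|)_{\mathrm{A}}\otimes|j\>\<j|_{\mathrm{C}}$ as claimed.

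The main obstacle I anticipate is step one: making the Kraus-mixing argument fully rigorous while keeping track of what the new $|v_j\>$ represents. In particular one must check that the unitary rotation is valid even when the number of original Kraus operators exceeds $d$, and verify that the residual vectors $\{|g^{(j)}_{k}\>\}$ satisfy the required completeness identity rather than merely an inequality. After that, the remainder is bookkeeping, and the final decomposition follows by simply grouping the terms produced by $T_{0,\dots,0}$ and by the flipped operators.
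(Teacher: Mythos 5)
Your proposal is correct and follows essentially the same route as the paper: reduce each extended channel to a canonical Kraus form in which all vacuum amplitude sits in the $i=0$ operator, observe that then only the all-zero and single-flipped multi-indices give nonzero $T_{i_0,\dots,i_{d-1}}$, and collapse the single-flip contributions via the completeness relation $\sum_{i_j\neq 0}|v^{(j)}_{i_j}\>\<v^{(j)}_{i_j}|=I-|v_j\>\<v_j|$. The only difference is that the paper imports the canonical form ($\alpha_0=1$, $\alpha_i=0$ for $i>0$) from Ref.~\cite{vanrietvelde2021universal}, whereas you re-derive it via the unitary freedom of Kraus representations; your anticipated obstacles (index spaces larger than $d$, the completeness identity) are handled automatically by that standard argument, since any unit vector of amplitudes extends to a unitary on the index space and trace preservation on the target sector gives the exact resolution of the identity.
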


\Proof    The proof uses a property of extended channels proven in \cite{vanrietvelde2021universal}: for every extended channel $\widetilde {\map E}$ there exists a Kraus representation with operators of the form $\widetilde E_i  =  E_i  +   \alpha_i  ~  |{\rm triv}\>\<{\rm triv}|$  such that $\alpha_0 = 1$ and $\alpha_i  =  0 $ for every $i>0$.  Applying this result to the channels $\widetilde {\map E}_j$, we obtain  Kraus representations 
\begin{align}
\widetilde E^{(j)}_{i_j}  =  |j\>\<  v_{i_j}^{(j)} | +        \alpha_{i_j}^{(j)}  ~  |{\rm triv}\>\<{\rm triv}| \, ,
\end{align}  
where $\left(   |v^{(j)}_{i_j}\>\right)_{i_j}$ are (possibly non orthonormal) vectors satisfying the  normalization condition $\sum_{i_j}  |v_{i_j}^{(j)}\>\<v_{i_j}^{(j)}|=  I$ for every $j$.   In this representation, the controlled-choice channel reads \cite{abbott2020communication, chiribella2019quantum} 
\begin{equation}\label{calT}
    \mathcal{T}(\widetilde{\mathcal{E}}_{0}, \widetilde{\mathcal{E}}_{1}, \cdots, \widetilde{\mathcal{E}}_{d-1})(\rho_{\text{AC}})=\sum_{i_{0},i_{1},\cdots,i_{d-1}}T_{i_{0},i_{1},\cdots,i_{d-1}}\rho_{\text{AC}}T_{i_{0},i_{1},\cdots,i_{d-1}}^{\dagger}
\end{equation}
with  Kraus operators
\begin{equation}\label{Tall}
    T_{i_{0},i_{1},\cdots,i_{d-1}}=\sum_{j=0}^{d-1}    t_j \,   |j\>\<v_{i_j}^{(j)}|   \otimes |j\>\<j|   \, ,
    \qquad \qquad   t_j:  =   \prod_{l\not  = j}   \alpha_{i_l}^{(l)}    \, . 
\end{equation} 
At this point, there are three possible cases:
\begin{enumerate}
\item  $i_{j} =   0  $  for all  $j$, 
\item   $i_{j} =  0 $  for all  $j$ except one,  
\item  $i_{j} \not = 0$  for two or more values of $j$.
\end{enumerate}
In Case 1, the Kraus operator is $T_{0,\dots, 0}  =   \sum_{j=0}^{d-1}  |j\>\<v^{(j)}_0| \otimes |j\>\<j|$.   
In Case 2, the Kraus operators are of the form  $|j\>\<  v_{i_j}^{(j)}  |\otimes |j\>\<j|$, where $j$ is the one index for which  $i_{j} \not =  0 $. In Case 3, the Kraus operator $T_{i_0,i_1, \dots,  i_{d-1}}$ is zero.  Inserting these expressions into Eq. (\ref{calT}), we  obtain  
 \begin{align}\label{choiceclear}
 \nonumber \mathcal{T}(\widetilde{\mathcal{E}}_{0}, \widetilde{\mathcal{E}}_{1}, \cdots, \widetilde{\mathcal{E}}_{d-1})(\rho_{\text{AC}})  &=
   T_{0,\dots, 0}  \rho_{\rm AC}  T_{0, \dots, 0}^\dag   +  \sum_{j=0}^{d-1}  \, \sum_{i_j\not  =  0}  \<  v_{i_j}^{(j)} |_{\rm A}  \<j|_{\rm C}    \rho_{\rm AC}  |  v_{i_j}^{(j)}\>_{\rm A} |j\>_{\rm C} ~ |j\>\<j|_{\rm B} \otimes |j\>\<j|_{\rm C} \\
     &  =  
   T_{0,\dots, 0}  \rho_{\rm AC}  T_{0, \dots, 0}^\dag   +  \sum_{j=0}^{d-1}  \,  \Tr [   (I_A  -  |v_0^{(j)}\>\<v_0^{(j)}|  )  \otimes |j\>\<j|  \,  \rho_{\rm AC}]    ~ |j\>\<j|_{\rm B} \otimes |j\>\<j|_{\rm C}  \, ,   \end{align}
   the second equation following from the  normalization condition $\sum_{i_j}  |v_{i_j}^{(j)}\>\<v_{i_j}^{(j)}|=  I$ for every $j$.
Defining $|v_j\>:  =  |v_0^{(j)}\>$ we then obtain Eq. (\ref{Tdecomposition}). \qed  
 
\medskip 

We now use the previous lemma to characterize the structure of the input states that give rise to orthogonal states in the output.  To this purpose, recall  Lemma \ref{lem:maxentoutput}, which states  that the states $\{\rho'_{x, \rm BC}\}$ are orthogonal only if they are  maximally entangled.  
 
\begin{lemma}\label{lem:subspace}
If  the state $\rho_{x,{\rm BC}}'$ is maximally entangled, then $\|    |v_j\>   \|  =1$ for every $j \in  \{0,\dots, d-1\}$, and the state $\rho_{x,{\rm AC}}$ has support contained in the subspace spanned by the vectors $ \{  |v_j\> \otimes |j\>\}_{j=0}^{d-1}$. 
\end{lemma}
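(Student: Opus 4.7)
The plan is to exploit the two-term decomposition given in Lemma~\ref{lem:Tdecomposition}. Writing $A := T_{0,\dots,0}\, \rho_{x,{\rm AC}}\, T_{0,\dots,0}^\dag$ and $B := \sum_{j=0}^{d-1} q_j\, |j\>\<j|_{\rm B}\otimes |j\>\<j|_{\rm C}$ with coefficients $q_j := \Tr[(I - |v_j\>\<v_j|)_{\rm A}\otimes |j\>\<j|_{\rm C}\,\rho_{x,{\rm AC}}]\geq 0$, we have $\rho_{x,{\rm BC}}' = A+B$ with both summands positive semidefinite. Since $\rho_{x,{\rm BC}}' = |\Phi_x\>\<\Phi_x|$ has rank one, both $A$ and $B$ must have range contained in $\Span(|\Phi_x\>)$, hence each is proportional to $|\Phi_x\>\<\Phi_x|$. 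However, $B$ is diagonal in the basis $\{|j\>|j\>\}_{j=0}^{d-1}$, whereas a maximally entangled $|\Phi_x\>$ carries nonzero off-diagonal weight in that basis; consequently $B=0$, i.e.\ $q_j = 0$ for every $j$.

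Next, I would unpack the condition $q_j=0$ using the spectral decomposition $\rho_{x,{\rm AC}}=\sum_k p_k|\psi_k\>\<\psi_k|$ (with $p_k>0$) and the expansion $|\psi_k\>=\sum_l |\phi_{k,l}\>_{\rm A}\otimes|l\>_{\rm C}$. Positivity forces $\<\phi_{k,j}|(I-|v_j\>\<v_j|)|\phi_{k,j}\>=0$ for every $k,j$; since $\||v_j\>\|\leq 1$ implies $I-|v_j\>\<v_j|\geq 0$, one obtains $(I-|v_j\>\<v_j|)|\phi_{k,j}\>=0$. A quick check shows that this equation is equivalent to the dichotomy: either $|\phi_{k,j}\>=0$, or $\||v_j\>\|=1$ together with $|\phi_{k,j}\>\propto |v_j\>$.

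The final step is to rule out $\||v_j\>\|<1$. If this held for some $j$, then $|\phi_{k,j}\>=0$ for every $k$, so $\rho_{x,{\rm AC}}$ would be orthogonal to $\mathcal{H}_{\rm A}\otimes |j\>_{\rm C}$. Inspecting $T_{0,\dots,0}=\sum_l |l\>_{\rm B}\<v_l|_{\rm A}\otimes |l\>_{\rm C}\<l|_{\rm C}$, the factor $|j\>_{\rm C}\<j|_{\rm C}$ at $l=j$ would then kill that slot and force the $|j\>|j\>\<j|\<j|$ matrix element of $A$ to vanish, contradicting the fact that $|\Phi_x\>\<\Phi_x|$ has weight $1/d$ on every such diagonal element. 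Hence $\||v_j\>\|=1$ for all $j$, each $|\phi_{k,j}\>$ is proportional to $|v_j\>$, and every $|\psi_k\>$ lies in $\Span\{|v_j\>\otimes|j\>\}_{j=0}^{d-1}$, yielding the support claim. The hard part is this last step: the conditions $q_j=0$ alone give only the dichotomy above, and closing it demands combining the fine structure of $T_{0,\dots,0}$ with the fact that $|\Phi_x\>\<\Phi_x|$ has full diagonal support on the subspace $\spc H_0$.
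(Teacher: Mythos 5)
Your proof is correct and follows essentially the same route as the paper's: kill the separable part of the decomposition from Lemma~\ref{lem:Tdecomposition}, translate the vanishing of those terms into a support condition on $\rho_{x,{\rm AC}}$ relative to the vectors $|v_j\>\otimes|j\>$, and use the fact that a maximally entangled output must have weight $1/d$ on every diagonal sector $|j\>\otimes|j\>$ to exclude $\| |v_j\>\|<1$. If anything, your rank-one argument for why the diagonal term $B$ must vanish, and the explicit positivity dichotomy obtained from the spectral decomposition, spell out rigorously two steps that the paper only asserts.
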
   

\Proof   Recall that $\rho_{x,{\rm BC}}'   =   \mathcal{T}(\widetilde{\mathcal{E}}_{0}, \widetilde{\mathcal{E}}_{1}, \cdots, \widetilde{\mathcal{E}}_{d-1})  (\rho_{x,{\rm AC}})$. For this state  to be maximally entangled, the separable terms in Eq.   (\ref{Tdecomposition}) must vanish. 
These terms vanish if and only if 
\begin{align}
 \Tr \left[ \left(I_{\rm A}  -  |v_j \>\<  v_j  |\right) \otimes |j\>\<j|  ~   \rho_{x,{\rm AC}}  \right] =  0   \qquad \forall j\in  \{0,\dots, d-1\} \, .
\end{align}   
This  condition implies the relation $\rho_{x,{\rm AC}}~  (|v_j\>\<v_j |_{\rm A}\otimes |j\>\<j|_{\rm C})=  0$ for every $j$ such that $\|  v_j\>  \|  <  1$.
In turn, this condition implies that the output state in Eq. (\ref{calT}) becomes
\begin{align}\label{bbb}
\mathcal{T}(\widetilde{\mathcal{E}}_{0}, \widetilde{\mathcal{E}}_{1}, \cdots, \widetilde{\mathcal{E}}_{d-1})  \,  (\rho_{x,{\rm AC}})
  =   T_{0,\dots,  0}  \, \rho_{x,{\rm AC}} \,  T_{0,\dots,  0}^\dag  =  T_*  \, \rho_{x,{\rm AC}} \,  T_*^\dag \, ,\end{align}
with
\begin{align}\label{e20}
T_*  =  \sum_{j \in S_* }  ~  |j\>\<v_j| \otimes |j\>\<j| \,,
\end{align}
$S_*$ being the set of values of $j$ such that $  \| \,  |v_j\>\,  \|=1$.   

The normalization of the states in Eq. (\ref{bbb}) implies that the state $\rho_{x,{\rm AC}} $ has support contained in the vector space spanned by  the vectors  $ \{  |v_j\> \otimes |j\>\}_{j\in  S_*}$.   Moreover, the condition that the state  $T_*  \,\rho_{j,{\rm AC}}  T^\dag_*$  be maximally entangled implies that   the set $S_*$ must contain all values of $j$.   Hence, the condition $  \| \,  |v_j \>\,  \|=1$ must be satisfied for every $j\in  \{0,\dots, d-1\}$.  \qed

We now show that  the states sent by Alice and Charlie through the channel must be maximally entangled.  

\medskip  

\begin{lemma}\label{lem:maxent}
If the states $\{\rho_{x,{\rm BC}}'\}_{x=0}^{d-1}$  are  orthogonal and maximally entangled,  then  the states $\{\rho_{x,{\rm AC}}\}_{x=0}^{d-1}$ are maximally entangled.  
\end{lemma}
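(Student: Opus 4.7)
\Proof[Proposal]  My plan is to leverage the structure already extracted in Lemma \ref{lem:subspace} together with the locality of Alice's encoding. By Lemma \ref{lem:subspace}, each $\rho_{x,{\rm AC}}$ has support in the $d$-dimensional subspace $\mathcal V  :=  \Span\{|v_j\>\otimes|j\>\}_{j=0}^{d-1}$, the vectors $|v_j\>\otimes|j\>$ being orthonormal (since the factors on $\rm C$ are), and each $\|v_j\|=1$.  On this subspace, the operator $T_*$ of Eq. (\ref{e20}) acts as an isometry onto $\spc H_0 = \Span\{|j\>\otimes|j\>\}_{j=0}^{d-1}$, sending $|v_j\>\otimes|j\>\mapsto |j\>\otimes |j\>$. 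In particular $T_*$ is injective on the support of $\rho_{x,{\rm AC}}$, so the purity of $\rho'_{x,{\rm BC}}$ (from Lemma \ref{lem:maxentoutput}) forces $\rho_{x,{\rm AC}}$ to be pure too.  Writing $\rho_{x,{\rm AC}}=|\psi_x\>\<\psi_x|$ with $|\psi_x\>=\sum_j c_{x,j}\,|v_j\>\otimes|j\>$, we then have $|\Phi_x\>=\sum_j c_{x,j}\,|j\>\otimes|j\>$, so the maximally-entangled hypothesis on the outputs gives $|c_{x,j}|^2=1/d$ for all $x,j$, and their mutual orthogonality gives $\sum_j \bar c_{x,j}c_{x',j}=\delta_{x,x'}$, i.e.\ the $d\times d$ matrix $C=(c_{x,j})$ is unitary.

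The next ingredient is the locality of Alice's encoding: since $\rho_{x,{\rm AC}}=(\map A_x\otimes\map I_{\rm C})(\rho^*_{\rm AC})$ and $\map A_x$ is trace-preserving, the reduced state $\Tr_{\rm A}[\rho_{x,{\rm AC}}]=\rho^*_{\rm C}$ is the same for every message $x$.  I would next compute this marginal explicitly from the Schmidt-like decomposition of $|\psi_x\>$, obtaining
\begin{equation}
\Tr_{\rm A}\bigl[|\psi_x\>\<\psi_x|\bigr]=\sum_{j,k}c_{x,j}\bar c_{x,k}\,\<v_k|v_j\>\,|j\>\<k|\,,
\end{equation}
and then average over $x\in\{0,\ldots,d-1\}$.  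The unitarity of $C$ yields $\sum_x c_{x,j}\bar c_{x,k}=\delta_{j,k}$, which combined with $\|v_j\|=1$ gives $\sum_x\Tr_{\rm A}[|\psi_x\>\<\psi_x|]=\sum_j |j\>\<j|=I_{\rm C}$, hence $\rho^*_{\rm C}=I_{\rm C}/d$.

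Equipped with $\rho^*_{\rm C}=I_{\rm C}/d$, the per-$x$ identity $\Tr_{\rm A}[|\psi_x\>\<\psi_x|]=I_{\rm C}/d$ now reads
\begin{equation}
\sum_{j,k}c_{x,j}\bar c_{x,k}\,\<v_k|v_j\>\,|j\>\<k|\;=\;\frac{1}{d}\sum_j |j\>\<j|\,,
\end{equation}
and since every $c_{x,j}$ is nonzero (as $|c_{x,j}|^2=1/d$), the vanishing of the off-diagonal terms forces $\<v_k|v_j\>=0$ for all $j\ne k$.  Thus the $|v_j\>$ are orthonormal, and $|\psi_x\>=\sum_j c_{x,j}\,|v_j\>\otimes|j\>$ is a Schmidt decomposition with all Schmidt coefficients equal to $1/\sqrt d$; equivalently, $\rho_{x,{\rm AC}}$ is maximally entangled across the $\rm A$/$\rm C$ cut, which is the desired conclusion.

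I anticipate that the main obstacle is precisely the step of promoting unit norm of the individual $|v_j\>$ to mutual orthogonality: a priori nothing in Lemma \ref{lem:subspace} rules out overlaps $\<v_j|v_k\>\neq 0$ (e.g.\ all $|v_j\>$ equal), and a state of the form $|v\>\otimes|c_x\>$ would not be entangled.  The trick that resolves this difficulty is the averaging argument above, in which the orthogonality of the $d$ outputs $|\Phi_x\>$ (equivalently, unitarity of $C$) is converted, via the input-independence of the marginal on $\rm C$, into the orthogonality of the $|v_j\>$.  Every other step is a direct consequence of the isometric nature of $T_*|_{\map V}$ and of the Schmidt characterization of maximally entangled pure states.
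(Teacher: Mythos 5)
Your proposal is correct, and its skeleton coincides with the paper's: both invoke Lemma \ref{lem:subspace} to confine the support of $\rho_{x,{\rm AC}}$ to $\Span\{|v_j\>\otimes|j\>\}$, deduce purity, extract $|c_{x,j}|^2=1/d$ from the maximal entanglement of the outputs, and then use the $x$-independence of the marginal on $\rm C$ (a consequence of $\map A_x$ acting only on $\rm A$) to force $\<v_j|v_k\>=0$. The only genuine divergence is in how that last, decisive step is executed. The paper writes the equal-marginals condition as a constraint on the phases $e^{i(\theta_{x,j}-\theta_{x,l})}$ and derives a contradiction: a nonzero overlap $\<v_{j_1}|v_{j_2}\>$ would make two columns of the phase matrix $M=(e^{i\theta_{x,j}})$ proportional, killing $\det M$ and contradicting the orthogonality of the $|\Psi_x\>$. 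You instead observe that orthogonality of the outputs makes the coefficient matrix $C=(c_{x,j})$ unitary, average the marginal identity over $x$ to pin down the common marginal as $I_{\rm C}/d$, and then read off $\<v_k|v_j\>=0$ directly from the vanishing of the off-diagonal entries, using only $c_{x,j}\neq 0$. Your version is a little cleaner (no argument by contradiction, and it identifies the marginal explicitly rather than merely using its $x$-independence), while the paper's determinant argument isolates more explicitly where the orthogonality of the $d$ encoded states enters. Both are valid; the two arguments are essentially unitary-of-$C$ in two guises.
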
   

\Proof 
  Since the states $\{\rho_{x,{\rm BC}}'\}_{x=0}^{d-1}$ are obtained from the states  $\{\rho_{x,{\rm AC}}\}_{x=0}^{d-1}$ through the action of a quantum channel, the former are orthogonal only if the latter are orthogonal.  

By Lemma \ref{lem:subspace}, the support of the states  $\{\rho_{x,{\rm AC}}\}_{x=0}^{d-1}$ is contained in the $d$-dimensional subspace spanned by the vectors $ \{  |v_j\> \otimes |j\>\}_{j=0}^{d-1}$.    Since the states $\{\rho_{x,{\rm AC}}\}_{x=0}^{d-1}$ are $d$ orthogonal states in a $d$-dimensional subspace,   they must be pure.    Let us write them as $\rho_{x{\rm AC}}  =  |\Psi_x\>\<\Psi_x|_{\rm AC}$, with 
\begin{align}
|\Psi_x\>_{\rm AC}  =  \sum_{j=0}^{d-1}  \,   \lambda_{x, j}  \,  |v_j\>_{\rm A} \otimes |j\>_{\rm C} \, .
\end{align} 

We now show that the orthogonal states $\{  |\Psi_x\>_{\rm AC}\}_{x=0}^{d-1}$ must be maximally entangled.  
First, recall that one has  $|\Psi_x\>\<\Psi_x|_{\rm AC} =  (\map A_x \otimes \map I_{\rm C})  (\rho_{\rm AC}^*)$. Tracing out both sides on the equation with $I_{\rm A} \otimes |j\>\<j|_{\rm C} $ we obtain 
\begin{align}
|\lambda_{x, j}|^2   &  =  \Tr[  (I_{\rm A} \otimes |j\>\<j|_{\rm C})  \,  (\map A_x \otimes \map I_{\rm C})  (\rho_{\rm AC}^*)    ]  =  \Tr[  (I_{\rm A} \otimes |j\>\<j|_{\rm C})  \, \rho_{{\rm AC}}^*     ]    =:p_j\, .
\end{align}
In short, $|\lambda_{x,j}|$ is independent of $x$.  

Moreover, since  the states $\{|\Psi_x\>_{\rm AC}\}_{x=0}^{d-1}$ are orthogonal, that is, they are a basis for the subspace spanned by the vectors $\{  |v_j\>_{\rm A}\otimes |j\>_{\rm C}\}_{j=0}^{d-1}$.   Hence, we have  
\begin{align}\sum_{x = 0}^{d-1}  |\Psi_x\>\<\Psi_x|_{\rm AC} =  \sum_{j=0}^{d-1}  \,  |v_j\>\<  v_j|_{\rm A} \otimes |j\>\<j|_{\rm C}  \, . \end{align}
Multiplying both sides of the equation by $\<j|_{\rm C}$ on the left and $|j\>_{\rm C}$ on the right, we obtain  
\begin{align}
\sum_{x=0}^{d-1}\,  |c_{xj}|^2  |v_j\>\<  v_j|_{\rm A}    =   |v_j\>\<  v_j|_{\rm A} \, ,
\end{align}
which implies $|c_{xj}|^{2}  = 1/d$  (recall that $\|  |v_j\> \|  = 1$ for every $j$ and therefore $|v_j\>$ cannot be the zero vector).  

Hence, the state $|\Psi_x\>_{\rm AC}$ can be rewritten as 
\begin{align}\label{ccc}
|\Psi_x\>_{\rm AC}  =  \frac 1{\sqrt d}\,  \sum_{j=0}^{d-1}  \,   e^{i \theta_{x, j}}  \,  |v_j\>_{\rm A} \otimes |j\>_{\rm C} \, ,
\end{align} 
for some suitable phases $\theta_{x,j} \in  \R$.     

To conclude that the vectors $|\Psi_x\>$ are maximally entangled, we show that the vectors $\{|v_j\>\}_{j=0}^{d-1}$ are mutually orthogonal.  
To this purpose, recall that all the states  $|\Psi_x\>$ must have the same marginal on system $\rm C$. The condition of equal marginals is 
\begin{align}
\sum_{j,l}  e^{i  (\theta_{x,j}  -  \theta_{x,l})}   \<  v_l|  v_j\>   \,  |j\>\<l|   =    \sum_{j,l}  e^{i  (\theta_{y,j}  -  \theta_{y,l})}    \<  v_l|  v_j  \>    \,  |j\>\<l|  \qquad \forall x,y \in  \{0,\dots, d-1\} \, .
\end{align}  
The equality holds if and only if $ e^{i  (\theta_{x,j}  -  \theta_{x,l})}   =   e^{i  (\theta_{y,j}  -  \theta_{y,l})}$  for every pair $(j,l)$ such that $ \<  v_l|  v_j\>   \not =  0$.  

On the other hand, no such pair can exist. The proof is by contradiction: suppose that there existed a pair $(j_1,j_2)$ such that $ \<  v_{j_1}|  v_{j_2}\>   \not =  0$.  Hence, there would exist a constant $\omega$ such that
\begin{align}
e^{i\theta_{x,j_2}}   =  \omega \,  e^{i\theta_{x,j_1}}  \qquad \forall x\in  \{  0,\dots, d-1\} \, .
\end{align}
This condition would imply that two columns of the matrix $M=  (  e^{i\theta_{x,j}})$ are proportional to each other, and therefore $\det (M)  =  0$.  But this would be  in contradiction with the fact that the states $\{|\Psi_x\>\}_{x=0}^{d-1}$ be orthogonal, which implies that the matrix $M$ has full rank.  

Hence, the condition  $ \<  v_l|  v_j\>  =  0$ must hold for every $j$ and $l$.  This implies that the vectors $\{   |v_j\>\}$ form an orthonormal basis, and therefore the states $\{|\Psi_x\>_{\rm AC} \}$ are maximally entangled.  \qed  

\medskip  

Putting everything together, we obtain  the desired result:  
\begin{lemma}\label{lem:maxentchoice}
Perfect communication of $\log d$ bits with a controlled choice of $d$ information-erasing channels is possible only if the initial state $\rho_{\rm AC}^*$ is locally convertible into a $d$-dimensional maximally entangled state.  
 \end{lemma}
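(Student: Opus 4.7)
The plan is to assemble Lemma \ref{lem:maxentchoice} by chaining the three structural results already established in the controlled-choice analysis (Lemmas \ref{lem:maxentoutput}, \ref{lem:Tdecomposition}, \ref{lem:subspace}, \ref{lem:maxent}) and then invoking monotonicity of entanglement under local operations. Concretely, I would argue as follows. Assume the protocol achieves perfect communication of $\log d$ classical bits via $\map K = \mathcal{T}(\widetilde{\map E}_0,\ldots,\widetilde{\map E}_{d-1})$ for \emph{some} choice of extensions $\widetilde{\map E}_j$ of the orthogonal information-erasing channels $\map E_j$. The proof should hold uniformly in this choice, which is exactly how the preceding lemmas were set up.

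First I would apply Lemma \ref{lem:maxentoutput}: perfect decoding of $d$ messages by Bob and Charlie forces the output states $\rho'_{x,\rm BC} = \map K(\rho_{x,\rm AC})$ to be pure, mutually orthogonal, and maximally entangled on $\spc H_0 = \Span\{|j\>\otimes|j\>\}_{j=0}^{d-1}$. Next I would feed this conclusion into Lemma \ref{lem:maxent}, whose hypothesis is exactly that the $\{\rho'_{x,\rm BC}\}_{x=0}^{d-1}$ are orthogonal and maximally entangled; its conclusion gives that each $\rho_{x,\rm AC} = (\map A_x \otimes \map I_{\rm C})(\rho_{\rm AC}^*)$ is itself pure and maximally entangled on $\mathbb{C}^d \otimes \mathbb{C}^d$, with the explicit form in Eq.~(\ref{ccc}) where $\{|v_j\>\}$ is an orthonormal basis of Alice's system.

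Finally I would close the argument by monotonicity. By construction $\rho_{x,\rm AC}$ is obtained from $\rho^*_{\rm AC}$ by the local operation $\map A_x \otimes \map I_{\rm C}$ applied on Alice's side alone, so any entanglement measure of $\rho_{x,\rm AC}$ lower-bounds that of $\rho^*_{\rm AC}$. Since the former is a $d$-dimensional maximally entangled pure state, $\rho^*_{\rm AC}$ must itself be (equivalent by local operations on Alice's side to) a $d$-dimensional maximally entangled state. Combining this with Lemma \ref{lem:maxentorder} then completes Theorem \ref{theo1}.

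The main obstacle here is essentially bookkeeping rather than a new technical idea: the heavy lifting is already done in Lemmas \ref{lem:subspace} and \ref{lem:maxent}, where one had to extract the structure of the input states from the somewhat cluttered decomposition \eqref{Tdecomposition} of $\map K$. The only subtle point worth stressing explicitly in the write-up is that the conclusion must be uniform in the choice of extensions $\widetilde{\map E}_j$, and that the local operations $\map A_x$ may include ancillas and measurements; both are handled automatically because Lemma \ref{lem:maxent} yields purity and maximal entanglement of $(\map A_x \otimes \map I_{\rm C})(\rho^*_{\rm AC})$ directly, without assuming anything about the internal structure of $\map A_x$.
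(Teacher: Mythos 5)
Your proposal is correct and follows essentially the same route as the paper: chain Lemma \ref{lem:maxentoutput} (outputs must be pure, orthogonal, maximally entangled) into Lemma \ref{lem:maxent} (hence the inputs $\rho_{x,\rm AC}$ are maximally entangled), then conclude by monotonicity of entanglement under Alice's local encoding operations. The additional remarks about uniformity over the extensions $\widetilde{\map E}_j$ and generality of $\map A_x$ are consistent with how the paper sets up the preceding lemmas.
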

  
\Proof By Lemma \ref{lem:maxentoutput}, perfect communication is possible only if the states $\rho'_{x, \rm BC}$ are orthogonal and maximally entangled.  Then, Lemma \ref{lem:maxent} implies that the states $\rho_{x,{\rm AC}}$ must be maximally entangled. Since these states are obtained  from the state $\rho_{\rm AC}^*$ by applying local operations,   the state $\rho_{\rm AC}^*$ must be maximally entangled.  \qed
  
 \medskip 
 
 Together, Lemmas \ref{lem:maxentorder} and \ref{lem:maxentchoice} conclude the proof of the {\em ``Only if''} part of Theorem 1 in the main text.

\subsection{Proof that  protocols with fixed configurations of the channels $\map E$ and $\map F$ cannot achieve private communication}    

Let $x$ be the bit value encoded by Alice, and let   $\rho_{{\rm TC }}  (x)$ be the joint state  of the target  and control after the first controlled operation in Figure 2 of the main text.    With the action of  the information-erasing channel $\map E$, the  target system is erased and reset it to the fixed state $|0\>$, leaving Charlie's system in the marginal state $\rho_{\rm C}  (x)  :  =  \Tr_{\rm T} [ \, \rho_{\rm TC} (x) \, ]$.    Now, the state of all systems at later times of the protocol depends only on the states $\rho_{\rm C}  (x)$.  For Bob to retrieve Alice's message, the states $\rho_{\rm C}  (0)$ and $\rho_{\rm C}  (1)$ must be perfectly distinguishable. But if they are perfectly distinguishable, they can be copied by Charlie, who can read Alice's message without being discovered.

\section{Establishing Entanglement with one receiver}
Here we will consider the scenario to establish $\log d$ ebits between Alice to Bob, through $d$ orthogonal information-erasing channels and a perfect side channel of quantum capacity $\log d$. Importantly, Charlie who has access on the side channel is only allowed to communicate classically with the receiver Bob. This prevents Alice to bypass the zero capacity channels via the perfect side channel. 
\subsection{Proof of Theorem 2., \textit{if} part}
Similar to the previous protocol, let us consider that Alice shares a maximally entangled state $\ket{\Phi^{+}}_{\text{AC}}\in\mathbb{C}_{d}^{\otimes2}$ with Charlie, beforehand. Now to establish the maximal entanglement with Bob, she will first prepare a $d$-dimensional quantum state $\ket{0}_{\text{A'}}$ and apply a joint unitary $\mathbb{V}_{\text{AA'}}$ on the two qudits she has at her possession. The action of the joint unitary is $\mathbb{V}_{\text{AA'}}\ket{k_{A}0_{A'}}=\ket{k_{A}k_{A'}},~\forall k\in\{0,1,\cdots,(d-1)\}$, which can also be identified as the perfect cloning machine for the orthonormal basis $\{\ket{k}\}_{k=0}^{d-1}$. Hence, the final tripartite state among the two qudits at Alice's lab and a single qudit at Charlie's lab will be genuinely entangled state, given by
\begin{equation*}
    \ket{\psi}_{\text{AA'C}}=\frac{1}{\sqrt{d}}\sum_{j=0}^{d-1}\ket{jjj}_{\text{AA'C}}.
\end{equation*}
Now keeping the part $A$ with her, Alice (and Charlie) will send the qudit $A'$ (and $B$) through the controlled quantum channels of $d$ orthogonal information-erasing channels. The joint channel action hence can be depicted as
\begin{equation}\label{e28}
    I_{\text{A}}\otimes\mathcal{S}(\mathcal{E}_{0}, \mathcal{E}_{1}, \cdots, \mathcal{E}_{d-1})(\rho_{\text{AA'C}})=I_{\text{A}}\otimes\mathcal{T}(\mathcal{E}_{0}, \mathcal{E}_{1}, \cdots, \mathcal{E}_{d-1})(\rho_{\text{AA'C}}):=\sum_{i_{0},i_{1},\cdots,i_{d-1}}\widetilde{\mathcal{K}}_{i_{0},i_{1},\cdots,i_{d-1}}\rho_{\text{AA'C}}\widetilde{\mathcal{K}}_{i_{0},i_{1},\cdots,i_{d-1}}^{\dagger}
\end{equation}
where, $\widetilde{\mathcal{K}}_{i_{0},i_{1},\cdots,i_{d-1}}= I_{\text{A}}\otimes \mathcal{K}_{i_{0},i_{1},\cdots,i_{d-1}}$ and $\mathcal{K}_{i_{0},i_{1},\cdots,i_{d-1}}$ is same as in Eq.(\ref{e8}). 
This, in turn, assures that the controlled operation Eq. (\ref{e28}) maps any arbitrary three-qudit state to the  subspace spanned by $\ket{\psi}\otimes\ket{j}\otimes\ket{j}, \forall\psi \text{ and }\{j=0,1,\cdots,d-1\}$.
This directly follows from the structure of  the Kraus operators in Eq. \eqref{e28}, where there is no action on party A $(I_{\text{A}})$ and the operation on A$^\prime$C part $(\mathcal{K}_{i_{0},i_{1},\cdots,i_{d-1}})$ has a decoherence free subspace spanned by $\{\ket{j}\otimes\ket{j}\}, \text{ with } \{j=0,1,\cdots,d-1\}$.  
Also observing that the $A'C$ marginal for the state $\ket{\psi_{\text{AA'C}}}$, i.e., $\rho_{\text{A'C}}=\text{Tr}_{\text{A}}(\ketbra{\psi}{\psi}_{\text{ATC}})$, is diagonal in the basis $\ket{j}\otimes\ket{j}, \{j=0,1,\cdots,d-1\}$ we conclude 
\begin{eqnarray}\label{e29}
I_{\text{A}}\otimes\mathcal{S}(\mathcal{E}_{0}, \mathcal{E}_{1}, \cdots, \mathcal{E}_{d-1})\ketbra{\psi}{\psi}_{\text{AA'C}} &=& I_{\text{A}}\otimes\mathcal{T}(\mathcal{E}_{0}, \mathcal{E}_{1}, \cdots, \mathcal{E}_{d-1})\ketbra{\psi}{\psi}_{\text{AA'C}}\nonumber \\
&=& \ketbra{\psi}{\psi}_{\text{ABC}}
\end{eqnarray}
Hence, at the end Alice, Bob and Charlie share the same genuine entangled state among them. Now, considering the $d$-dimensional Fourier basis $\{\ket{f_{m}}=\frac{1}{\sqrt{d}}\sum_{j=0}^{d-1}\exp(i\frac{2\pi j m}{d})\ket{j}\}_{m=0}^{d-1}$ in Charlie's side, we can write 
\begin{equation*}
\ket{\psi_{\text{ABC}}}=\frac{1}{\sqrt{d}}\sum_{m=0}^{d-1}\ket{\Phi^{(m)}}_{\text{AB}}\otimes\ket{f_{m}}_{\text{C}},\text{ where, }\ket{\Phi^{(m)}}_{\text{AB}}=\frac{1}{\sqrt{d}}\sum_{j=0}^{d-1}\exp(i\frac{2\pi j m}{d})\ket{jj}_{\text{AB}}.
\end{equation*}
Evidently, Charlie, who has access on the control system can perform a measurement in $\{\ket{f_{m}}\}_{m=0}^{d-1}$ basis in his possession and communicate the outcome classically to Bob, who then apply a suitable unitary $ U_{m}$ on his qudit to get the state $\ket{\Phi^{+}}=\ket{\Phi^{0}}=\frac{1}{\sqrt{d}}\sum_{j=0}^{d-1}\ket{jj}$ between Alice and himself.\\
\subsection{Proof of Theorem 2., \textit{only if} part}
With the help of the following Lemmas we will conclude that a pre-shared maximally entangled state, between Alice and Charlie, is necessary to establish $\log d$ ebit between Alice and Bob, using $d$ orthogonal qudit pin-maps.\par
Let us first consider the following result for the state shared between Alice, Bob and Charlie after the controlled quantum operation.

\medskip

\begin{lemma}\label{l7}
The state shared between Alice, Bob and Charlie after the controlled quantum operation must be three-qudit genuinely entangled GHZ state.
\end{lemma}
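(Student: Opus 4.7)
The plan is to mirror the necessity argument used for Theorem~1, adapted to the tripartite $ABC$ setting. Let $\sigma_{ABC}$ denote the output of the controlled operation. First I would pin down the support of $\sigma_{ABC}$, then impose the condition that Charlie's one-way LOCC assistance yields $|\Phi^+\>_{AB}$, and finally read off the GHZ form.

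For the support step, I would use the fact that the controlled channel $\map K$ acts nontrivially only on the $A'C$ pair while $A$ is a spectator. By the decompositions in Eqs.~(\ref{orderclear}) and (\ref{Tdecomposition}), any output decomposes into a coherent piece supported in the decoherence-free subspace $\spc H_0=\Span\{|j\>_B\otimes|j\>_C\}_{j=0}^{d-1}$ plus classical ``leak'' terms proportional to $|jj\>\<jj|_{BC}$. Because one-way LOCC from $C$ cannot increase the entanglement across the $A|BC$ cut, and $|\Phi^+\>_{AB}$ carries $\log d$ ebits, the $A|BC$ entanglement of $\sigma_{ABC}$ must itself equal $\log d$. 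Any leak contribution would force $\sigma_{BC}$ to contain classical correlations that strictly reduce this entropy of entanglement, so by the template of Lemmas~\ref{lem:subspace} and~\ref{lem:maxent} the state $\sigma_{ABC}$ must be pure and supported in $\spc H_A\otimes\spc H_0$. Hence
\begin{equation}
\sigma_{ABC}=|\Psi\>\<\Psi|,\qquad |\Psi\>_{ABC}=\sum_{j=0}^{d-1} c_j\,|a_j\>_A\otimes|j\>_B\otimes|j\>_C,
\end{equation}
for unit vectors $|a_j\>_A$ and amplitudes $c_j$ with $\sum_j|c_j|^2=1$.

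The LOCC step then fixes the remaining freedom. Charlie performs a measurement on $C$ and communicates the outcome $m$ to Bob, who applies a local unitary $U_m$ with the goal of producing $|\Phi^+\>_{AB}$ deterministically. For every outcome $m$ occurring with nonzero probability, the conditional state on $AB$ is proportional to $\sum_j c_j\<m|j\>\,|a_j\>_A|j\>_B$, which can be rotated by Bob into $|\Phi^+\>_{AB}$ only if its Schmidt rank across $A|B$ equals $d$ and its $A$-marginal is maximally mixed. The first requirement forces $\{|a_j\>\}_{j=0}^{d-1}$ to be orthonormal, and the second (e.g.\ imposed for a Fourier measurement on $C$, for which $|\<m|j\>|^2=1/d$) forces $|c_j|^2=1/d$. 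Absorbing the phases of $c_j$ into $|a_j\>$, one obtains precisely the canonical GHZ form up to a local unitary on $A$. The main obstacle will be the leak-elimination argument in the first step: one needs to verify carefully, as in Lemmas~\ref{lem:subspace}--\ref{lem:maxent}, that any admixture of the separable terms $|jj\>\<jj|_{BC}$ strictly obstructs the extraction of $\log d$ ebits across $A|B$ by one-way LOCC from $C$. Once this is secured, the remainder is essentially the bipartite Schmidt-rank and marginal-uniformity argument already developed for Theorem~1.
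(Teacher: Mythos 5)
Your proposal is correct and follows essentially the same route as the paper's proof: LOCC monotonicity forces the $A|BC$ entanglement of the post-channel state to be maximal, the channel structure confines the $BC$ marginal to the $d$-dimensional subspace $\Span\{|j\>_B\otimes|j\>_C\}$, and together these force a pure GHZ-form state. The only cosmetic difference is that your final measurement-dependent step fixing $|c_j|^2=1/d$ and the orthonormality of $\{|a_j\>\}$ is unnecessary, since both already follow from the maximally mixed $A$-marginal implied by maximal $A|BC$ entanglement, which is how the paper reads off the Schmidt form directly.
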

\begin{proof}
Let us consider the tripartite state produced after controlled quantum operation is $\rho_{\text{ABC}}$.

Now performing a measurement on his quantum system, Charlie will communicate the result to Bob. Depending upon which Bob will apply a local operation on his qudit to share a two qudit maximally entangled state among Alice and himself. 

Noting the fact that local operation and classical communication (LOCC) can not increase entanglement, the state $\rho_{\text{ABC}}$ should be maximally entangled in the $A|BC$ bipartition. Therefore the marginal of $A$ and at least one of $B$ or, $C$ should be $\frac{\mathbb{I}}{d}$.\par
Also from Eq. (\ref{orderclear}) and (\ref{choiceclear}), the two-qudit marginal of the state $\rho_{\text{ABC}}$ should be 
in the $d$-dimensional subspace spanned by $\ket{j}\otimes\ket{j}, ~j\in\{0,1,\cdots,d-1\}$. This two conditions together imply,
$$\sigma_{\text{A}}:=\text{Tr}_{\text{BC}}[\rho_{\text{ABC}}]=\frac{1}{d}\sum_{j=0}^{d-1}\ketbra{\psi_{j}}{\psi_{j}},~\text{and } ~ \sigma_{\text{BC}}:=\text{Tr}_{\text{A}}[\rho_{\text{ABC}}]=\frac{1}{d}\sum_{j=0}^{d-1}\ketbra{jj}{jj},$$
where the states $\{\ket{\psi_{j}}\}_{j=0}^{d-1}$ are orthogonal to each other.\par
Now, the condition that the state $\rho_{\text{ABC}}$ contains $\log d$ ebit in $A|BC$ bipartition, implies that the state is pure and can be written in the Schimdt form,
\begin{equation}\label{e30}
    \ket{\psi}_{\text{ABC}}=\frac{1}{\sqrt{d}}\sum_{j=0}^{d-1}e^{i \theta_{j}}\ket{\psi_{j}}_{\text{A}}\otimes\ket{jj}_{\text{BC}}.
\end{equation}
This completes the proof.
\end{proof}

\medskip
Now, the above lemma further helps us to conclude a corollary regarding the state just before the controlled quantum operation.\par
After sharing an arbitrary two-qudit state $\rho_{\text{AC}}$ with Charlie, Alice prepares an ancillary system $\sigma_{\text{A'}}$ and apply any possible quantum operation $\Lambda_{\text{AA'}}$, which gives 
$$\rho^{*}_{\text{AA'C}}:=(\Lambda_{\text{AA'}}\otimes I_{\text{C}})\rho_{\text{AC}}\otimes\sigma_{\text{A'}}$$.



\begin{corollary}\label{coro1}
The state $\rho^{*}_{\text{AA'C}}$ must be maximally entangled in $A|A'C$ bipartition.
\end{corollary}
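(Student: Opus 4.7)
The plan is to invert the action of $\map{K}$ on $\rho^*_{\rm AA'C}$ using the rigid structure of the post-channel state $\rho_{\rm ABC}$ furnished by Lemma~\ref{l7}. By that lemma, $\rho_{\rm ABC}$ is pure and has the GHZ-like form $\ket{\psi}_{\rm ABC} = \frac{1}{\sqrt{d}} \sum_{j=0}^{d-1} e^{i\theta_j} \ket{\psi_j}_A \otimes \ket{jj}_{BC}$, so it populates every off-diagonal coherence $\ket{jj}\bra{kk}_{BC}$ with $j \neq k$ and has reduced state on $A$ maximally mixed on a $d$-dimensional support.

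The core of the argument uses the explicit decompositions of $\map{K}$ provided by Eq.~(\ref{orderclear}) in the controlled-order case and by Eq.~(\ref{choiceclear}) (together with Lemma~\ref{lem:Tdecomposition}) in the controlled-choice case. In both, the output of $I_A \otimes \map{K}$ splits into a coherent piece supported on $A \otimes \mathrm{span}\{\ket{jj}\}_{BC}$ and an incoherent residue that is diagonal in $\ket{jj}\bra{jj}_{BC}$. Since $\ketbra{\psi}{\psi}_{\rm ABC}$ has nonvanishing off-diagonal elements $\ket{jj}\bra{kk}_{BC}$ for every pair $j \neq k$, all of these coherences must come from the coherent piece, and demanding that no incoherent residue be added to the output forces $\rho^*_{\rm AA'C}$ to have vanishing projections outside $A \otimes \mathrm{span}\{\ket{jj}\}_{A'C}$. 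The coherent piece then acts as an isometry on that subspace, so the uniquely fixed preimage of $\ketbra{\psi}{\psi}_{\rm ABC}$ is $\rho^*_{\rm AA'C} = \ketbra{\phi^*}{\phi^*}_{\rm AA'C}$ with $\ket{\phi^*} = \frac{1}{\sqrt{d}} \sum_j e^{i\theta_j} \ket{\psi_j}_A \otimes \ket{jj}_{A'C}$. This is manifestly a maximally entangled pure state across $A|A'C$, with Schmidt coefficients all equal to $1/\sqrt{d}$.

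The main obstacle lies in the controlled-choice case, where the coherent Kraus operator $T_{0,\dots,0}$ from Eq.~(\ref{e13}) involves possibly nonorthogonal vectors $\ket{v_j}$ rather than basis states. I would handle this by first invoking Lemma~\ref{lem:subspace} to deduce $\|\ket{v_j}\| = 1$ for every $j$, and then replicating the orthogonality-from-equal-marginals argument of Lemma~\ref{lem:maxent} to show that the $\ket{v_j}$'s must in fact be mutually orthonormal. Once that is established, the controlled-choice case reduces to an isometric preimage problem identical to the controlled-order case, and the maximally entangled conclusion follows.

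As an alternative, cleaner-looking route that sidesteps the explicit Kraus manipulation, one can combine monotonicity of the coherent information under the local map $I_A \otimes \map{K}$ with the fact that $\rho_{\rm ABC}$ is pure: this gives $I_c(A\rangle A'C)_{\rho^*} \geq I_c(A\rangle BC)_{\rho_{\rm ABC}} = S(\rho_A)=\log d$, while the general bound $I_c(A\rangle A'C)\le S(\rho^*_A)=\log d$ (using $\rho^*_A = \rho_A$ by trace preservation) forces equality. The saturation condition then characterizes $\rho^*_{\rm AA'C}$ as maximally entangled across $A|A'C$ up to a decoupled factor on $A'C$, which is sufficient for the downstream argument that feeds into Theorem~3.
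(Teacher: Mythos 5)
Your conclusion is right, but the route you foreground is not the paper's; ironically, the ``alternative, cleaner-looking route'' you relegate to the last paragraph \emph{is} essentially the paper's proof. The paper disposes of this corollary in two lines: the controlled channel acts only on the $A'C$ side of the $A|A'C$ cut, so it is a local operation with respect to that bipartition; by Lemma \ref{l7} its output $\ket{\psi}_{\rm ABC}$ of Eq.~(\ref{e30}) carries $\log d$ ebits across $A|BC$; and since local operations cannot increase entanglement, the input must already carry $\log d$ ebits across $A|A'C$. Your coherent-information argument ($I_c(A\rangle A'C)_{\rho^*}\ge I_c(A\rangle BC)=S(\rho_A)=\log d$, saturating the upper bound $I_c\le S(\rho_A)$) is the same monotonicity fact in a particular entanglement quantifier, with the bonus that the Araki--Lieb saturation condition pins down the structure of $\rho^*$. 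Your primary route --- explicitly inverting $\map K$ through the Kraus decompositions --- also reaches the right answer, but it front-loads precisely the analysis that the paper performs \emph{after} and \emph{using} this corollary (Lemmas \ref{l8} and \ref{l9}), so as a proof of the corollary itself it is doing far more work than necessary.

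One concrete flaw in that primary route: your plan to prove mutual orthogonality of the $\ket{v_j}$ in the controlled-choice case by ``replicating the orthogonality-from-equal-marginals argument of Lemma \ref{lem:maxent}'' cannot work here. That argument hinges on having $d$ distinct encoded states $\{\ket{\Psi_x}\}$ that must all share the same marginal on $C$; in the entanglement-establishment task there is only a single transmitted state, so there is no family of equal marginals to compare and no full-rank phase matrix $M$ to invoke. Fortunately the step is also unnecessary for the corollary: the vectors $\ket{v_j}_{A'}\otimes\ket{j}_{C}$ are automatically orthonormal because the control states $\ket{j}_{C}$ are, so the preimage $\tfrac{1}{\sqrt d}\sum_j e^{i\theta_j}\ket{\psi_j}_{\rm A}\otimes\ket{v_j}_{\rm A'}\otimes\ket{j}_{\rm C}$ is maximally entangled across $A|A'C$ whether or not the $\ket{v_j}$ are mutually orthogonal --- which is exactly why the paper's Lemma \ref{l9} never claims such orthogonality and instead removes the $\ket{v_j}$ by a joint unitary on $AA'$. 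If you keep the Kraus route, drop that sub-step and argue maximal entanglement directly from the orthonormality of $\{\ket{v_j}\otimes\ket{j}\}$; otherwise, promote your monotonicity argument to the main proof.
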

\begin{proof}
After preparing, Alice sends the $A'C$ subsystems of state 
$\rho^{*}_{\text{AA'C}}$ through the controlled configuration of $d$ orthogonal information-erasing channels, which maps $A'C\mapsto BC$ and produces the state $\ket{\psi}_{\text{ABC}}$ (as in Eq. (\ref{e30})).\par
Since, LOCC on any bipartition of a multipartite state can not increase the entanglement, the state $\rho^{*}_{\text{AA'C}}$ should be maximally entangled in $A|A'C$ bipartition.  
\end{proof}

\medskip

Now, with the help of Lemma \ref{l7} and Corollary \ref{coro1}, we will finally conclude that regarding the necessity of sharing maximal entanglement between Alice and Charlie, separately for the controlled-order and controlled-choice configuration.

\medskip

\textbf{Proof for the controlled-order.}\par
\begin{lemma}\label{l8}
To establish $\log d$-bit entanglement between Alice and Bob after order-controlled configuration of $d$ orthogonal pin-maps, the state shared between Alice and Charlie must be maximally entanglement.
\end{lemma}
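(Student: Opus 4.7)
\textit{Proof plan.} I would proceed in two stages: first pin down the state $\rho^{*}_{\text{AA'C}}$ just before the channel, then propagate that rigidity back through Alice's local operation $\Lambda_{\text{AA'}}$ using monotonicity of quantum mutual information across the $(\text{AA'})|\text{C}$ cut.

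\emph{Stage 1: identify $\rho^{*}_{\text{AA'C}}$ exactly.} By Lemma~\ref{l7} the tripartite output must be the pure state $\ket{\psi}_{\text{ABC}}=\frac{1}{\sqrt{d}}\sum_{j}e^{i\theta_{j}}\ket{\psi_{j}}_{\text{A}}\ket{jj}_{\text{BC}}$ with orthonormal $\{\ket{\psi_{j}}\}$. Expanding $I_{\text{A}}\otimes\mathcal{S}$ via Eq.~(\ref{orderclear}), the decoherence-free projector $P_{0}=\sum_{j}\ketbra{jj}{jj}_{\text{A'C}}$ supplies the only Kraus contribution that can carry off-diagonal $\ketbra{jj}{kk}_{\text{A'C}}$ coherence, while every other Kraus term produces a contribution supported on the separable states $\ketbra{jj}{jj}_{\text{A'C}}$. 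Writing the required identity
\[
\ketbra{\psi}{\psi}_{\text{AA'C}} = (I_{\text{A}}\otimes P_{0})\,\rho^{*}_{\text{AA'C}}\,(I_{\text{A}}\otimes P_{0}) + R,
\]
with $R\geq 0$ separable across $(\text{AA'})|\text{C}$, the rank-one projector $\ketbra{\psi}{\psi}$ dominates $R$, so $\operatorname{supp}(R)\subseteq\Span\{\ket{\psi}\}$; since $\ket{\psi}$ is maximally entangled across $(\text{AA'})|\text{C}$, no non-zero positive operator in that one-dimensional span can be separable, and therefore $R=0$. Thus $\rho^{*}_{\text{AA'C}}$ is supported on $\mathcal{H}_{\text{A}}\otimes\Span\{\ket{jj}_{\text{A'C}}\}$, on which $I_{\text{A}}\otimes\mathcal{S}$ acts as the identity, giving $\rho^{*}_{\text{AA'C}}=\ketbra{\psi}{\psi}_{\text{AA'C}}$. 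In particular $\rho^{*}_{\text{C}}=I_{\text{C}}/d$ and $S(\rho^{*}_{\text{AA'}})=\log d$.

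\emph{Stage 2: back out $\rho_{\text{AC}}$.} Purity of $\rho^{*}_{\text{AA'C}}$ gives
\[
I(\text{AA'}\!:\!\text{C})_{\rho^{*}}=S(\rho^{*}_{\text{AA'}})+S(\rho^{*}_{\text{C}})-S(\rho^{*}_{\text{AA'C}})=2\log d.
\]
Because $\rho^{*}_{\text{AA'C}}=(\Lambda_{\text{AA'}}\otimes I_{\text{C}})(\rho_{\text{AC}}\otimes\sigma_{\text{A'}})$ is obtained by a CPTP map acting only on the $\text{AA'}$ side of the $(\text{AA'})|\text{C}$ bipartition, monotonicity of mutual information under local operations yields
\[
2\log d = I(\text{AA'}\!:\!\text{C})_{\rho^{*}} \leq I(\text{AA'}\!:\!\text{C})_{\rho_{\text{AC}}\otimes\sigma_{\text{A'}}} = I(\text{A}\!:\!\text{C})_{\rho_{\text{AC}}},
\]
the last equality because appending the uncorrelated ancilla $\sigma_{\text{A'}}$ does not affect the mutual information. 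Since $\rho_{\text{AC}}$ lives in $d\otimes d$, $I(\text{A}\!:\!\text{C})_{\rho_{\text{AC}}}\leq 2\log d$, so the chain saturates. Saturation of $I(\text{A}\!:\!\text{C})=S(\rho_{\text{A}})+S(\rho_{\text{C}})-S(\rho_{\text{AC}})=2\log d$ forces $S(\rho_{\text{AC}})=0$ and $S(\rho_{\text{A}})=S(\rho_{\text{C}})=\log d$, so $\rho_{\text{AC}}$ is pure with maximally mixed marginals; by Schmidt decomposition it is then a $d$-dimensional maximally entangled state, completing the proof.

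The hard part will be Stage~1: ruling out that the classically-diagonal Kraus residuals of $I_{\text{A}}\otimes\mathcal{S}$ conspire with the DFS-preserving part to still reproduce the pure entangled output. The key rigidity is that a non-zero separable positive operator cannot be dominated by an entangled rank-one projector. Once $\rho^{*}_{\text{AA'C}}$ is pinned to $\ketbra{\psi}{\psi}$, Stage~2 is a textbook mutual-information bound.
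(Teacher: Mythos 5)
Your proof is correct, and it shares the paper's overall skeleton (separable terms must vanish $\Rightarrow$ the input $\rho^{*}_{\rm AA'C}$ is pinned to a pure GHZ-like state supported on the decoherence-free subspace $\Rightarrow$ monotonicity pushes the rigidity back to $\rho_{\rm AC}$), but both halves are executed differently, and in each case your version is arguably tighter. In Stage 1, the paper simply asserts that preserving maximal $A|A'C$ entanglement forces the separable terms of Eq.~(\ref{orderclear}) to vanish, leaning on Corollary~\ref{coro1}; your rank-one domination argument ($X+R=\ketbra{\psi}{\psi}$ with $X,R\ge 0$ forces $R\le\ketbra{\psi}{\psi}$, hence $R=\lambda\ketbra{\psi}{\psi}$, which cannot be separable unless $\lambda=0$) supplies the missing justification explicitly and bypasses Corollary~\ref{coro1} entirely, since it pins down $\rho^{*}_{\rm AA'C}$ from the channel identity and Lemma~\ref{l7} alone. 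In Stage 2, the paper constructs an explicit local protocol (a Fourier-basis measurement on $A'$) converting $\ket{\psi^{*}}_{\rm AA'C}$ into a maximally entangled $AC$ state and invokes LOCC monotonicity of entanglement, whereas you use saturation of the mutual-information chain $2\log d=I({\rm AA'}\!:\!{\rm C})_{\rho^{*}}\le I({\rm A}\!:\!{\rm C})_{\rho_{\rm AC}}\le 2\log d$, which forces $S(\rho_{\rm AC})=0$ and maximally mixed marginals without exhibiting any protocol. The trade-off: the paper's route is more operational and reuses Corollary~\ref{coro1} (which it needs anyway for the controlled-choice case), while yours is more self-contained and avoids the slightly awkward ``Fourier basis of $\{\ket{\psi_j}\}$ on subsystem $A'$'' step. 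The only cosmetic point to watch is the implicit relabelling $A'\leftrightarrow B$ when you equate $\rho^{*}_{\rm AA'C}$ with the output projector; the paper flags this as inessential and it is.
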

\begin{proof}
To preserve the maximal entanglement in $A|A'C$ bipartition of the state $\rho^{*}_{\text{AA'C}}$ under the controlled-order operation $I\otimes\mathcal{S}(\mathcal{E}_{0}, \mathcal{E}_{1}, \cdots, \mathcal{E}_{d-1})$, the separable terms in Eq. (\ref{e28}), i.e., in Eq. (\ref{orderclear}) should vanish. This implies,
$$\sum_{l\neq j}(I_{\text{A}}\otimes\ketbra{j}{l}_{\text{A'}}\otimes\ketbra{j}{j}_{\text{C}})\rho^{*}_{\text{AA'C}}(I_{\text{A}}\otimes\ketbra{l}{j}_{\text{A'}}\otimes\ketbra{j}{j}_{\text{C}})=0,~ ~ \forall j\in \{0,1, \cdots, d-1\}.$$
Therefore $\sigma^{*}_{\text{A'C}}:=\text{Tr}_{\text{A}}(\rho^{*}_{\text{AA'C}})$ will be orthogonal to the subspace spanned by $\ket{l}\otimes\ket{j},~\forall l,j\in\{0, 1, \cdots, (d-1)\}$ and $l\neq j$.\par
This, along with Corollary \ref{coro1} implies: $\sigma^{*}_{\text{A'C}}=\frac{1}{d}\sum_{j=0}^{d-1}\ketbra{jj}{jj}$
and hence the state $\rho^{*}_{\text{AA'C}}$ is pure, which can be written as
$$\ket{\psi^{*}}_{\text{AA'C}}=\frac{1}{\sqrt{d}}\sum_{j=0}^{d-1}e^{i\phi_{j}}\ket{\psi_{j}}_{\text{A}}\otimes\ket{jj}_{\text{A'C}},~\text{where } \<\psi_{k}\ket{\psi_{l}}=0, ~\forall k\neq l.$$
Note that, just by performing a measurement in $d$-dimensional Fourier basis 0f $\{\ket{\psi_{j}}\}$ on the subsystem $A'$, Alice can prepare a maximally entangled state between Charlie and herself. This, in turn, demands that the state $\rho_{\text{AC}}$, initially shared between Alice and Charlie, should be maximally entangled, otherwise Alice can increase entanglement only performing local operations in her lab.
\end{proof}

\medskip

\textbf{Proof for the controlled-choice.}\par
Let us first consider the controlled-choice configuration of $d$ orthogonal information-erasing channels acting on the three-qudit state $\rho^{*}_{\text{AA'C}}$. Following from Eq. (\ref{Tdecomposition}) we can write,
\begin{eqnarray}\label{e31}
\nonumber &\widetilde{\mathcal{T}}(\widetilde{\mathcal{E}}_{0},\widetilde{\mathcal{E}}_{1}, \cdots, \widetilde{\mathcal{E}}_{d-1})(\rho_{\text{AA'C}}):=I_{\text{A}}\otimes\mathcal{T}(\widetilde{\mathcal{E}}_{0},\widetilde{\mathcal{E}}_{1}, \cdots, \widetilde{\mathcal{E}}_{d-1})(\rho_{\text{AA'C}})\\&=I_{\text{A}}\otimes T_{0,0,\cdots,0}~\rho_{\text{AA'C}}~I_{\text{A}}\otimes T_{0,0,\cdots,0}^{\dagger}+\sum_{j=0}^{d-1}\text{Tr}_{\text{A'C}}[I_{\text{A}}\otimes(I_{\text{A'}}-\ketbra{v_j}{v_j})\otimes\ketbra{j}{j}_{\text{C}}~\rho_{\text{AA'C}}]\ketbra{j}{j}_{\text{B}}\otimes\ketbra{j}{j}_{\text{C}}
\end{eqnarray}
where, $||v_{j}||\leq1,~\forall j\in\{0, 1, \cdots, d-1\}$ and $T_{0, 0, \cdots, 0}$ is same as in Eq. (\ref{e13}).\par
Keeping this in mind we will now present the our main result in the following.
\begin{lemma}\label{l9}
It is possible to obtain the state $\ket{\psi}_{\text{ABC}}$ (as in Eq. (\ref{e30})) under controlled-choice configuration of $d$ orthogonal pin-maps, only if Alice and Charlie shares a maximally entangled state.
\end{lemma}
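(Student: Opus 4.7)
The plan is to mirror Lemma~\ref{l8} (the controlled-order case) but carry out the more delicate analysis forced by the controlled-choice decomposition of Eq.~(\ref{e31}), along the lines of Lemma~\ref{lem:subspace}. By Corollary~\ref{coro1}, $\rho^{*}_{\text{AA'C}}$ is maximally entangled (hence pure) across the $A|A'C$ cut, and by Lemma~\ref{l7} the target output $\ket{\psi}_{\text{ABC}}$ is pure. In Eq.~(\ref{e31}) the second summand is a sum of positive operators with definite block structure on $B$; purity of the output therefore forces each coefficient $\Tr\!\big[(I_{\text{A'}}-\ketbra{v_{j}}{v_{j}})\otimes\ketbra{j}{j}_{\text{C}}\,\rho^{*}_{\text{A'C}}\big]$ to vanish individually.

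From here, by an argument that mirrors Lemma~\ref{lem:subspace} with $A$ acting as an inert spectator, I would conclude that $\Vert v_{j}\Vert=1$ for every $j\in\{0,\ldots,d-1\}$ and that the support of $\rho^{*}_{\text{AA'C}}$ lies in $\mathcal{H}_{\text{A}}\otimes\Span\{\ket{v_{j}}_{\text{A'}}\otimes\ket{j}_{\text{C}}\}_{j=0}^{d-1}$. The norm condition follows from dimension counting: if any $\Vert v_{j}\Vert<1$, then the surviving coherent block $T_{*}$ from Eq.~(\ref{e20}) would have range of dimension $|S_{*}|<d$ in $BC$, forbidding Schmidt rank $d$ of the output across $A|BC$. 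Combining the support constraint with purity and maximal entanglement across $A|A'C$, and using that $\{\ket{v_{j}}_{\text{A'}}\otimes\ket{j}_{\text{C}}\}_{j=0}^{d-1}$ are orthonormal just because $\{\ket{j}_{\text{C}}\}$ are (no assumption on overlaps $\langle v_{k}|v_{j}\rangle$ is needed), I would pin down the Schmidt form
\begin{equation*}
\ket{\psi^{*}}_{\text{AA'C}}=\frac{1}{\sqrt{d}}\sum_{j=0}^{d-1}e^{i\phi_{j}}\ket{\psi_{j}}_{\text{A}}\otimes\ket{v_{j}}_{\text{A'}}\otimes\ket{j}_{\text{C}},
\end{equation*}
with $\{\ket{\psi_{j}}\}$ orthonormal in $\mathcal{H}_{\text{A}}$.

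Next, a direct partial trace over $A$ and $A'$, using orthonormality of $\{\ket{\psi_{j}}\}$, yields $\rho^{*}_{\text{C}}=I_{\text{C}}/d$, so that $\ket{\psi^{*}}_{\text{AA'C}}$ is actually maximally entangled across the Alice-Charlie cut $AA'|C$ with $\log d$ ebits. Since $\ket{\psi^{*}}_{\text{AA'C}}$ was produced from $\rho_{\text{AC}}\otimes\sigma_{\text{A'}}$ by the strictly local channel $\Lambda_{\text{AA'}}\otimes I_{\text{C}}$, monotonicity of entanglement under local operations forbids the Alice-Charlie entanglement from having grown; hence the entanglement of $\rho_{\text{AC}}$ across $A|C$ is at least $\log d$, which in the $d\times d$ space forces $\rho_{\text{AC}}$ to be a maximally entangled two-qudit state.

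The main obstacle I anticipate is the first step: carefully executing the spectator-system version of Lemma~\ref{lem:subspace}, and in particular verifying that it is the Schmidt-rank-$d$ requirement across $A|BC$ (rather than the orthonormality of $d$ encoded states, which drove the analogous conclusion in the private-communication proof) that forces $\Vert v_{j}\Vert=1$ for every $j$.
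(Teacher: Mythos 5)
Your proposal is correct and follows essentially the same route as the paper's proof: vanishing of the separable terms in Eq.~(\ref{e42})/(\ref{e31}), the rank/dimension argument forcing $\|\,|v_j\>\,\|=1$ for all $j$ and confining the support to $\Span\{|v_j\>\otimes|j\>\}$, the resulting Schmidt form of $\rho^*_{\text{AA'C}}$, and LOCC monotonicity across the Alice--Charlie cut. The only (harmless) difference is the final step: you compute $\rho^*_{\rm C}=I/d$ directly to certify $\log d$ ebits across $AA'|C$, whereas the paper first applies a local unitary orthonormalizing the $|v_j\>$ and then measures subsystem $A$ to distil the maximal entanglement with Charlie -- both are valid and lead to the same conclusion.
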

\begin{proof}
Following from Corollary \ref{coro1}, to preserve the maximal entanglement in the $A|A'C$ bipartition of the state $\rho^{*}_{\text{AA'C}}$ the separable terms in Eq. (\ref{e31}) must vanish. Therefore,
$$[I_{\text{A}}\otimes(I_{\text{A'}}-\ketbra{v_j}{v_j})\otimes\ketbra{j}{j}_{\text{C}}]\rho^{*}_{\text{AA'C}}=0,~\forall j\in\{0, 1, \cdots, d-1\}.$$
This further implies, $(I_{\text{A}}\otimes\ketbra{v_j}{v_j}_{\text{A'}}\otimes\ketbra{j}{j}_{\text{C}})\rho^{*}_{\text{AA'C}}=0, \forall ||v_{j}||< 1$, and hence we can rewrite Eq. (\ref{e31}) as
$$\widetilde{\map T}(\widetilde{\mathcal{E}}_{0},\widetilde{\mathcal{E}}_{1}, \cdots, \widetilde{\mathcal{E}}_{d-1})\rho^{*}_{\text{AA'C}}=(I_{\text{A}}\otimes T_{0,0,\cdots,0})~\rho^{*}_{\text{AA'C}}~(I_{\text{A}}\otimes T_{0,0,\cdots,0}^{\dagger})$$
where, $T_{0,0,\cdots,0}$ is same as in Eq. (\ref{e20}). Also identifying the form of $\widetilde{\map T}(\widetilde{\mathcal{E}}_{0},\widetilde{\mathcal{E}}_{1}, \cdots, \widetilde{\mathcal{E}}_{d-1})\rho^{*}_{\text{AA'C}}$ with that of Eq. (\ref{e30}), we can conclude that the $A'C$ marginal of the state $\rho^{*}_{\text{AA'C}}$ has support contained in the supspace spanned by $\ket{v_{j}}\otimes\ket{j}$ and $T_{0,0,\cdots,0}$ contains every $||v_{j}||=1,~\forall j\in\{0, 1,\cdots, d-1\}$.\par
Therefore, following from Corollary \ref{coro1}, the $A'C$ subsystem of the state $\rho^{*}_{\text{AA'C}}$ should be maximally mixed in the $d$-dimensional subspace spanned by $\ket{v_{j}}\otimes\ket{j}$, i.e., $\sigma^{*}_{\text{A'C}}=\text{Tr}_{\text{A}}(\rho^{*}_{\text{AA'C}})=\frac{1}{d}\sum_{j=0}^{d-1}\ketbra{v_j j}{v_j j}$ and also maximally entangled in the $A|A'C$ bipartition. Hence, the state $\rho^{*}_{\text{AA'C}}$ can be identified as a pure state given by,
\begin{equation}\label{e32}
\ket{\phi^{*}}_{\text{AA'C}}=\frac{1}{\sqrt{d}}\sum_{j=0}^{d-1}e^{i\theta_{j}}\ket{\psi_{j}}_{\text{A}}\otimes\ket{v_j}_{\text{A'}}\otimes\ket{j}_{\text{C}},
\end{equation}
where $\{\ket{\psi_{j}}\}_{j=0}^{d-1}$ is any orthonormal basis for the subsystem $A$. Therefore, applying a joint unitary $U_{\text{AA'}}$ Alice can transform the state $$\ket{\phi^{*}}_{\text{AA'C}}\to\ket{\xi^{*}}_{\text{AA'C}}:=\frac{1}{\sqrt{d}}\sum_{j=0}^{d-1}e^{i\theta_{j}}\ket{\psi_{j}}_{\text{A}}\otimes\ket{w_j}_{\text{A'}}\otimes\ket{j}_{\text{C}},$$
where $\{\ket{w_{j}}\}$ is an orthonormal basis for the subsystem $A'$, irrespective of the orthogonality condition for $\{\ket{v_j}\}$.\par
Now, performing a measurement on the subsystem $A$ of the state $\ket{\xi^{*}}_{\text{AA'C}}$ Alice can establish maximal  entanglement between Charlie and herself. This further demands that the state $\rho_{\text{AC}}$, initially shared between Alice and Charlie, should be maximally entangled. Otherwise, Alice will be able to increase entanglement only performing local operations at her possession.\par
This completes the proof.
\end{proof}
\section{Establishing multipartite entanglement with multiple receivers}
This section generalizes the results of the previous one from a single receiver to multiple, spatially separated receivers.  
   In this case, the task is to establish a $d$-dimensional GHZ-state  $\frac{1}{\sqrt{d}}\sum_{j=0}^{d-1}\ket{j}^{\otimes (N+1)}$ between Alice and $N$ Bobs.   This state represents a natural generalization of the canonical bipartite Bell state, and has applications in many quantum information processing tasks \cite{buhrman1999multiparty, brassard2007anonymous, broadbent2009ghz,komar2014quantum,eldredge2018optimal,qian2019heisenberg}.  
    Moreover, the GHZ state is impotant in that it maximizes   a distance-based measure of multipartite entanglement called the generalized geometric measure (GGM) \cite{ggm1,ggm2, ggm3, ggm4}, which for pure states admits the analytical expression
    \begin{eqnarray}
\label{Eq:GGM-final-express}
GGM(|\psi_{n}\rangle) = 1 - \max \big \lbrace \lambda_{A:B} | A\cup B = \lbrace 1,2,\ldots, n \rbrace, A\cap B = \emptyset \big \rbrace, \end{eqnarray}
where $\lambda_{A:B}$ denotes the maximal Schmidt number in the $A : B$ bipartition, and  the maximization is carried out over all possible bipartitions (note that  for mixed states  the computation of the GGM is generally hard  \cite{ggmmixed})   In the case of the $d$-dimensional GHZ state, the GGM assumes the maximum value  $\frac{d-1}{d}$.

\subsection{The Kraus Operators}
Let us first consider the Kraus operators for controlled order of $N$ noisy transmission lines for $N$ spatially separated Bobs, each consisting of $d$ orthogonal qudit information-erasing channels $\{\mathcal{E}_{0},\mathcal{E}_{1},\cdots,\mathcal{E}_{d-1}\}$, along with an identity channel on Alice's qudit.

\begin{equation}
    \mathbb{I}_{\text{A}}\otimes\mathcal{S}^{(N)}(\mathcal{E}_{0}^{\otimes N},\cdots,\mathcal{E}_{d-1}^{\otimes N})[\rho_{A A_1A_2\cdots A_NC}]=\sum_{I_{0},I_{1},\cdots,I_{d-1}}\widetilde{\mathcal{K}}_{I_{0}I_{1}\cdots I_{d-1}}[\rho_{A_1A_2\cdots A_NC}]\widetilde{\mathcal{K}}_{I_{1}I_{2}\cdots I_{d-1}}^{\dagger}
\end{equation}
where, $I_{k}$ is a $N$-tuple consisting of the following set of numbers $\{i_{k,n}\}_{n=1}^N$. 
The individual Kraus operators can then be expressed as follows 

\begin{eqnarray}
\widetilde{\mathcal{K}}_{I_{0}I_{1}\cdots I_{d-1}} &=& \mathbb{I}_{\text{A}}\otimes\sum_{j=0}^{d-1}\Big(\bigotimes_{n=1}^{N}E^{(j)}_{i_j,n}   E^{( j \oplus 1)}_{i_{j\oplus1},n}   \cdots  E^{(j \oplus (d-1) )}_{i_{j\oplus  (d-1)},n} \Big) \otimes |j\>\<j| \nonumber \\
 &=& \mathbb{I}_{\text{A}}\otimes\sum_{j=0}^{d-1}\Big(\bigotimes_{n=1}^{N} s_{j,n} |j\>_{\text{B}_n}\<i_{j \ominus 1,n}|_{\text{A}_n} \Big) \otimes |j\>\<j|_{\text{C}},
\label{eq:Krausdisgen}
\end{eqnarray}
where $\forall n, ~s_{j,n} = \prod_{l \neq d \ominus 1} \<i_{l,n}|l \oplus 1 \>$.
Again, one can identify $E_{i_{p,n}}^{(q)}=\ketbra{q}{i_{p}}, \forall p,q \in \{0,1,\cdots, (d-1)\}$ as the $i_p^{th}$ Kraus operator of the qudit information-erasing channel $\mathcal{E}_{q}$ acting on the $n^{th}$ transmission line.
Now $s_{j,n}$ is non-zero only when either one of the following two cases occur\\
1. For all $n \in {1,2, \ldots N}$, $i_{l,n} = l \oplus 1$ for all $j \in {0,1,2, \ldots d-1}$,\\
2. For some $n$ values, say $k$ of them $(1 \leq k \leq N)$, $i_{l,n} = l \oplus 1$ holds for all $j$ except one. For the remaining $N - k$ cases, we have $i_{l,n} = l \oplus 1$ for all $j \in {0,1,2, \ldots d-1}$.  \\

The Kraus operator corresponding to case 1. is unique and is given by $\mathbb{I}\otimes\sum_{j = 0}^{d-1} |j\>\<j|^{\otimes N}\otimes |j\>\<j| =: P_0^N$. The structure of the Kraus operators for case 2. is given by
\begin{eqnarray}
\mathbb{I}\otimes|j\>^{\otimes N} \<x_k(p)|\otimes \ketbra{j}{j},
\label{eq:diskrauscase2order}
\end{eqnarray}
where $|x_k(p)\>$ is a ket with $N$ elements such that $k$ of them are different from $j$. Clearly the elements of $|x_k(p)\>$ can be chosen in $^NC_{k} = \frac{N!}{k!(N-k)!}$ ways, and $p$ denotes the $p^{\text{th}}$ configuration. For example for $N = 2$, $k$ can take two values, namely $1$ and $2$. Now for $k = 1$, we have $|x_1(1)\> = |jl\>$ and $|x_1(2)\> = |lj\>$.  For $k = 2$, we get just $|x_2(1)\> = |l_1 l_2\>$.

\begin{lemma}\label{l10}
The controlled operation of $N$-noisy channels each consisting of $d$ orthogonal information-erasing channels using a perfect side channel for the control qudit maps every $(N+1)$ qudit input state to the subspace spanned by $\{\ket{0}^{\otimes(N+1)},\ket{1}^{\otimes(N+1)},\cdots,\ket{d-1}^{\otimes(N+1)}\}$. This also constitutes the decoherence free subspace.
\end{lemma}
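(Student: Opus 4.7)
The plan is to prove Lemma \ref{l10} by a direct case analysis of the explicit Kraus operators $\widetilde{\mathcal{K}}_{I_{0}I_{1}\cdots I_{d-1}}$ written out just above the lemma. Each such operator is either zero, the distinguished operator $P_0^N := \mathbb{I}_A \otimes \sum_{j=0}^{d-1} (\ket{j}\bra{j})^{\otimes N} \otimes \ket{j}\bra{j}_C$ coming from case 1, or one of the operators $\mathbb{I}_A \otimes \ket{j}^{\otimes N}\bra{x_k(p)} \otimes \ket{j}\bra{j}_C$ coming from case 2, where the ket $\ket{x_k(p)}$ differs from $\ket{j}^{\otimes N}$ in at least $k\ge 1$ of its $N$ tensor factors. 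Both assertions of the lemma reduce to reading off the behavior of these two families.

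First I would verify the image containment. Set $\spc{H}_0 := \Span\{\ket{j}^{\otimes(N+1)}\}_{j=0}^{d-1}$, viewed as a subspace of the $(N+1)$ output qudits $B_1\cdots B_N C$. The operator $P_0^N$ is, by inspection, the orthogonal projector onto $\spc{H}_0$ (tensored with $\mathbb{I}_A$), so its range is contained in $\spc{H}_0$. Each case-2 operator has range spanned by the single vector $\ket{j}^{\otimes N} \otimes \ket{j}_C = \ket{j}^{\otimes(N+1)}$, which also lies in $\spc{H}_0$. Therefore every Kraus term $K \rho K^{\dagger}$ is supported on $\spc{H}_0$, and hence so is the channel output for any input $\rho$.

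Next I would show that $\spc{H}_0$ is a decoherence-free subspace. Take any state $\ket{\Psi} = \sum_{j} c_j \ket{j}^{\otimes(N+1)} \in \spc{H}_0$. For a case-2 Kraus operator with index $j$, the control factor $\ket{j}\bra{j}_C$ selects the $\ket{j}_C$ component of $\ket{\Psi}$, leaving $c_j \ket{j}^{\otimes N}$ on the remaining $N$ qudits; then $\bra{x_k(p)}$ evaluates to zero on $\ket{j}^{\otimes N}$ because the two kets disagree in at least one tensor slot. So every case-2 operator annihilates $\spc{H}_0$, and the only surviving contribution is $P_0^N$, which acts as the identity on $\spc{H}_0$. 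Consequently the full channel reduces to the identity on $\spc{H}_0$, confirming decoherence freedom.

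The main obstacle is purely notational bookkeeping across the multi-indices $I_0,\ldots,I_{d-1}$ and the $d$ cyclic permutations, but the case classification recorded in the paragraph preceding the lemma already organizes this and leaves only the two explicit forms above to analyze. An entirely analogous argument, substituting the generalized controlled-choice Kraus operators from Lemma \ref{lem:Tdecomposition}, handles the controlled-choice configuration: there too the nonzero Kraus operators split into a single identity-like operator on $\spc{H}_0$ together with a family that annihilates it.
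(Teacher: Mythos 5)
Your proof is correct and follows essentially the same route as the paper: both arguments classify the nonzero Kraus operators into the projector $P_0^N$ and the case‑2 family $\mathbb{I}_A\otimes|j\>^{\otimes N}\<x_k(p)|\otimes|j\>\<j|_{\rm C}$, read off the image containment from their ranges, and obtain decoherence freedom by noting that the case‑2 operators annihilate the span of $\{|j\>^{\otimes(N+1)}\}$ while $P_0^N$ acts as the identity there. The paper merely packages the case‑2 contributions as the explicitly separable second term of its Eq.~\eqref{e36} and additionally spells out the controlled‑choice Kraus operators, which you defer to an analogous argument; the substance is the same.
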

\begin{proof}
From the expression of the Kraus operators arising from cases 1 and 2, it is clear that the $(N + 1 )$ qudit $A_1A_2\cdots A_NC$ subsystem of any arbitrary state $\rho_{\text{AA}_1\text{A}_{2}\cdots\text{A}_{N}\text{C}}$ would be mapped into the subspace spanned by $\{|j\>^{\otimes (N+1)} \}_{j=0}^{d-1}$. Furthermore, using these expressions, we can rewrite Eq. \eqref{eq:Krausdisgen} as 
\begin{align}\label{e36}
    \mathbb{I}_{\text{A}}\otimes\mathcal{S}^{(N)}(\tilde{\mathcal{E}}_{0},\cdots,\tilde{\mathcal{E}}_{d-1})[\rho_{\text{AA}_1\text{A}_2 \ldots \text{A}_N \text{C}}]   &=\nonumber \\
    P_0^N  \rho_{\text{AA}_1\text{A}_2 \ldots \text{A}_N \text{C}}  P_0^N +  \sum_{k = 1}^N \sum_{j=0}^{d-1} & \sum_{l_1,l_2 \ldots l_k \not  =  j} \sum_{p=1}^{^NC_k}  [\mathbb{I}_{\text{A}} \otimes \<  x_k^{\{l_i\}}(p)|\< j|  (\rho_{\text{AA}_1\text{A}_2 \ldots \text{A}_N \text{C}}) \mathbb{I}_{\text{A}} \otimes  |x_k^{\{l_i\}}(p)\>  |j\>] ~  |j\>\<j|^{\otimes N}  \otimes |j\>\<j|  \, . 
\end{align}
Here the additional superscript $\{l_i \}$ in $|x_k^{\{l_i\}}(p)\>$ denotes the values of the $k$ elements that are different from $j$.

The same channel is obtained from a controlled choice of the information-erasing channels $\{\map E_{j}\}_{j=0}^{d-1},$ for each of the $N$ noisy transmission lines when, as before,  one considers the extended channels $\{\widetilde{\map E_{j}}\}_{j=0}^{d-1}$ with Kraus operators
$\widetilde E^{(j)}_{i_j}   = |j\>\<i_j|  +    \<i_j| j\> \,  |{\rm triv}\>\<{\rm triv}|$.  The controlled-choice channel is then given by \cite{abbott2020communication, chiribella2019quantum} 
\begin{equation}\label{eq:dischoicekraus}
    \mathbb{I}_{\text{A}}\otimes\mathcal{T}(\widetilde{\mathcal{E}}_{0}^{\otimes N}, \widetilde{\mathcal{E}}_{1}^{\otimes N}, \cdots, \widetilde{\mathcal{E}}_{d-1}^{\otimes N})(\rho_{\text{AA}_1\text{A}_2 \ldots \text{A}_N \text{C}})=\sum_{I_{0},I_{1},\cdots,I_{d-1}}T_{I_{0},I_{1},\cdots,I_{d-1}}\rho_{\text{AA}_1\text{A}_2 \ldots \text{A}_N \text{C}}T_{I_{0},I_{1},\cdots,I_{d-1}}^{\dagger}
\end{equation}
with  Kraus operators
\begin{equation}\label{eq:dischoice}
    T_{I_{0},I_{1},\cdots,I_{d-1}}=\mathbb{I}_{\text{A}}\otimes\sum_{j=0}^{d-1} \Big(\bigotimes_{n=1}^{N}   t_{j,n} \,   |j\>_{\text{B}_{n}}\<i_{j,n}|_{\text{A}_{n}} \Big)   \otimes |j\>\<j|_{\text{C}}   \, ,
    \qquad \qquad   t_{j,n}:  =   \prod_{l\not  = j}   \alpha_{i_{l,n}}^{(l)}    \, ,
\end{equation}
where $\alpha^{(l)}_{i_{l,n}}$ are the amplitudes associated to the $l$-th channel of the $n$-th transmission line.  If we set $\alpha^{(l)}_{i_{l,n}}  =  \<i_{l,n}|  l\>$, then there are two  possible cases for which the Kraus operators become non-zero:
\begin{enumerate}
\item $i_{l,n}=l$ for every $n\in\{1,2,\cdots,N\}$ and for every $l\in\{0,1,\cdots,d-1\}$,
\item For some $n$ values, say $k$ of them $(1 \leq k \leq N)$, $i_{l,n} = l$ holds for all except one value of $l$. For the remaining $N-k$ cases, $i_{l,n}=l$ for all $l\in\{0,1,\cdots,d-1\}$.
\end{enumerate}
For Case 1, the Kraus operator is $T_{\bf{0},\bf{1},\dots,  \bf{d-1}}  =  P_0^N$, where $\bf{k}$ is an $n$ tuple with all elements equal to $k$.  The Kraus operators for Case 2,  is of the form
\begin{eqnarray}
\mathbb{I}\otimes|l\>^{\otimes N}\<x_k(p)|\otimes \ketbra{l}{l}
\end{eqnarray}
Here $|x_k(p)\>$ is a ket with $N$ elements such that $k$ of them are different from $l$. As also mentioned after Eq. \eqref{eq:diskrauscase2order}, $p$ denotes the $p^{\text{th}}$ configuration out of a total of $^NC_k$ configurations.
We now arrive at the following relation  
\begin{align}\label{edisS=C}
    \mathbb{I}\otimes\mathcal{S}^{(N)}(\mathcal{E}_{0}^{\otimes N}, \mathcal{E}_{1}^{\otimes N}, \cdots, \mathcal{E}_{d-1}^{\otimes N})   =    \mathbb{I}\otimes\mathcal{T}^{(N)}(\widetilde{\mathcal{E}}_{0}^{\otimes N}, \widetilde{\mathcal{E}}_{1}^{\otimes N}, \cdots, \widetilde{\mathcal{E}}_{d-1}^{\otimes N})  =:  \map K^{(N)}  \, .
\end{align}
It generalizes Eq. (\ref{e29}) for arbitrary number of noisy transmission lines.  In the following we will treat the controlled-order and controlled-choice in a unified way, referring to the channel $\map K^{(N)}$.  

Now if $\rho_{\text{AA}_1\text{A}_2 \ldots \text{A}_N \text{C}}$ is so chosen such that the span of its $\text{A}_{1}\text{A}_{2}\cdots\text{A}_{N}\text{C}$ subsystem is $\{|j\>^{\otimes (N+1)} \}_{j=0}^{k}$, where $k \leq d-1$, the second term of Eq. \eqref{e36} vanishes identically. Hence its evolution is controlled by $P_0^N$ which in turn keeps it unchanged. Since this is true for any  $\rho_{\text{A}_1\text{A}_2 \ldots \text{A}_N \text{C}}$ in the subspace $\{|j\>^{\otimes (N+1)} \}_{j=0}^{k}$ $(k \leq d-1)$, we conclude that $\{|j\>^{\otimes (N+1)} \}_{j=0}^{d-1}$ is a decoherence free subspace. 
\end{proof}


\subsection{The \textit{if} part of Theorem. 3}

Consider initially a maximally entangled state $\frac{1}{\sqrt{d}}\sum_{i=0}^{d-1}  \ket{i_{\text{A}}i_{\text{C}}}$ be shared between Alice and Charlie. 
Alice then makes local operations on her lab with additional ancillary qubits to extend her state to  an $N+2$ qudit GHZ state   $\frac{1}{\sqrt{d}}\sum_{i = 0}^{d-1}|j\>^{\otimes (N+2)}$ shared between A, A$_1$, A$_2 \ldots$ A$_N$, and C. Therefore we have $\rho_{\text{AA}_1\text{A}_2 \ldots \text{A}_N \text{C}} = |GHZ\>_{N+2}$.
Now using Eq. \eqref{e36} and lemma \ref{l10}, we have $ \mathcal{K}^{(N)}(\tilde{\mathcal{E}}_{0},\cdots,\tilde{\mathcal{E}}_{d-1})(|GHZ\>_{N+2}) = |GHZ\>_{N+2}$.  Therefore, the final state shared between Alice (A), the $N$ spatially separated Bobs,  (B$_1$, B$_2 \ldots$ B$_N$) and Charlie (C) is $|GHZ\>_{N+2}$. Now Charlie performs a measurement on his qudit in a $d$-dimensional Fourier basis $\{\ket{f_{m}}=\frac{1}{\sqrt{d}}\sum_{j=0}^{d-1}\exp(i\frac{2\pi j m}{d})\ket{j}\}_{m=0}^{d-1}$ and communicates the measurement outcome to the Bobs. They can now apply local unitaries to share an $(N+1)$ qudit $|GHZ\>_{N+1}$ among Alice and themselves. 
This completes the \textit{if} part of the proof.


\subsection{The \textit{only if} part of Theorem. 3}
Here we will prove the necessity of shared maximally entangled state between Alice and Charlie, with the help of the following Lemmas.
\begin{lemma}\label{l11}
The state shared between Alice, all the $N$ Bobs and Charlie, after the controlled quantum operation, must be $(N+2)$ qudit  GHZ state.
\end{lemma}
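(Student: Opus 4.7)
\Proof  The plan is to follow the template of Lemma \ref{l7} verbatim, now extended to the $(N+2)$-party setting. Denote the post-channel state by $\rho_{\text{AB}_1\cdots\text{B}_N\text{C}}$. After Charlie performs a Fourier-basis measurement, broadcasts the outcome classically, and one of the Bobs applies a local unitary correction, Alice and the $N$ Bobs are required to share the $(N+1)$-qudit GHZ state. Since this decoding is an LOCC protocol across the bipartition $\text{AB}_1\cdots\text{B}_N|\text{C}$, and the target GHZ state has Schmidt rank $d$ across every split of its parties, entanglement monotonicity forces $\rho_{\text{AB}_1\cdots\text{B}_N\text{C}}$ to be maximally entangled across the bipartition $\text{A}|\text{B}_1\cdots\text{B}_N\text{C}$; in particular, $\sigma_{\text{A}} := \text{Tr}_{\text{B}_1\cdots\text{B}_N\text{C}}[\rho] = I/d$.

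The next step invokes Lemma \ref{l10}, which constrains the support of $\rho_{\text{AB}_1\cdots\text{B}_N\text{C}}$ on the last $(N+1)$ qudits to the $d$-dimensional subspace spanned by $\{|j\>^{\otimes(N+1)}\}_{j=0}^{d-1}$. Tracing out Alice therefore yields a marginal $\sigma_{\text{B}_1\cdots\text{B}_N\text{C}} = \sum_{j=0}^{d-1} p_j \, |j\>\<j|^{\otimes(N+1)}$ that is automatically diagonal in this product basis. Combining $\sigma_{\text{A}} = I/d$ (which has rank $d$) with the rank bound $\mathrm{rank}(\sigma_{\text{B}_1\cdots\text{B}_N\text{C}}) \le d$ coming from the support constraint, and with the standard fact that the non-zero spectra of the two marginals of any bipartite state must coincide on any purification, pins down $p_j = 1/d$ for every $j$ and forces $\rho_{\text{AB}_1\cdots\text{B}_N\text{C}}$ to be pure.

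Once purity is established, the Schmidt decomposition across $\text{A}|\text{B}_1\cdots\text{B}_N\text{C}$ takes the form
\begin{equation*}
|\Psi\>_{\text{AB}_1\cdots\text{B}_N\text{C}} \;=\; \frac{1}{\sqrt{d}}\sum_{j=0}^{d-1} e^{i\theta_j}\,|\psi_j\>_{\text{A}} \otimes |j\>^{\otimes(N+1)},
\end{equation*}
with $\{|\psi_j\>\}_{j=0}^{d-1}$ an orthonormal set in Alice's Hilbert space, which is a $(N+2)$-qudit GHZ state up to a local unitary on Alice. The step I expect to be the main obstacle is the clean derivation of purity: one must exclude the possibility of a non-trivial classical mixture whose $\text{A}$ and $\text{B}_1\cdots\text{B}_N\text{C}$ marginals both happen to be maximally mixed on their respective $d$-dimensional supports. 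This is resolved precisely because the support constraint from Lemma \ref{l10} caps the rank of $\sigma_{\text{B}_1\cdots\text{B}_N\text{C}}$ at $d$, while the Schmidt rank across $\text{A}|\text{rest}$ must equal $d$ to sustain maximal entanglement; these two facts together pin $\rho$ to be pure, after which the Schmidt form yields the claim. \qed
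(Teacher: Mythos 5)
Your proposal follows essentially the same route as the paper's proof of Lemma \ref{l11}: use the fact that the decoding (Charlie's measurement plus classical communication plus local corrections) is LOCC across the relevant cuts, invoke entanglement monotonicity to conclude that $\rho_{\text{AB}_1\cdots\text{B}_N\text{C}}$ must carry $\log d$ ebits across $\text{A}|\text{B}_1\cdots\text{B}_N\text{C}$, combine this with the support constraint of Lemma \ref{l10}, and read off the GHZ form from the Schmidt decomposition. The paper phrases the monotonicity step via distilling a two-qudit maximally entangled pair between Alice and one Bob, while you argue directly from the Schmidt rank of the target GHZ state; these are equivalent.

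The one place where your write-up is weaker than it should be is exactly the step you flag as the main obstacle. The justification you give for purity --- $\sigma_{\text{A}}=I/d$, $\mathrm{rank}(\sigma_{\text{B}_1\cdots\text{B}_N\text{C}})\le d$, plus ``the spectra of the two marginals coincide on any purification'' --- does not suffice: the equal-spectra statement holds for the marginals of a \emph{pure} bipartite state, so invoking it here presupposes what you are trying to prove, and the marginal/rank conditions alone are genuinely compatible with mixed states (e.g.\ the classically correlated state $\frac1d\sum_j |\psi_j\>\<\psi_j|\otimes|j\>\<j|^{\otimes(N+1)}$ satisfies all of them). What actually forces purity is the maximal-entanglement condition itself: a state of a $d$-dimensional system $\text{A}$ with another system whose entanglement (under any LOCC monotone bounded above by $S(\sigma_{\text{A}})\le\log d$, e.g.\ entanglement of formation) equals $\log d$ must be a pure maximally entangled state. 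The paper is admittedly also terse at this point, but your stated mechanism is the wrong one; replacing it with the monotone argument closes the gap and the rest of your proof stands.
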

\begin{proof}
Note that, the Lemma can be seen as a generalization of Lemma \ref{l7} and here we will use the same flow of arguments.\par
Suppose the state shared between Alice, $N$ Bobs and Charlie after controlled quantum operation is $\rho_{\text{AB}_{1}\text{B}_{2}\cdots\text{B}_{N}\text{C}}$. Performing a local measurement on his qudit, Charlie can communicate the result to the Bobs, who then able to apply proper local operations to obtain a $(N+1)$-qudit GHZ state among Alice and themselves.\par 
Further, it is possible to distill a two-qudit maximally entangled between Alice and exactly one Bob, if all $N$ Bobs are allowed to perform only LOCC. Since, LOCC can not increase entanglement, this implies that $\rho_{\text{AB}_{1}\text{B}_{2}\cdots\text{B}_{N}\text{C}}$ will be maximally entangled in $A|B_{1}B_{2}\cdots B_{N}C$ bipartition. Hence, the marginal of the subsystem $A$ and atleast one among $N$-Bobs and Charlie should be maximally mixed $\frac{I}{d}$. This, along with Lemma \ref{l10}, helps us to conclude that $\rho_{\text{AB}_{1}\text{B}_{2}\cdots\text{B}_{N}\text{C}}$ is pure and can be written in Schmidt form in the $A|B_{1}B_{2}\cdots B_{N}C$ bipartition,
\begin{equation}\label{e41}
     \ket{\psi^{(N)}}_{\text{AB}_{1}\text{B}_{2}\cdots\text{B}_{N}\text{C}}=\frac{1}{\sqrt{d}}\sum_{j=0}^{d-1}e^{i \theta_{j}}\ket{\psi_{j}}_{\text{A}}\otimes\ket{j}^{\otimes (N+1)}_{\text{B}_{1}\text{B}_{2}\cdots\text{B}_{N}\text{C}}.
\end{equation}
where, $\{\ket{\psi_{j}}\}_{\text{A}}$ is any arbitrary orthogonal basis for Alice's subsystem. One can identify the state Eq. (\ref{e41}) as a generalization of Eq. (\ref{e30}). 

This completes the proof.
\end{proof}
This helps us to conclude the following Corollary (generalization of Corollary \ref{coro1}) regarding the entanglement content of the state $\rho^{*}_{\text{AA}_{1}\text{A}_{2}\cdots\text{A}_{N}C}$ used as an input for the controlled quantum operations.\par
\begin{corollary}\label{coro2}
The state $\rho^{*}_{\text{AA}_{1}\text{A}_{2}\cdots\text{A}_{N}C}$ should be maximally entangled in $A|A_{1}A_{2}\cdots A_{N}C$ bipartition.
\end{corollary}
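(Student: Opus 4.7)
The plan is to establish Corollary~\ref{coro2} as an immediate consequence of Lemma~\ref{l11}, by invoking the monotonicity of entanglement under local quantum channels. The argument is the multi-receiver counterpart of the one used for Corollary~\ref{coro1}, with only the subsystem bookkeeping changed.

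First, I would invoke Lemma~\ref{l11} to conclude that the post-channel state $\rho_{\text{AB}_{1}\cdots\text{B}_{N}\text{C}}$ is exactly the pure $(N+2)$-qudit state displayed in Eq.~(\ref{e41}). Its Schmidt decomposition across the bipartition $A\,|\,B_{1}\cdots B_{N}C$ has $d$ terms of equal weight $1/\sqrt{d}$, so the bipartite entanglement across that cut equals $\log d$, the maximum compatible with $\dim A=d$.

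Next, I would observe that the operation mapping $\rho^{*}_{\text{AA}_{1}\cdots\text{A}_{N}\text{C}}$ to $\rho_{\text{AB}_{1}\cdots\text{B}_{N}\text{C}}$ is of the form $\mathbb{I}_{A}\otimes\mathcal{K}^{(N)}$, see Eq.~(\ref{edisS=C}): it acts as a CPTP map on the $A_{1}\cdots A_{N}C$ side of the bipartition $A\,|\,A_{1}\cdots A_{N}C$ and leaves $A$ untouched. Hence it is a local quantum operation with respect to this cut, and it cannot increase the bipartite entanglement across it. Together with the previous step, this forces the entanglement of $\rho^{*}_{\text{AA}_{1}\cdots\text{A}_{N}\text{C}}$ across $A\,|\,A_{1}\cdots A_{N}C$ to be at least $\log d$; since $\dim A=d$ also provides the matching upper bound, the bipartite entanglement saturates and $\rho^{*}_{\text{AA}_{1}\cdots\text{A}_{N}\text{C}}$ is maximally entangled across $A\,|\,A_{1}\cdots A_{N}C$, as claimed.

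Since Lemma~\ref{l11} already does the heavy lifting, I do not foresee any serious obstacle. The only subtle point is that one should regard $\mathcal{K}^{(N)}$ as a local quantum channel acting on one side of the cut, rather than as a full LOCC protocol, when invoking monotonicity; this is harmless because monotonicity under local quantum operations is even stronger than monotonicity under LOCC, and exactly matches the step used for $d=2$ in the proof of Corollary~\ref{coro1}.
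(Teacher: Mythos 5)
Your proposal is correct and follows essentially the same route as the paper: both invoke Lemma \ref{l11} to get $\log d$ ebits across $A|B_{1}\cdots B_{N}C$ in the output, then use monotonicity of entanglement under the local channel $\mathbb{I}_{A}\otimes\map K^{(N)}$ acting on the $A_{1}\cdots A_{N}C$ side to push this back to the input state. Your explicit remark that monotonicity under a one-sided CPTP map suffices (rather than full LOCC) is a welcome clarification of a step the paper leaves implicit.
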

\begin{proof}
Similar to Corollary \ref{coro1}, the proof follows from the fact that operation performed on the $A_{1}A_{2}\cdots A_{N}C$ subsystem of the state $\rho^{*}_{\text{AA}_{1}\text{A}_{2}\cdots\text{A}_{N}C}$ can not increase the entanglement content in the $A|B_{1}B_{2}\cdots B_{N}C$ bipartition of the final state. Now, thanks to Eq. (\ref{e41}) of Lemma \ref{l11}, the final state with $\log d$-bit of entanglement completes the proof. 
\end{proof}
At this end, we will now prove the \textit{only if} part of Theorem. 3, both for the controlled-order and controlled-choice configuration, separately.

\medskip

\textbf{Proof for the controlled-order configuration.}

\medskip

\begin{lemma}\label{l12}
To establish a perfect GHZ state among Alice and all $N$ Bobs after the controlled-order configuration of $d$ orthogonal pin-maps in each of $N$ transmission lines, the state shared between Alice and Charlie must be maximally entangled.
\end{lemma}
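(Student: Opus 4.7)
The plan is to mirror the structure of Lemma \ref{l8}, lifting its two key ingredients (vanishing of the separable Kraus terms, and the maximum entanglement requirement from Corollary \ref{coro2}) from a single receiver to $N$ receivers, using the channel decomposition in Eq.~(\ref{e36}).

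First, I would invoke Corollary \ref{coro2} on the state $\rho^{*}_{\text{AA}_1\cdots \text{A}_N\text{C}}$ that Alice actually sends through the network. This tells us that this state is maximally entangled in the bipartition $\text{A}\,|\,\text{A}_1\cdots \text{A}_N\text{C}$. Next, Lemma \ref{l11} constrains the output of the controlled channel to be the GHZ-type state in Eq.~(\ref{e41}). Comparing this with the decomposition in Eq.~(\ref{e36}), the only way for the output to be pure and supported on $\Span\{|j\>^{\otimes (N+1)}\}_{j=0}^{d-1}$ is that every term in the second summation of Eq.~(\ref{e36}) vanishes when applied to $\rho^*$. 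This gives, for every $j$, every $k\in\{1,\ldots,N\}$, every choice of $\{l_i\}_{i=1}^k$ with $l_i\ne j$, and every configuration index $p$,
\begin{equation}
\bigl(I_{\text{A}}\otimes |x_k^{\{l_i\}}(p)\>\<x_k^{\{l_i\}}(p)|\otimes |j\>\<j|_{\text{C}}\bigr)\,\rho^{*}_{\text{AA}_1\cdots\text{A}_N\text{C}}=0,
\end{equation}
so the marginal $\sigma^{*}_{\text{A}_1\cdots\text{A}_N\text{C}}$ has support inside the subspace spanned by $\{|j\>^{\otimes (N+1)}\}_{j=0}^{d-1}$.

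Combining this support restriction with the maximal entanglement across the $\text{A}\,|\,\text{A}_1\cdots\text{A}_N\text{C}$ cut forces $\sigma^{*}_{\text{A}_1\cdots\text{A}_N\text{C}}=\frac{1}{d}\sum_{j=0}^{d-1}|j\>\<j|^{\otimes (N+1)}$, which in turn implies that $\rho^{*}$ is pure, of the form
\begin{equation}
|\psi^{*}\>_{\text{AA}_1\cdots \text{A}_N\text{C}} \;=\; \frac{1}{\sqrt{d}}\sum_{j=0}^{d-1} e^{i\phi_j}\,|\psi_j\>_{\text{A}}\otimes |j\>^{\otimes N}_{\text{A}_1\cdots \text{A}_N}\otimes |j\>_{\text{C}},
\end{equation}
for some orthonormal basis $\{|\psi_j\>\}$ of Alice's system.

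Finally, I would exhibit a local operation in Alice's laboratory that converts $|\psi^{*}\>$ into a maximally entangled state between $\text{A}$ and $\text{C}$. Concretely, Alice applies any unitary $U_{\text{AA}_1\cdots \text{A}_N}$ mapping $|\psi_j\>_{\text{A}}\otimes|j\>^{\otimes N}\mapsto |\psi_j\>_{\text{A}}\otimes |0\>^{\otimes N}$ for $j=0,\ldots,d-1$ (and extended arbitrarily on the orthogonal complement). The resulting state factorizes as $\bigl(\frac{1}{\sqrt d}\sum_j e^{i\phi_j}|\psi_j\>_{\text{A}}|j\>_{\text{C}}\bigr)\otimes |0\>^{\otimes N}$, whose $\text{AC}$-marginal is a $d$-dimensional maximally entangled state. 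Since this whole chain--adding ancillas, the operation $\Lambda_{\text{AA}_1\cdots\text{A}_N}$, and the subsequent $U_{\text{AA}_1\cdots\text{A}_N}$--consists only of local operations on Alice's side starting from $\rho_{\text{AC}}$, and LOCC cannot increase entanglement across the $\text{A}|\text{C}$ cut, $\rho_{\text{AC}}$ itself must already have been maximally entangled. I expect the main subtlety to be the second step: carefully tracking, from the multi-receiver Kraus form, that the vanishing of \emph{all} non-$P_0^N$ contributions is really forced by the combination of purity and maximal entanglement in the $\text{A}|\text{A}_1\cdots\text{A}_N\text{C}$ cut, since the Kraus label space in Eq.~(\ref{eq:Krausdisgen}) grows with $N$ and one must argue the support condition uniformly in $k$ and $p$.
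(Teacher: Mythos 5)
Your proposal is correct and follows essentially the same route as the paper's proof: invoke Corollary \ref{coro2}, force the separable terms of Eq.~(\ref{e36}) to vanish (by purity/maximal entanglement of the output), deduce the pure form $\frac{1}{\sqrt d}\sum_j e^{i\phi_j}|\psi_j\>_{\rm A}|j\>^{\otimes N}|j\>_{\rm C}$, and conclude by LOCC monotonicity across the Alice--Charlie cut. The only cosmetic difference is the last step, where the paper performs Fourier measurements on $A_1,\ldots,A_N$ while you apply a decoupling unitary $|\psi_j\>|j\>^{\otimes N}\mapsto|\psi_j\>|0\>^{\otimes N}$; both are valid local operations in Alice's laboratory and yield the same conclusion.
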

\begin{proof}
The proof follows as a generalization of Lemma \ref{l8}.\par
Note that, we can conclude from Corollary \ref{coro2} that the state $\rho^{*}_{\text{AA}_{1}\text{A}_{2}\cdots\text{A}_{N}\text{C}}$ is maximally entangled in $A|B_{1}B_{2}\cdots B_{N}C$ bipartition. Which can only be preserved after the controlled-order operation only if the separable contribution for each transmission lines from Eq. (\ref{e36}) vanish and hence

$$\sum_{l_1,l_2,\cdots,l_k\neq j}(\mathbb{I}_{\text{A}}\otimes\bra{x_{k}^{\{l_i\}}(p)}\bra{j})\rho_{\text{AA}_{1}\cdots\text{A}_N\text{c}}(\mathbb{I}_{\text{A}}\otimes\ket{x_{k}^{\{l_i\}}(p)}\ket{j})=0$$
for every choice of $^NC_{k}$ possibilities, for each $k\in\{1, 2, \cdots, N\}$ and for every $j\in\{0, 1, \cdots, d-1\}$. This, in turn, claims that the state $\sigma^{*}_{\text{A}_{1}\cdots\text{A}_{N}\text{C}}:=\text{Tr}_{\text{A}}[\rho^{*}_{\text{AA}_{1}\cdots\text{A}_{N}\text{C}}]$ is orthogonal to the subspace $\mathcal{S}_{\perp}$, where $\mathcal{S}_{\perp}$ is the subspace orthogonal to $\ket{j}^{\otimes N}\otimes\ket{j}, ~\forall j\in \{0, 1, \cdots, d-1\}$.\par
This, along with the result of Corollary \ref{coro2}, demands that the state $\rho^{*}_{\text{AA}_{1}\cdots\text{A}_{N}\text{C}}$ can be expressed as,
$$\ket{\psi^{*}}_{\text{AA}_1\cdots\text{A}_N\text{C}}=\frac{1}{\sqrt{d}}\sum_{j=0}^{d-1}e^{i\phi_{j}}\ket{\psi_{j}}_{\text{A}}\otimes\ket{j}_{\text{A}_1\cdots\text{A}_N}^{\otimes N}\otimes\ket{j}_ \text{C},~\text{where } \<\psi_{k}\ket{\psi_{l}}=0, ~\forall k\neq l.$$
Now, performing consecutive Fourier measurement on all $N$ qudits $\{A_1,A_2,\cdots,A_N\}$ and performing local operations Alice can generate maximal entanglement between Charlie and herself, which once again confirms that the initial state shared between Alice and Charlie must have $\log d$-bits of entanglement.
\end{proof}

\medskip

\textbf{Proof for the controlled-choice configuration.}\par
Let us first consider the generalization of Eq. (\ref{e31}), for $N$ noisy transmissions lines, each with $d$ orthogonal pin-maps, in a controlled-choice configuration with the help of a perfect qudit side-channel. This reads, 
\begin{eqnarray}\label{e42}
    \nonumber\mathbb{I}_{\text{A}}\otimes\mathcal{T}^{(N)}(\widetilde{\mathcal{E}}^{\otimes N}_{0},\widetilde{\mathcal{E}}^{\otimes N}_{1},\cdots,\widetilde{\mathcal{E}}^{(N)}_{d-1})[\rho_{\text{AA}_1\text{A}_2\cdots\text{A}_N\text{C}}]=\mathbb{I}_{\text{A}}\otimes T^{(N)}_{\textbf{0},\textbf{0},\cdots,\textbf{0}}~\rho_{\text{AA}_1\text{A}_2\cdots\text{A}_N\text{C}}~\mathbb{I}_{\text{A}}\otimes T^{(N)\dagger}_{\textbf{0},\textbf{0},\cdots,\textbf{0}}+\\
    \sum_{k=1}^{N}\sum_{p_k=1}^{^NC_k}\sum_{j=0}^{d-1}\text{Tr}_{\text{A}_1\cdots\text{A}_N\text{C}}[\mathbb{I}_{\text{A}}\bigotimes_{n\in p_k}(\mathbb{I}_{A_n}-\ketbra{v_j^n}{v_j^n})\bigotimes_{m\notin p_k}\ketbra{j}{v_j^m}\otimes\ketbra{j}{j}_{\text{C}}~\rho_{\text{AA}_1\text{A}_2\cdots\text{A}_N\text{C}}]\ketbra{j}{j}^{\otimes N}_{\text{B}_1\cdots\text{B}_N}\otimes\ketbra{j}{j}_{\text{C}}
\end{eqnarray}
where, $p_k$ is all possible choice of $k\in\{1,2,\cdots,N\}$ transmissions lines among $N$. We will now conclude the result finally with the following Lemma.\par
 \begin{lemma}
 It is possible to obtain the state $\ket{\psi^{(N)}}_{\text{AB}_1\text{B}_2\cdots\text{B}_N\text{C}}$ of Eq. (\ref{e41}) after controlled-choice of $N$ transmission lines, with $d$ orthogonal pin-maps in each, only if Alice and Charlie shares a maximally entangled state initially.
 \end{lemma}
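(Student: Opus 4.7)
The plan is to mimic the strategy used in Lemma \ref{l9} for a single receiver, but to extend it to the multi-receiver setting where the constraints from the controlled-choice channel become combinatorially richer. First, I would invoke Corollary \ref{coro2}, which guarantees that the state $\rho^{*}_{\text{AA}_{1}\cdots\text{A}_{N}\text{C}}$ produced by Alice in her laboratory must be maximally entangled across the $A \,|\, A_{1}\cdots A_{N}C$ bipartition. Since the controlled-choice channel $\mathbb{I}_{\text{A}}\otimes\mathcal{T}^{(N)}$ is a local operation with respect to this bipartition, the only way to obtain an output with full $\log d$ ebits of entanglement in the $A\,|\, B_{1}\cdots B_{N}C$ cut is for the separable terms of Eq.~(\ref{e42}) to vanish identically on $\rho^{*}$.

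Next, I would translate this vanishing condition into a structural constraint on $\rho^{*}$. For every $k\in\{1,\dots,N\}$, every subset $p_k$ of $k$ transmission lines, and every $j\in\{0,\dots,d-1\}$, the second term of Eq.~(\ref{e42}) contributes a non-negative operator, and each such contribution must vanish. Taking the $k=N$ subset first forces $\rho^{*}$ to be supported on the subspace spanned by vectors of the form $|\psi\>_{\text{A}}\otimes|v^{1}_{j}\>_{\text{A}_{1}}\otimes\cdots\otimes|v^{N}_{j}\>_{\text{A}_{N}}\otimes|j\>_{\text{C}}$, so that only the $T^{(N)}_{\mathbf{0},\dots,\mathbf{0}}$ term survives. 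Combined with the maximal entanglement in the $A\,|\, A_{1}\cdots A_{N}C$ cut required by Corollary \ref{coro2}, this also implies that $\|\,|v^{n}_{j}\>\,\|=1$ for every $n$ and every $j$ (otherwise the reduced $A$-marginal would fail to be $I/d$).

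Combining these two facts, $\rho^{*}_{\text{AA}_{1}\cdots\text{A}_{N}\text{C}}$ must be a pure state of the Schmidt form
\begin{equation}
|\phi^{*}\>_{\text{AA}_{1}\cdots\text{A}_{N}\text{C}}=\frac{1}{\sqrt{d}}\sum_{j=0}^{d-1}e^{i\theta_{j}}\,|\psi_{j}\>_{\text{A}}\otimes|v^{1}_{j}\>_{\text{A}_{1}}\otimes\cdots\otimes|v^{N}_{j}\>_{\text{A}_{N}}\otimes|j\>_{\text{C}},
\end{equation}
with $\{|\psi_{j}\>\}$ an orthonormal basis for system $A$. Applying a joint unitary on $\text{AA}_{1}\cdots\text{A}_{N}$ inside Alice's laboratory, she can rotate the (generally non-orthogonal) product vectors $|v^{1}_{j}\>\cdots|v^{N}_{j}\>$ into an orthonormal set $|w_{j}\>$ supported on a single $d$-dimensional subsystem, producing the state $\frac{1}{\sqrt{d}}\sum_{j}e^{i\theta_{j}}|\psi_{j}\>_{\text{A}}|w_{j}\>_{\text{A}'}|j\>_{\text{C}}$. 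A suitable local measurement on the ancillary $\text{A}_{1}\cdots\text{A}_{N}$ registers then projects Alice and Charlie onto a $d$-dimensional maximally entangled state. Since every step Alice performs is local with respect to the $A\,|\,C$ cut, the initial $\rho_{\text{AC}}$ must already have contained $\log d$ ebits, i.e.\ be maximally entangled.

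The main obstacle I anticipate is the careful bookkeeping needed to show that all separable terms in Eq.~(\ref{e42}) vanish simultaneously: in contrast to the single-receiver case one has $\sum_{k=1}^{N}\binom{N}{k}$ different contributions indexed by subsets of transmission lines, and one must argue that they cannot cancel against each other (they are manifestly non-negative) so that each enforces its own constraint. Once this is established, combining the constraints to pin down the support of $\rho^{*}$ and then to conclude $\|\,|v^{n}_{j}\>\,\|=1$ is the technical core; the remaining rotation-and-measurement argument is a routine generalization of Lemma \ref{l9}. \qed
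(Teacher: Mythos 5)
Your proposal is correct and follows essentially the same route as the paper's proof: invoke Corollary \ref{coro2}, force the separable terms of Eq.~(\ref{e42}) to vanish, deduce the pure Schmidt form of $\rho^{*}_{\text{AA}_1\cdots\text{A}_N\text{C}}$ with unit-norm $|v_{j,n}\>$, and then use a local unitary plus measurements on the ancillary registers to show Alice could locally distill maximal $A$--$C$ entanglement. Your explicit remark that the $\sum_k\binom{N}{k}$ separable contributions are individually non-negative and hence cannot cancel is a useful clarification of a step the paper leaves implicit, but it does not change the argument.
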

\begin{proof}
To preserve the maximal entanglement in the $A|A_{1}A_{2}\cdots A_{N}C$ bipartition of the state $\rho^{*}_{\text{AA}_1\text{A}_2\cdots\text{A}_N\text{C}}$, every possible separble terms in Eq. (\ref{e42}) must vanish. In a similar argument of that of Lemma \ref{l9}, the condition simply implies
$$T^{(N)}_{\textbf{0},\textbf{0},\cdots,\textbf{0}}=\mathbb{I}_{\text{A}}\otimes\sum_{j=0}^{d-1}\bigotimes_{n=1}^{N}\ketbra{j}{v_{j,n}}\otimes\ketbra{j}{j}_{\text{C}}$$
where, for every $j\in\{0, 1, \cdots, d-1\}$ and for every $n\in\{1, 2, \cdots, N\}, ~ ||v_{j,n}||=1$. This, along with Corollary \ref{coro2}, demands a that the state $\rho^{*}_{\text{AA}_1\text{A}_2\cdots\text{A}_N\text{C}}$ is pure and can be written in the form
$$\ket{\phi^{(N)*}}_{\text{AA}_1\cdots\text{A}_N\text{C}}=\frac{1}{\sqrt{d}}\sum_{j=0}^{d-1}e^{i\theta_j}\ket{\psi_{j}}_{\text{A}}\bigotimes_{k=1}^{N}\ket{v_{j,n}}_{\text{A}_k}\otimes\ket{j}_{\text{C}}$$
where, $\{\ket{\psi_{j}}\}$ is a orthonormal basis for the subsystem $A$ and the state can be identified as the generalization of Eq. (\ref{e32}). 

Now, Alice is able to apply the joint unitary $U_{\text{AA}_1\cdots\text{A}_N}$ on the $AA_1A_2\cdots A_N$ subsystem which takes 
$$U_{\text{AA}_1\cdots\text{A}_N}\ket{\psi_j}_{\text{A}}\bigotimes_{k=1}^{N}\ket{v_{j,n}}_{\text{A}_k}\to\ket{\psi_j}_{\text{A}}\bigotimes_{k=1}^{N}\ket{w_{j,n}}_{\text{A}_k}$$
and hence
$$(U_{\text{AA}_1\cdots\text{A}_N}\otimes\mathbb{I}_{\text{C}})\ket{\phi^{(N)*}}_{\text{AA}_1\cdots\text{A}_N\text{C}}\to\ket{\xi^{(N)*}}_{\text{AA}_1\cdots\text{A}_N\text{C}}:=\frac{1}{\sqrt{d}}\sum_{j=0}^{d-1}e^{i\theta_j}\ket{\psi_{j}}_{\text{A}}\bigotimes_{k=1}^{N}\ket{w_{j,n}}_{\text{A}_k}\otimes\ket{j}_{\text{C}}$$
where $\{\ket{w_{j,n}}\}_{j=0}^{d-1}$ is an orthogonal basis for every $n\in\{1,2,\cdots,N\}$.

Now just performing Fourier basis measurement on every $A_k,~k\in\{1,2,\cdots,N\}$ and performing suitably chosen unitary on $A$ subsystem, Alice can establish $\log d$-bit of entanglement with Charlie. This, in turn, assures that the state initially shared between Alice and Charlie should contain $\log d$-bit of entanglement, i.e., a maximally entangled two-qudit state. 
\end{proof}
\end{widetext}
\end{document}